\setlist{noitemsep}
\def\dagfootnote{\xdef\@thefnmark{$\dagger$}\@footnotetext} \makeatother
\DeclareMathOperator*{\E}{\mathbb{E}}
\DeclareMathOperator{\conv}{conv}
\DeclareMathOperator{\aff}{aff}
\DeclareMathOperator{\im}{im}
\DeclareMathOperator{\ri}{ri}
\DeclareMathOperator{\dom}{dom}
\DeclareMathOperator{\linspan}{span}
\DeclareMathOperator*{\argmin}{argmin}
\DeclareMathOperator{\interior}{int}
\newcommand{\M}{\mathcal{M}}
\newcommand{\tM}{\tilde{\M}}
\newcommand{\inprod}{\cdot}
\newcommand{\wo}{\backslash}
\newcommand{\KL}{\text{KL}}
\newcommand{\hq}{\hat{q}}
\newcommand{\bq}{\bar{q}}
\newcommand{\hQ}{\hat{Q}}
\newcommand{\hnu}{\hat{\nu}}
\newcommand{\B}{\mathcal{B}}
\newcommand{\eps}{\varepsilon}
\newcommand{\ind}{I}
\newcommand{\setS}{\mathcal{S}}
\newcommand{\setB}{\mathcal{B}}
\newcommand{\setC}{\mathcal{C}}
\newcommand{\setK}{\mathcal{K}}
\newcommand{\abs}[1]{\lvert#1\rvert}
\newcommand{\tnu}{\tilde{\nu}}
\newcommand{\talpha}{\tilde{\alpha}}
\newcommand{\bnu}{\bar{\nu}}
\newcommand{\setL}{\mathcal{L}}
\newcommand{\witness}{\mathcal{K}}
\newcommand{\pinv}{p^{-1}}
\newcommand{\bB}{\bar{B}}
\newcommand{\ulambda}{\underaccent{\bar}{\lambda}}
\newcommand{\olambda}{\bar{\lambda}}
\renewcommand{\Re}{\mathbb{R}}
\newcommand{\Bracks}[1]{\left[#1\right]}
\newcommand{\Parens}[1]{\left(#1\right)}
\newcommand{\bigParens}[1]{\bigl(#1\bigr)}
\newcommand{\BigParens}[1]{\Bigl(#1\Bigr)}
\newcommand{\set}[1]{\{#1\}}
\newcommand{\norm}[1]{\lVert#1\rVert}
\newcommand{\Set}[1]{\left\{#1\right\}}
\newcommand{\Eq}[1]{Eq.~\eqref{eq:#1}}
\newcommand{\Thm}[1]{Theorem~\ref{thm:#1}}
\newcommand{\Fig}[1]{Figure~\ref{fig:#1}}
\newcommand{\Lem}[1]{Lemma~\ref{lem:#1}}
\newcommand{\Sec}[1]{Section~\ref{sec:#1}}
\newcommand{\App}[1]{Appendix~\ref{app:#1}}
\newcommand{\Cor}[1]{Corollary~\ref{cor:#1}}
\newcommand{\Prop}[1]{Proposition~\ref{prop:#1}}
\newcommand{\Def}[1]{Definition~\ref{def:#1}}
\numberwithin{equation}{section}
\theoremstyle{plain}
\newtheorem{theorem}{Theorem}[section]
\newtheorem{lemma}[theorem]{Lemma}
\newtheorem{corollary}[theorem]{Corollary}
\newtheorem{proposition}[theorem]{Proposition}
\theoremstyle{remark}
\theoremstyle{definition}
\newtheorem{definition}[theorem]{Definition}
\newtheorem{example}[theorem]{Example}
\renewcommand{\arraycolsep}{\tabcolsep}
\title{Budget Constraints in Prediction Markets}
\author{Nikhil Devanur
\\{Microsoft Research}
\And
Miroslav Dud\'ik
\\{Microsoft Research}
\And
Zhiyi Huang
\\{University of Hong Kong}
\And
David M. Pennock
\\{Microsoft Research}}
\begin{document}

\maketitle

\begin{abstract}
We give a detailed characterization of optimal trades under budget constraints
in a prediction market with a cost-function-based automated market maker.
We study how the budget constraints of individual traders affect their ability to impact the market price.
As a concrete application of our characterization, we give sufficient conditions
for a property we call \emph{budget additivity}: two traders with budgets $B$ and $B'$ and the same beliefs would have a combined impact equal to a single trader with budget $B+B'$.
That way, even if a single trader cannot move the market much, a crowd of like-minded traders can have the same desired effect.
When the set of payoff vectors associated with outcomes, with coordinates corresponding to securities, is affinely independent,
we obtain that a generalization of the heavily-used logarithmic market scoring rule is budget additive, but the quadratic market scoring rule is not.
Our results may be used both descriptively, to understand if a particular market maker is affected by budget constraints or not, and prescriptively, as a recipe to construct markets.
\end{abstract}


\section{INTRODUCTION}
\label{sec:intro}

A prediction market is a central clearinghouse for people with differing opinions about the likelihood of an event---say Hillary Clinton to win the 2016 U.S. Presidential election---to trade monetary stakes in the outcome with one another. At equilibrium, the price to buy a contract paying \$1 if Clinton wins reflects a consensus of sorts on the probability of the event. At that price, and given the wagers already placed, no agent is willing to push the price further up or down. Prediction markets have a good track record of forecast accuracy in many domains \citep{GRWP10,WZ04}.

The design of {\em combinatorial} markets spanning multiple logically-related events raises many interesting questions. What information can be elicited---the full probability distribution, or specific properties of the distribution? What securities can the market allow traders to buy and sell?
How can the market support and ensure a variety of trades?
For example, in addition to the likelihood of Clinton winning the election, we may want to elicit information about the distribution of her electoral votes.\footnote{
A U.S. Presidential candidate receives a number of electoral votes between
0 and 538. The candidate who receives a plurality of electoral votes wins the election.}
If we create one security for each possible outcome between 0 and 538,
each paying \$1 iff Clinton gets exactly that many electoral votes, the market is called \emph{complete}, allowing us to elicit a full probability distribution. Alternatively, if we create just two securities, one paying out \$$x$
if Clinton wins $x$ electoral votes, and the other paying out \$$x^2$, we cannot elicit a full distribution, but we can still elicit the mean and variance of the number of electoral votes.

When agents are constrained in how much they can trade only by risk aversion, prediction market prices can be interpreted as a weighted average of traders' beliefs \citep{WZ06,BLP12}, a natural reflection of the ``wisdom of the crowd'' with a good empirical track record \citep{Jacobs95} and theoretical support \citep{BLP12}. However, when agents are budget constrained, discontinuities and idiosyncratic results can arise \citep{Manski06,Eisenberg59} that call into question whether the equilibrium prices can be trusted to reflect any kind of useful aggregation.

We consider prediction markets with an automated market maker \citep{AbernethyCV11,Hanson07,ChenP07} that maintains standing offers to trade every security at some price. Unlike a peer-to-peer exchange, all transactions route through the market maker.
The common market makers have bounded loss and are (myopically) incentive compatible:
the best (immediate) strategy is for a trader to move the market prices of all securities to equal his own belief.
The design of such an automated market maker boils down to choosing a convex
\emph{cost function}~\citep{AbernethyCV11}.
This amount of design freedom presents an opportunity to seek cost functions
that satisfy additional desiderata such as computational tractability~\citep{AbernethyCV11,DudikLaPe12}.

Most of the literature assumes either risk-neutral or risk-averse traders with unbounded budgets.
In this paper, we consider how agents with budget constraints trade in such markets, a practical reality in almost all prediction markets denominated in both real and virtual currencies.
Our results help with a systematic study
of the market's \emph{liquidity parameter},
or the parameter controlling the sensitivity of prices to trading volume. Setting the liquidity is a nearly universal practical concern and, at present, is more (black) art than science.
We adopt the notion of the ``natural budget constraint''
introduced by \citet{FortnowS12}: the agent is allowed only those trades for which
the maximum loss for any possible outcome does not exceed the budget.

The main contribution of this paper is  a rich, geometric characterization of the impact of budget constraints.
Price vectors, outcomes and trader beliefs are embedded in the space of the same dimension as the number of securities.
Outcome vectors enumerate security payoffs; belief vectors enumerate the traders' expectations of payoffs.
We consider, for a fixed belief, the locus of the resulting price vectors of an optimal trade as a function of the budget.
We show that the price vector moves in the convex hull of the belief and the set of tight outcomes, in a direction that is perpendicular to the set of tight outcomes.
We also introduce the concept of {\em budget additivity}:
two agents with budgets $B$ and $B'$ and the same beliefs
     have the same power to move the prices as a single agent with the same belief and budget $B+B'$.
     An absence of budget additivity points to an inefficiency in incorporating information from the traders.
We show that budget additivity is a non-trivial property by giving examples of market makers that do not satisfy budget additivity.
We give a set of sufficient conditions on the market maker and the set of securities offered which guarantee budget additivity.
Further, for two of the most commonly used market makers (the quadratic and logarithmic market scoring rules),
we show sufficient conditions on the set of securities that guarantee budget additivity.

Of greatest practical interest is the application of our results to markets consisting of several independent questions, with
each question priced according to a separate logarithmic market scoring rule. This setup constitutes a de facto industry standard,
and the companies that use (or used) it include Inkling Markets,\footnote{\url{inklingmarkets.com}} Consensus Point,\footnote{\url{www.consensuspoint.com}}
 Microsoft and Yahoo!~\citep{OthmanPeReSa10}. Our Theorems~\ref{thm:log:partition}
and~\ref{thm:direct:sum} show that these markets are budget additive.

Previously, \citet{FortnowS12} considered a different question: do budget-constrained bidders always move the market prices
in the direction of their beliefs? They showed that the answer to this is no: there always exist market prices, beliefs and budgets
such that the direction of price movement is not towards the belief. We give a richer characterization of how the market prices move in the
presence of budget constraints, by charting the path the prices take with increasing budgets.
The impossibility result of \citet{FortnowS12} can be easily derived from our characterization
(see \App{FortnowS12}).\footnote{%
The full version of this paper on arXiv includes the appendix.}

A designer of a prediction market has a lot of freedom but little guidance,
and our results can be used both descriptively and prescriptively.
As a descriptive tool, our results enable us to analyze commonly used market makers and understand if budget constraints
hamper information aggregation in these markets. As a prescriptive tool, our results can be used to construct markets that are budget additive.
In particular, we speculate that budget additivity simplifies the choice
of the liquidity parameter in the markets, because it allows considering trader budgets in aggregate.


\begin{figure*}
\vspace{-3ex}
\centering
~\hfill
\includegraphics[width=0.32\textwidth,trim=2pt 2pt 2pt 2pt,clip]{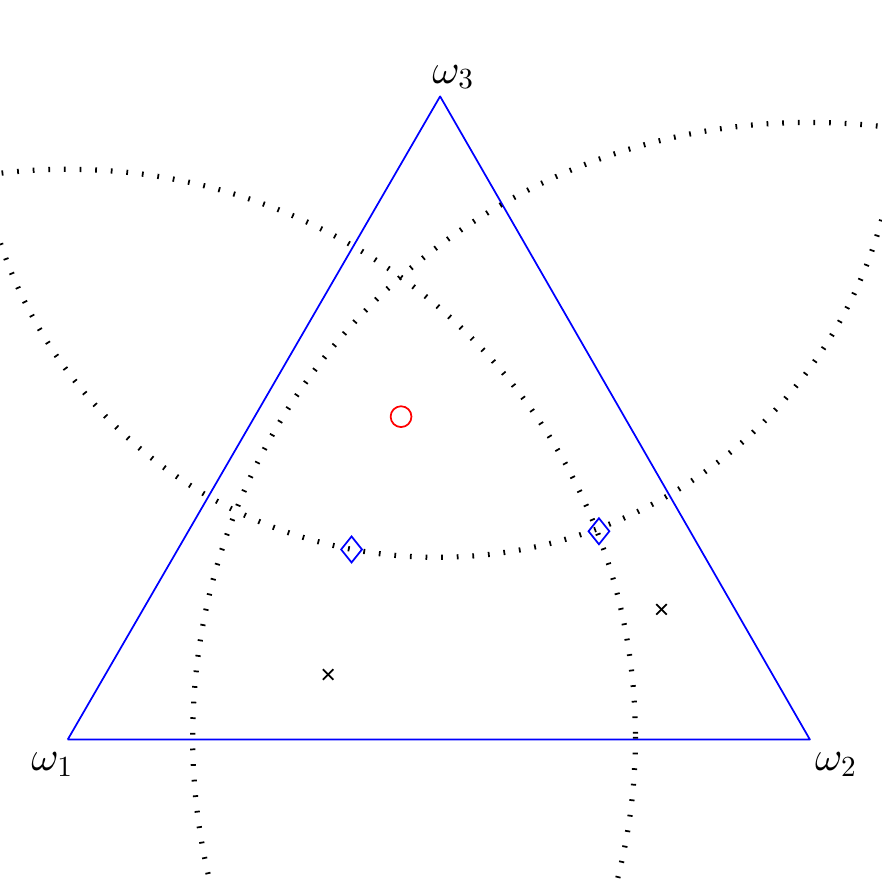}\hfill
\includegraphics[width=0.32\textwidth,trim=2pt 2pt 2pt 2pt,clip]{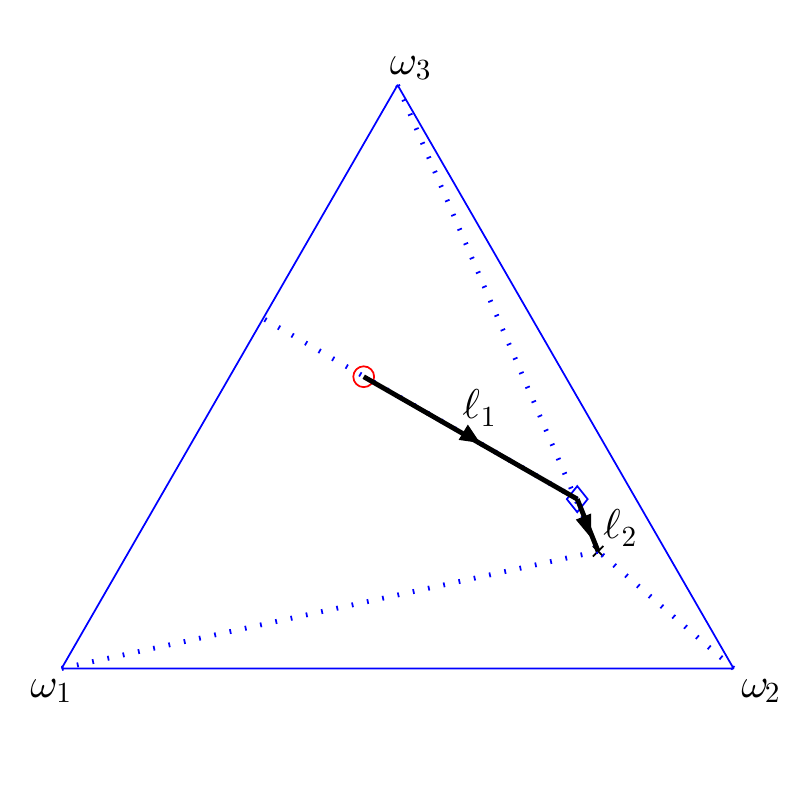}\hfill~
\vspace{-2ex}
\caption{\emph{Left:} $\circ$ ---current state, $\times$ ---belief, $\diamond$ ---optimal
action for a given belief and budget. Three circles bound the allowed final states for budget $0.1$. We plot optimal actions for two different beliefs.
\emph{Right:} A path from the initial state to the belief, consisting
of optimal actions for increasing budgets.}
\label{fig:budget}
\end{figure*}

\textbf{Proof overview and techniques.}
%
Our analysis borrows heavily from techniques in convex analysis and builds on the
 notion of Bregman divergence. We use the special case of Euclidean distance (corresponding to a quadratic market scoring rule) to form our geometric intuition which we then extend to arbitrary Bregman divergences.
For the sake of an example, consider a complete market over a finite set of outcomes, where the market prices lie in a simplex, exactly coinciding with the set of probability distributions over outcomes.
Every possible outcome imposes a constraint on the set of prices to which a trader can move the market, because the
trader is not allowed to exceed the budget if that outcome occurs.
The prices satisfying this constraint form a ball with the outcome at its center.
The set of feasible prices to which the trader can move the market is therefore the intersection of these balls (see \Fig{budget}).

The key structural result we obtain is the chart of the price movement. Suppose that there is an infinite sequence of agents with infinitesimally small budgets all with the same belief. What is the path along which the prices move from some initial values? This is determined by the agents' belief and the set of budget constraints that are tight at any point, corresponding to the highest risk outcomes (outcomes with the highest potential loss). We show that the price vector can always be written as a convex combination of these highest risk outcomes and the agents' belief.
Further,  the market prices move in a direction that is perpendicular to the affine space of these outcomes.

The agents' belief partitions the simplex interior into regions, where each region is the interior of the convex hull of the agent belief and a particular subset of outcomes. For a region that is full-dimensional, every interior point can be uniquely written as a convex combination of the agent belief and all except one outcome. Assume that the current price vector lies in this region. In the anticipation of the further development, we call this outcome \emph{profitable} and others \emph{risky}. Motivated by the characterization above, we move perpendicular to the risky outcomes in the direction towards the agents' belief. As a result, we increase the risk of risky outcomes (equally for all outcomes), while getting closer to the one profitable outcome (and hence increasing its profit). The characterization then guarantees that the prices along this path are indeed those chosen by traders at increasing budgets, because the risky outcomes yield tight constraints.

We would like the same to be true for the lower dimensional regions as well; that is, for the set of tight constraints
to be exactly the corresponding set of outcomes defining the convex hull.
In fact, this property is sufficient to guarantee budget additivity.
The markets for which
the tight constraints are exactly the minimal set of outcomes that define the region  the price lies in are  budget additive. (We conjecture that the converse holds as well.)
The entire path is then as follows: w.l.o.g.\ you start at a full-dimensional region, move along the perpendicular until you hit the boundary of the region and you are in a lower-dimensional region, move along the perpendicular in this lower-dimensional region, and so on until you reach the belief
(see \Fig{budget}).
The set of tight constraints is monotonically decreasing. We show that such markets are characterized by a certain \emph{acute angles assumption} on the set of possible outcomes. Loosely speaking, this assumption guarantees that outcomes outside the minimal set behave as the profitable outcome in the above example.

\textbf{Other related work.}
There is a rich literature on scoring rules and prediction markets.
Two of the most studied scoring rules are the quadratic scoring rule \citep{Brier} and
the logarithmic market scoring rule \citep{Hanson07}.
We consider cost-function-based prediction markets \citep{Hanson03,ChenP07}, a fully general class
under reasonable assumptions~\citep{AbernethyCV11,ChenV10}.
Their equivalence with proper scoring rules has been implicitly noted by
\citet{GneitingR07}.
%
%
Several authors have studied relationships between utility functions and price dynamics in prediction markets, drawing a parallel to online learning
\citep{ChenV10,BLP12,FrongilloDeRe12}. Our analysis touches on the problem of setting the market maker's
liquidity parameter~\citep{OthmanPeReSa10,LiWo13}, which determines how (in)sensitive prices are to trading volume. With budget additivity, the market designer can optimize liquidity according to aggregate budgets, without worrying about how budgets are partitioned among traders.

\section{PRELIMINARIES}

\textbf{Securities and payoffs.}
Consider a probability space with a finite set of outcomes $\Omega\subseteq\Re^n$.
A \emph{security} is a financial instrument whose payoff depends on the realization of an outcome in $\Omega$.
In other words, the payoff of a security is a random variable of the probability space. We consider trading with $n$ securities corresponding to $n$ coordinates
of the outcomes $\omega\in\Omega$.
A security can be traded before the realization is observed with the intention that the price of a security
serves as a prediction for the expected payoff, i.e., the expected value of the
 corresponding coordinate.

\textbf{Cost function, prices and utilities.}
An {\em automated market maker} always offers to trade securities, for the right price. In fact the price vector is the current prediction of
the market maker for the expectation of $\omega$.
A cost function based market maker is based on a differentiable convex \emph{cost function}, $C: \Re^n \to \Re$. It is a scalar function of an $n$-dimensional
vector $q\in \Re^n$ representing the {\em number of outstanding shares}\footnote{We allow trading fractions of a security. Negative values correspond to short-selling.} for our $n$ securities.
We also refer to $q$ as the {\em state}  of the market.

The vector of \emph{instantaneous prices} of the securities is simply the gradient of $C$ at $q$, denoted by
$
   p(q) \coloneqq \nabla C(q).
$
The prices of securities change continuously as the securities are traded, so it is useful to consider the cost of trading
a given quantity of securities. The cost of buying $\delta \in \Re^n$ units of securities (where a negative value corresponds to selling) is determined by the path integral
   $\int_\pi p(\bq)\inprod d\bq = C(q+\delta) - C(q)$,
where $\pi$ is any smooth curve from $q$ to $q+\delta$.

When the outcome $\omega$ is realized, the vector of $\delta$ units of securities pays off an amount of $\delta\cdot \omega$. Thus,
the realized utility of a trader whose trade $\delta$ moved the market state from $q$ to $q'=q+\delta$ is
\[
   U(q',\omega;q) \coloneqq (q'-q)\cdot\omega - C(q') + C(q)
\enspace.
\]
We make a standard assumption that the maximum achievable utility, which is also the maximum loss of the market maker, is bounded by a finite constant (in \Sec{convex}, we introduce a standard
approach to check this easily). Let $\M$ be the convex hull of the payoff vectors,
$
   \M\coloneqq\conv(\Omega)$.
It is easy to see that $\M$ contains exactly the vectors $\mu\in\Re^n$ which can be realized
as expected payoffs $\E[\omega]$ for some probability distribution over $\Omega$. For a trader
who believes that $\E[\omega]=\mu$, the expected utility takes form
\[
  U(q',\mu;q)\coloneqq\E\Bracks{U(q',\omega;q)} = (q'-q)\inprod\mu - C(q') + C(q)
\enspace.
\]

Throughout, we consider a single {\em myopic}  trader who trades as if he were the last to trade.
A key property satisfied by expected utility is \emph{path independence}: for any $q,\bq,q'\in\Re^n$,
  $U(q',\mu;\bq) + U(\bq,\mu;q) = U(q',\mu;q)$,
that is, risk-neutral traders have no incentive to split their trades. For a risk-neutral trader,
$q'\in\Re^n$ is an optimal action if and only if
    $\mu = \nabla C(q') = p(q')$
(this follows from the first-order optimality conditions). In other words, the trader is incentivized
to move the market to the prices corresponding to his belief as long as such prices exist.
In general, there may be multiple states yielding the same prices, so the inverse map $\pinv(\mu)$
returns a \emph{set}, which can be empty if no state yields the price vector $\mu$.


Commonly-used cost functions include the quadratic cost, logarithmic market-scoring rule (LMSR) and
the log-partition function. They are described in detail in \App{scoring:rules}.
The quadratic cost is defined by $C(q) = \frac{1}{2} \|q\|_2^2$ and $p(q)=q$.
Log-partition function is defined as
$C(q) = \ln (\sum_{\omega \in \Omega} e^{q \inprod \omega})$.
It subsumes LMSR as a special case for the \emph{complete market} with
the outcomes corresponding to vertices of the simplex. The
prices under log-partition cost
correspond to the expected value of $\omega$ under
the distribution $P_q(\omega)=e^{q\inprod\omega-C(q)}$
over $\Omega$, i.e., $p(q)=\E_{P_q}[\omega]$.


\textbf{Budget constraints.}
Trading in prediction markets needs an investment of capital. It is possible that an agent loses money on the trade,
in particular $U(q',\omega;q)$ could be negative for some $\omega$. One restriction on how an agent trades could be that he is unable to sustain a big loss, due to a budget constraint.
We consider the notion of {\em natural budget constraint} defined by \citet{FortnowS12} which states that
the loss of the agent is at most his budget,  for all $\omega \in \Omega$. Given a starting market state
$q_0$ and a budget of $B\ge 0$, a trader with the belief $\mu\in\M$ then solves the problem:
\newcommand{\CP}{Convex Program~\eqref{convex:program}}%
\begin{equation}
\label{convex:program}
\begin{aligned}
&
  \max_{q\in\Re^n} ~ U(q,\mu; q_0)
\\
&
  \text{\ s.t.\ }
  U(q,\omega; q_0) \ge -B\quad\forall\omega\in\Omega
\enspace.
\end{aligned}
\end{equation}
For quadratic costs, each constraint corresponds to a sphere with one of the outcomes at its center,
so the feasible region is an intersection of these spheres.
We will later see that this generalizes to an intersection of balls w.r.t.\@ a Bregman divergence for general costs.

In general, there may be multiple $q$ optimizing this objective. In the following definition we
introduce notation for various solution sets we will be analyzing.
The belief $\mu$ is fixed throughout most of the discussion, so we suppress the dependence on $\mu$.
\begin{definition}[Solution sets]
Let $\hQ(B;q_0)$ denote the set of solutions of \CP\ for a fixed initial state and budget.
Let $\hQ(q_0)=\bigcup_{B\ge 0}\hQ(B;q_0)$
denote the set of solutions of \eqref{convex:program} for a fixed initial state
across all budgets. Let $\hQ(\nu;q_0)=\pinv(\nu)\cap\hQ(q_0)$
denote the set of states $q$ that optimize \eqref{convex:program} for some budget $B$ and yield the market price vector
$\nu$.
\end{definition}
%
%
%
The next theorem shows that solutions for a fixed initial state and budget always yield the same
price vector. It is proved in \App{budget:constraints}.
\begin{theorem}
\label{thm:budget:constraints}
If $q,q'\in\hQ(B;q_0)$, then $p(q)=p(q')$.
\end{theorem}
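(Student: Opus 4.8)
The plan is to exploit convexity of the feasible set together with strict convexity of the cost function $C$ in the relevant direction. The feasible region of \CP{} is an intersection of the constraint sets $\{q : U(q,\omega;q_0)\ge -B\}$, each of which is convex (since $q\mapsto U(q,\omega;q_0) = (q-q_0)\cdot\omega - C(q) + C(q_0)$ is concave, as $C$ is convex). Hence the feasible set $F_B$ is convex. The objective $q\mapsto U(q,\mu;q_0)$ is likewise concave. So if $q,q'$ are both optimal, every point on the segment $[q,q']$ is feasible and, by concavity of the objective, also optimal. The key is then to show that the price map $p=\nabla C$ is constant along this segment.

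\textbf{Key steps.} First I would establish that $F_B$ is convex and the objective concave, so the solution set $\hQ(B;q_0)$ is convex, and in particular the midpoint $\bar q = \tfrac12(q+q')$ is optimal. Second, I would write out the objective value: since $U(q,\mu;q_0) = (q-q_0)\cdot\mu - C(q) + C(q_0)$, optimality of all three points $q, q', \bar q$ with the same value gives
\begin{equation*}
  (q-q_0)\cdot\mu - C(q) \;=\; (q'-q_0)\cdot\mu - C(q') \;=\; (\bar q - q_0)\cdot\mu - C(\bar q).
\end{equation*}
Because $\bar q\cdot\mu = \tfrac12(q\cdot\mu + q'\cdot\mu)$, subtracting forces $C(\bar q) = \tfrac12\bigl(C(q)+C(q')\bigr)$, i.e. $C$ is affine on the segment $[q,q']$. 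Third, I would invoke differentiability and convexity of $C$: a convex differentiable function that is affine on a segment has a gradient that is constant in the direction of that segment along it — concretely, the restriction $t\mapsto C(q + t(q'-q))$ is a convex affine function of $t\in[0,1]$, so its derivative $(q'-q)\cdot\nabla C(q+t(q'-q))$ is constant in $t$; combined with the fact that for a convex function $C$, $\bigl(\nabla C(a)-\nabla C(b)\bigr)\cdot(a-b)\ge 0$ with equality along a segment of affinity forcing $\nabla C$ itself to be constant there (via monotonicity of the gradient of a convex function restricted to the line), we get $\nabla C(q) = \nabla C(q')$, that is $p(q)=p(q')$.

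\textbf{Main obstacle.} The delicate point is the last step: passing from "$C$ is affine on the segment $[q,q']$" to "$\nabla C$ is constant on that segment." Affinity of the one-dimensional restriction only immediately gives that the \emph{directional} derivative along $q'-q$ is constant; one must additionally use that $\nabla C$ is a monotone operator (gradient of a convex function) to rule out the transverse components of the gradient changing. The clean way is: for any $t$, convexity gives $\bigl(\nabla C(q+t(q'-q)) - \nabla C(q)\bigr)\cdot t(q'-q)\ge 0$ and the symmetric inequality at the other endpoint; affinity makes both these inner products vanish, and then a further convexity (Cauchy–Schwarz-type) argument on $\nabla C(q+t(q'-q))-\nabla C(q)$ tested against nearby feasible directions — or simply the standard fact that a convex function affine along a line has gradient constant along that line — closes it. I would present this as a short lemma about convex functions to keep the proof self-contained, and expect that lemma to carry essentially all the technical weight; the rest is bookkeeping with path independence and the definition of $U$.
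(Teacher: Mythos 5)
Your proof is correct, but it takes a genuinely different route from the paper's. The paper argues by Fenchel duality: it rewrites the budget-constrained problem as maximizing $(q-q_0)\inprod\mu - C(q)+C(q_0)-\ind_{\B}(q)$, invokes Fenchel's duality theorem to get a dual minimizer $\hnu$, and shows that the zero duality gap forces $D(\hq,\hnu)=0$, i.e.\ $p(\hq)=\hnu$, for \emph{every} primal solution $\hq$; since $\hnu$ is fixed, all solutions share that price. Your argument stays entirely on the primal side: the feasible set is convex and the objective concave, so the solution set is convex; equality of objective values along the segment $[q,q']$ forces $C$ to be affine there; and then gradient constancy gives $\nabla C(q)=\nabla C(q')$. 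The last step is indeed where all the weight sits, and your first sketch (constant directional derivative plus monotonicity along the line) only yields $\bigl(\nabla C(q')-\nabla C(q)\bigr)\inprod(q'-q)=0$, which is not enough by itself; but the standard fact you fall back on is true and has a short self-contained proof you should include: let $m$ be the midpoint of $[q,q']$ and set $h(x)\coloneqq C(x)-C(m)-\nabla C(m)\inprod(x-m)$. Then $h$ is convex, differentiable, and nonnegative by the gradient inequality at $m$, and $h$ is affine on $[q,q']$ with $h(m)=0$, hence $h\equiv 0$ on the segment; so every point of the segment (endpoints included, using differentiability of $C$ on all of $\Re^n$) is a global minimizer of $h$, giving $\nabla C=\nabla C(m)$ there, in particular $p(q)=p(q')$. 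In terms of trade-offs, your route is more elementary and needs no conjugacy or duality machinery, while the paper's proof additionally identifies the common price vector explicitly as the dual optimizer and ties directly into the Bregman-divergence formalism used throughout the rest of the paper.
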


\textbf{Geometry of linear spaces.}
We finish this section by reviewing a few standard geometric definitions we use in next sections.
Let $X\subseteq\Re^n$.
Then $\aff(X)$ denotes the \emph{affine hull} of the set $X$ (i.e., the smallest affine space including $X$).
We write $X^\perp$ to denote the \emph{orthogonal complement} of $X$: $X^\perp\coloneqq\set{u\in\Re^n:\:u\inprod(x'-x)=0\text{ for all }x,x'\in X}$. We
use the convention $\emptyset^\perp=\Re^n$.
A set $\setK\in\Re^n$ is called a \emph{cone} if it is closed under multiplication by positive scalars.
If a cone is convex, it is also closed under addition.
Since $\Omega$ is finite, the realizable set $\M=\conv(\Omega)$ is a polytope. Its boundary can
be decomposed into \emph{faces}. More precisely,
$X\subseteq\Omega$, $X\ne\emptyset$, forms a \emph{face} of $\M$ if $X$
is the set of maximizers over $\Omega$ of some linear function.\footnote{%
  Strictly speaking, this is the definition of an \emph{exposed face}, but all faces of a polytope are exposed, so
  the distinction does not matter here. The exposed face is typically defined to be $\conv(X)$, but in this paper, it is more convenient to work with $X$ directly.}
We also view $X=\emptyset$ as a face of $\M$. With this definition, for any two faces $X$, $X'$,
also their intersection $X\cap X'$ is a face.

\section{CHARACTERIZING SOLUTION SETS}



We start with the optimality (KKT) conditions for the Convex Program (\ref{convex:program}), as characterized by the next lemma.
One of the key conditions is that the solution prices must be in the convex hull of the belief $\mu$
and all the $\omega$'s for which the budget constraints are {\em tight}.
The set of tight constraints is always a face of the polytope $\M$. We allow an empty set as a face,
which corresponds to the case when none of the constraints are tight and the solution prices coincide with $\mu$. The proof follows by analyzing KKT conditions (see \App{proof:KKT} of the full version for details).

\begin{lemma}[KKT lemma]
\label{lem:KKT:initial}
Let $q_0\in\Re^n$. Then $q\in\hQ(B;q_0)$ if and only if there exists a face $X\subseteq\Omega$ such that the following conditions hold:
\begin{enumerate}[label=\textup{(\alph*)},topsep=0pt]
\item\label{kkt:tight}
  $U(q,x;q_0)=U(q,x';q_0)$, or equivalently\\
  \hphantom{~~~}$(q-q_0)\inprod(x'-x)=0$, for all $x,x'\in X$
\item\label{kkt:nontight}
  $U(q,\omega;q_0)\ge U(q,x;q_0)$, or equivalently\\
  \hphantom{~~~}$(q-q_0)\inprod(\omega-x)\ge 0$, for all $x\in X$, $\omega\in\Omega\wo X$
\item\label{kkt:conv}
  $p(q)\in\conv(X\cup\set{\mu})$
\item\label{kkt:budget}
$B=-U(q,x;q_0)$ for all $x\in X$ if $X\ne\emptyset$, or\\
$B\ge\max_{\omega\in\Omega}[-U(q,\omega;q_0)]$ if $X=\emptyset$
\end{enumerate}
where conditions \ref{kkt:tight} and \ref{kkt:nontight} hold vacuously for $X=\emptyset$.
\end{lemma}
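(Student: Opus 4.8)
The plan is to derive these conditions by writing down the KKT system for \CP\ and then massaging it into the geometric form stated. First I would rewrite the constraints $U(q,\omega;q_0)\ge -B$ using the definition $U(q,\omega;q_0)=(q-q_0)\inprod\omega - C(q)+C(q_0)$, so each constraint becomes $(q-q_0)\inprod\omega - C(q) + C(q_0) + B \ge 0$. Since $C$ is convex and differentiable and the feasible region is convex (an intersection of balls, as noted after \eqref{convex:program}), the KKT conditions are both necessary and sufficient for optimality, provided a constraint qualification holds. Slater's condition is the natural one to check: as long as $B>0$, the point $q=q_0$ is strictly feasible (all constraints read $-B<0$... wait, $U(q_0,\omega;q_0)=0 > -B$), so Slater holds; the case $B=0$ forces $q\in\hQ(B;q_0)$ to have $U(q,\omega;q_0)=0$ for the tight outcomes and has to be handled as a limiting/degenerate case separately (or via the fact that $q_0$ itself is then feasible and the constraint set may be a single point or lower-dimensional, which is exactly the $X=\emptyset$ or ``all constraints tight'' situation).

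Next I would introduce multipliers $\lambda_\omega \ge 0$ for $\omega\in\Omega$ and write stationarity of the Lagrangian $U(q,\mu;q_0) + \sum_\omega \lambda_\omega\bigl(U(q,\omega;q_0)+B\bigr)$ with respect to $q$. Computing $\nabla_q U(q,\nu;q_0) = \nu - \nabla C(q) = \nu - p(q)$, stationarity gives
\[
  \mu - p(q) + \sum_{\omega\in\Omega}\lambda_\omega\bigl(\omega - p(q)\bigr) = 0,
\]
i.e. $p(q)\bigl(1+\sum_\omega\lambda_\omega\bigr) = \mu + \sum_\omega \lambda_\omega\omega$. Dividing by $1+\sum_\omega\lambda_\omega>0$ exhibits $p(q)$ as a convex combination of $\mu$ and those $\omega$ with $\lambda_\omega>0$. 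Defining $X\coloneqq\set{\omega\in\Omega:\lambda_\omega>0}$ (and $X=\emptyset$ when all multipliers vanish), this is precisely condition~\ref{kkt:conv}. Complementary slackness, $\lambda_\omega\bigl(U(q,\omega;q_0)+B\bigr)=0$, says that every $\omega\in X$ has a tight constraint $U(q,\omega;q_0)=-B$, while primal feasibility says $U(q,\omega;q_0)\ge -B$ for all $\omega$; together these give condition~\ref{kkt:budget} (the $X=\emptyset$ branch being just primal feasibility with no active constraint, combined with $p(q)=\mu$). Then \ref{kkt:tight} follows because all $x\in X$ share the same constraint value $-B$, so $U(q,x;q_0)=U(q,x';q_0)$, which expands to $(q-q_0)\inprod(x'-x)=0$; and \ref{kkt:nontight} follows because for $\omega\notin X$ feasibility gives $U(q,\omega;q_0)\ge -B = U(q,x;q_0)$ for any $x\in X$, which expands to $(q-q_0)\inprod(\omega-x)\ge 0$.

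The one genuinely nontrivial point — and the step I expect to be the main obstacle — is showing that the set $X$ of outcomes with active constraints can be taken to be a \emph{face} of $\M$, rather than an arbitrary subset of $\Omega$. The inequality in \ref{kkt:nontight}, $(q-q_0)\inprod\omega \ge (q-q_0)\inprod x$ for all $x\in X$, $\omega\in\Omega\wo X$, combined with the equality in \ref{kkt:tight}, says exactly that the linear functional $\omega\mapsto (q-q_0)\inprod\omega$ is constant on $X$ and no larger than that constant anywhere else on $\Omega$ — but that makes $X$ a subset of the set of \emph{minimizers}, not maximizers, of this functional. The fix is to observe that the set of minimizers $X'\supseteq X$ of $(q-q_0)\inprod(\cdot)$ over $\Omega$ is itself a face (replacing the functional by its negation matches the ``maximizers'' definition in the text), and that one may enlarge the active set from $X$ to $X'$: any $\omega\in X'\wo X$ has $U(q,\omega;q_0)=-B$ as well (since it attains the same functional value and all tight constraints share the value $-B$), so its constraint is in fact tight and we may legitimately assign it multiplier $\lambda_\omega=0$ while still regarding it as ``in $X$'' — conditions \ref{kkt:tight}, \ref{kkt:nontight} are unaffected, \ref{kkt:budget} still holds, and \ref{kkt:conv} only gets easier since $\conv(X\cup\set\mu)\subseteq\conv(X'\cup\set\mu)$. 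For the converse direction (sufficiency), given a face $X$ and a point $q$ satisfying \ref{kkt:tight}--\ref{kkt:budget}, I would reconstruct nonnegative multipliers realizing \ref{kkt:conv} as a convex combination and verify they satisfy stationarity and complementary slackness, invoking convexity to conclude $q$ is a global optimum; here one uses that \ref{kkt:budget} guarantees primal feasibility and that the coefficients in the convex combination in \ref{kkt:conv} supply the $\lambda_\omega$ up to the common positive scaling factor. I would relegate the detailed verification to \App{proof:KKT} as the excerpt indicates.
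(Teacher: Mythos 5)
Your necessity argument for $B>0$ is essentially the paper's: Lagrangian, stationarity $\mu-p(q)+\sum_\omega\lambda_\omega(\omega-p(q))=0$, complementary slackness, and your worry about $X$ being a face is resolved the same way the paper resolves it (the paper simply takes $X$ to be the set of tight constraints, which is automatically $\argmin_{x\in\Omega}(q-q_0)\inprod x$, i.e.\ a face; your ``enlarge the positive-multiplier set to the minimizer set'' lands in the same place). But there are two genuine gaps, both in exactly the places you wave off. First, in the sufficiency direction you propose to read off multipliers from the convex combination in \ref{kkt:conv} ``up to the common positive scaling factor''. This silently assumes the weight $c_\mu$ on $\mu$ is strictly positive. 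If $p(q)\in\conv(X)$ with $c_\mu=0$, stationarity cannot be realized by any rescaling, and the KKT reconstruction breaks down. The paper handles this case separately: combining \ref{kkt:tight} and \ref{kkt:budget} gives $B=-U(q,p(q);q_0)=-D(q_0,p(q))$, and non-negativity of the Bregman divergence forces $B=0$ and $p(q)=p(q_0)$; optimality of $q$ is then deduced from $U(q,\omega;q_0)\ge0$ for all $\omega$ together with the assumption that $q_0$ is \emph{arbitrage-free} (i.e.\ $q_0\in\hQ(0;q_0)$).

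Second, the $B=0$ case is not a routine ``limiting/degenerate'' matter: as the appendix of the paper notes, the lemma as stated in the main text actually \emph{fails} at $B=0$ without the arbitrage-free hypothesis (e.g.\ $n=1$, $\M=[0,1]$, $C(q)=\max\set{0,q^2/2}$, $q_0=-1$: the state $q=q_0$ satisfies \ref{kkt:tight}--\ref{kkt:budget} with $X=\Omega$ and $B=0$, yet is not optimal, since moving to $q=0$ is feasible at zero budget and yields positive expected utility). So no amount of massaging Slater's condition closes this; you need the extra hypothesis, and the paper's proof of the $B=0$ necessity direction goes through a separate lemma (its Lemma C.4): writing $p(q_0)=\sum_\omega c_\omega\omega$, one shows $c_\omega U(q,\omega;q_0)=0$ for every $\omega$ and $p(q)=p(q_0)$, and then builds the face $X=\set{\omega:(q-q_0)\inprod(\omega-s)=0}$ from an outcome $s$ with $c_s>0$. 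Your proposal contains neither the arbitrage-free ingredient nor a substitute for it, so both the $B=0$ case and the $c_\mu=0$ branch of sufficiency are unproved (and, as stated, unprovable) in your outline.
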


The condition \ref{kkt:tight} requires that
$q-q_0$ be orthogonal to the active set $X$. The set of points
satisfying conditions \ref{kkt:tight} and $\ref{kkt:conv}$
will be called the Bregman perpendicular and will be defined in the
next section. The condition \ref{kkt:nontight} is a statement about
acuteness of the angle between $q-q_0$ (the perpendicular)
and the outcomes. It will be the basis of our \emph{acute angles}
assumption. The condition \ref{kkt:budget} just states how the budget
is related to the active set $X$.

\textbf{Witness cones and minimal faces.}
We now introduce some notation to help us state reinterpretations of the conditions in \Lem{KKT:initial}. First of all, given a face $X$,  what is the set of $q$'s that satisfy conditions \ref{kkt:tight} and \ref{kkt:nontight}?
This is captured by what we call the \emph{witness cone}.
\begin{definition}
The \emph{witness cone} for a face $X\subseteq\Omega$ is defined as
$\witness(X) \coloneqq \set{u\in\Re^n:\: u\inprod(\omega-x)\ge0\text{ for all }x\in X,\omega\in\Omega}$ if $X\ne\emptyset$, and $\witness(X) \coloneqq \Re^n$ if $X = \emptyset$.
\end{definition}
The following two properties of witness cones are immediate from the definition:
\begin{itemize}[topsep=0pt]
\item \emph{Anti-monotonicity:}
if $X\subseteq X'$, then $\witness(X)\supseteq\witness(X')$.
\item \emph{Orthogonality:}
$\witness(X)\subseteq X^\perp$.
\end{itemize}
A state $q$ satisfies  conditions \ref{kkt:tight} and \ref{kkt:nontight} for a given face $X$ if and only if $q-q_0 \in \witness(X)$. Now given a state $q$, consider the set of faces that could satisfy condition \ref{kkt:conv}.
This set has a useful structure, namely that there is a unique minimal face
(proved in \App{proof:KKT} of the full version).
\begin{definition}
Given a price vector $\nu\in\M$, the \emph{minimal face} for $\nu$ is the minimal face $X$ (under inclusion)
s.t.\ $\nu\in\conv(X\cup\set{\mu})$. The minimal face for $\nu$ is denoted as $X_\nu$.
\end{definition}
With the existence of a minimal face and the anti-monotonicity of the witness sets,
it follows that if $q$ and $X$ satisfy conditions  \ref{kkt:tight},  \ref{kkt:nontight} and  \ref{kkt:conv}, then so do $q$ and $X_{p(q)}$.
Thus we obtain the following version of \Lem{KKT:initial} (proved in \App{proof:KKT} of the full version).
\begin{theorem}[Characterization of Solution Sets]
\label{thm:KKT:final}
$q\in\hQ(q_0)$ if and only if $q\in[q_0+\witness(X_{p(q)})]$.
\end{theorem}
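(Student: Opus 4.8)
The plan is to read the theorem off from the KKT lemma (\Lem{KKT:initial}) after one observation: for any face $X\subseteq\Omega$, a state $q$ satisfies conditions~\ref{kkt:tight} and~\ref{kkt:nontight} of \Lem{KKT:initial} \emph{if and only if} $q-q_0\in\witness(X)$. This is because the defining inequality $u\inprod(\omega-x)\ge0$ of $\witness(X)$, read over $\omega\in X$, is forced to equality and becomes~\ref{kkt:tight} (equivalently, $\witness(X)\subseteq X^\perp$ by orthogonality), while over $\omega\in\Omega\wo X$ it is exactly~\ref{kkt:nontight}. With this in hand, the theorem says precisely that the face appearing in \Lem{KKT:initial} can always be taken to be the minimal face $X_{p(q)}$, uniformly over the budget.

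For the forward implication I would take $q\in\hQ(q_0)$, so $q\in\hQ(B;q_0)$ for some $B\ge0$, and invoke \Lem{KKT:initial} to get a face $X$ satisfying \ref{kkt:tight}--\ref{kkt:budget}. Condition~\ref{kkt:conv} gives $p(q)\in\conv(X\cup\set{\mu})\subseteq\M$, so the minimal face $X_{p(q)}$ is defined and $X_{p(q)}\subseteq X$ by minimality. Anti-monotonicity of witness cones then gives $\witness(X)\subseteq\witness(X_{p(q)})$, and since the observation above yields $q-q_0\in\witness(X)$, we get $q\in[q_0+\witness(X_{p(q)})]$. (This is essentially the consequence of the minimal-face lemma already recorded in the text preceding the theorem.)

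For the converse I would start from $q-q_0\in\witness(X_{p(q)})$ and set $X\coloneqq X_{p(q)}$. The observation gives~\ref{kkt:tight} and~\ref{kkt:nontight} for this $X$, and~\ref{kkt:conv} holds by the definition of $X_{p(q)}$. It then remains to supply a budget $B\ge0$ fulfilling~\ref{kkt:budget}, after which \Lem{KKT:initial} yields $q\in\hQ(B;q_0)\subseteq\hQ(q_0)$. If $X=\emptyset$ then $p(q)=\mu$ and I would take $B\coloneqq\max\set{0,\ \max_{\omega\in\Omega}[-U(q,\omega;q_0)]}$, finite since $\Omega$ is finite. If $X\ne\emptyset$, condition~\ref{kkt:tight} makes $U(q,x;q_0)$ independent of $x\in X$, and~\ref{kkt:nontight} identifies this common value with $\min_{\omega\in\Omega}U(q,\omega;q_0)$; then $B\coloneqq-U(q,x;q_0)=\max_{\omega\in\Omega}[-U(q,\omega;q_0)]$ fulfils the equality in~\ref{kkt:budget}.

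The main obstacle is the one step that is not bookkeeping: checking $B\ge0$ in the last case, i.e.\ that $\min_{\omega}U(q,\omega;q_0)\le0$. I would argue this by picking a distribution $P_0$ over $\Omega$ with mean $p(q_0)$ --- which exists because $p(q_0)\in\M$ under the bounded-loss assumption --- and bounding $\min_\omega U(q,\omega;q_0)\le\E_{P_0}[U(q,\omega;q_0)]=(q-q_0)\inprod p(q_0)-C(q)+C(q_0)\le0$, the final inequality being the subgradient inequality for the convex function $C$ at $q_0$. Everything else is unwinding definitions; the structurally substantive part --- that the minimal face can always be substituted into \ref{kkt:tight}--\ref{kkt:conv} --- is already carried by the minimal-face lemma together with anti-monotonicity.
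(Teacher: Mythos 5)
Your overall route is the paper's: the forward direction via condition \ref{kkt:conv}, minimality of $X_{p(q)}$ and anti-monotonicity of witness cones is exactly the published argument, and the converse is the paper's one-line application of \Lem{KKT:initial} with $X=X_{p(q)}$, to which you add the (correct, and indeed necessary) observation that one must still exhibit a \emph{nonnegative} budget in condition \ref{kkt:budget}. The gap is in how you discharge that very step. You claim $p(q_0)\in\M$ ``under the bounded-loss assumption.'' That implication is false: bounded loss only says $C^*$ is finite on $\M$ and puts no restriction on where $p(q_0)=\nabla C(q_0)$ lies. For the quadratic cost, $C^*$ is finite everywhere, yet $p(q_0)=q_0$ can be any point of $\Re^n$; e.g.\ with $n=1$, $\Omega=\set{0,1}$, $q_0=2$, $\mu=0.5$ and $q=1$, one has $q-q_0\in\witness(X_{p(q)})$ with $X_{p(q)}=\set{1}$, but the budget forced by condition \ref{kkt:budget} is $-U(q,1;q_0)=-0.5<0$, and indeed $q\notin\hQ(q_0)$, because the unconstrained optimum $\bq$ with $p(\bq)=\mu$ is already feasible at $B=0$ and has strictly larger utility. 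So the inequality $\min_{\omega}U(q,\omega;q_0)\le 0$ --- the step you yourself flag as the only non-bookkeeping one --- genuinely fails without a further hypothesis, and with it the converse direction of the theorem as you prove it.

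What is missing is the arbitrage-free initialization of $q_0$, which the paper adds as a correction to the published statement (see \App{arb:free}): \Thm{KKT:final} is revised to assume $q_0\in\hQ(0;q_0)$, and \Prop{init:necessary} shows that this implies $p(q_0)\in\M$. Once you replace ``bounded loss'' by this hypothesis, your averaging bound $\min_{\omega}U(q,\omega;q_0)\le U\bigParens{q,p(q_0);q_0}\le 0$ (the last inequality being the gradient inequality for $C$ at $q_0$, equivalently $U(q,p(q_0);q_0)=-D\bigParens{q,p(q_0)}\le 0$) is correct, and your argument goes through; in fact it makes explicit the nonnegativity check that the paper's own two-line proof of the converse leaves implicit inside its citation of \Lem{KKT:initial}.
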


Using \Thm{KKT:final}, we immediately obtain a characterization of when a price vector $\nu$ could be the price vector of an optimal solution to \eqref{convex:program}.
%
\begin{corollary}
\label{cor:budget:nu}
$\hQ(\nu;q_0) = \pinv(\nu)\cap[q_0+\witness(X_\nu)]$. In particular, $\nu$ is the price vector of an optimal solution to \eqref{convex:program} if and only if $\pinv(\nu)\cap[q_0+\witness(X_\nu)] \neq \emptyset.$
\end{corollary}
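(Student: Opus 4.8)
The plan is to derive this directly from \Thm{KKT:final} together with the definition $\hQ(\nu;q_0)=\pinv(\nu)\cap\hQ(q_0)$, proving the set equality by double inclusion. For the forward inclusion, take any $q\in\hQ(\nu;q_0)$; by definition this means $p(q)=\nu$ (equivalently $q\in\pinv(\nu)$) and $q\in\hQ(q_0)$. Applying \Thm{KKT:final} gives $q\in q_0+\witness(X_{p(q)})$, and since $p(q)=\nu$ forces $X_{p(q)}=X_\nu$ (immediate from the definition of the minimal face), we conclude $q\in\pinv(\nu)\cap[q_0+\witness(X_\nu)]$.

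For the reverse inclusion, take $q\in\pinv(\nu)\cap[q_0+\witness(X_\nu)]$. Then $p(q)=\nu$, hence again $X_{p(q)}=X_\nu$, so the membership $q\in q_0+\witness(X_\nu)$ reads $q\in q_0+\witness(X_{p(q)})$; \Thm{KKT:final} then yields $q\in\hQ(q_0)$, and combined with $q\in\pinv(\nu)$ this gives $q\in\hQ(\nu;q_0)$. This establishes the claimed identity $\hQ(\nu;q_0) = \pinv(\nu)\cap[q_0+\witness(X_\nu)]$.

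Finally, for the ``in particular'' clause: by definition, $\nu$ is the price vector of an optimal solution of \eqref{convex:program} for some budget $B\ge 0$ precisely when there is some $q\in\hQ(q_0)$ with $p(q)=\nu$, i.e.\ when $\hQ(\nu;q_0)\ne\emptyset$. By the set equality just established, this is equivalent to $\pinv(\nu)\cap[q_0+\witness(X_\nu)]\ne\emptyset$.

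I expect no real obstacle here: the only substantive observation is that $p(q)=\nu$ implies $X_{p(q)}=X_\nu$, which follows at once from the definition of the minimal face. The corollary is essentially \Thm{KKT:final} rephrased in terms of a target price vector instead of a state, so the argument is pure bookkeeping.
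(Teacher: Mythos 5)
Your proof is correct and follows exactly the route the paper intends: the corollary is obtained immediately from \Thm{KKT:final} together with the definition $\hQ(\nu;q_0)=\pinv(\nu)\cap\hQ(q_0)$, with the only observation being that $p(q)=\nu$ identifies $X_{p(q)}$ with $X_\nu$. The paper gives no separate argument beyond this, so your double-inclusion write-up is just a slightly more explicit version of the same bookkeeping.
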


\begin{figure*}
\centering
~\hfill
\includegraphics[width=0.3\textwidth,trim=2pt 2pt 2pt 2pt,clip]{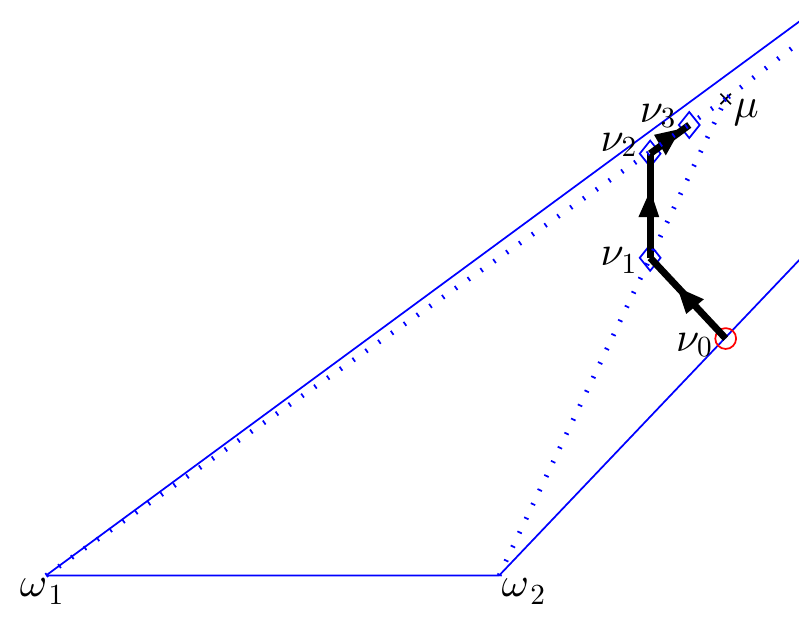}\hfill
\includegraphics[width=0.35\textwidth,trim=2pt 2pt 2pt 2pt,clip]{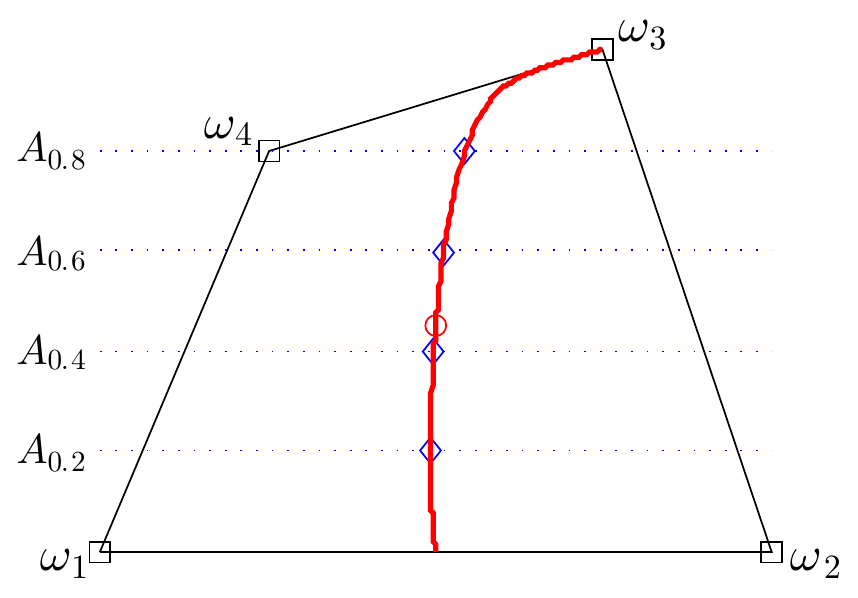}\hfill~
\caption{\emph{Left:} An example of non-additive budgets when
payoffs form obtuse angles (see Example~\ref{example:obtuse:short} and its
extended version Example~\ref{example:obtuse:long} in the full version).
\emph{Right:} An examples of a non-linear perpendicular for the log-partition cost.}
\label{fig:obtuse:perp}
\end{figure*}


We now study an example using the above characterization.
More examples can be found in \App{additivity:examples} of the full version.
\begin{example}[Quadratic cost on an obtuse triangle; see Example~\ref{example:obtuse:long} in the full version for details]
\label{example:obtuse:short}
Consider the following outcome space, belief, and the
sequence of market states (depicted in \Fig{obtuse:perp}):
\[
\renewcommand{\arraycolsep}{0pt}
\renewcommand{\arraystretch}{1.1}
\begin{array} {rl@{\qquad}rl}
  \omega_1 &{}= (0.0,\,0.0)
&
  q_0 &{}= \nu_0 = \frac{11}{14}\omega_2 + \frac{3}{14}\omega_3
\\
  \omega_2 &{}= (1.8,\,0.0)
&
  q_1 &{}= \nu_1 = \frac13\omega_2 + \frac23\mu
\\
  \omega_3 &{}= (6.0,\,4.2)
&
  q_2 &{}= \nu_2 = \frac19\omega_1 + \frac89\mu
\\
  \mu &{}= q_\mu = (2.7,\,1.8)
&
  q_3 &{}= \nu_3 \approx \frac{1}{19}\omega_1 + \frac{18}{19}\mu
\end{array}
\]
Using the KKT lemma, we can show for $j=1,2,3$,
that $q_j=\nu_j$ is an optimal action at $q_{j-1}=\nu_{j-1}$ under belief $\mu$, with the
corresponding budgets as:
\[
\renewcommand{\arraycolsep}{4pt}
\begin{array}{c|c@{~}c@{~}c|c}
& \omega_1
& \omega_2
& \omega_3
&
\\
\hline
U(q_1,\cdot;\,q_0)
& 0.45 & -0.09 & -0.09
& B_{01}=0.09
\\
U(q_2,\cdot;\,q_1)
& -0.56 & -0.56 & 1.12
& B_{12}=0.56
\\
U(q_3,\cdot;\,q_2)
& -0.565 & -0.28\dots & 0.82\dots
& B_{23}=0.565
\\
U(q_\mu,\cdot;\,q_0)
& -1.215 & -1.215 & 2.565
& B_{0\mu}=1.215
\end{array}
\]
The above table also shows that the budget $B_{0\mu}=1.215$ suffices to move directly
from $q_0$ to $q_\mu$. However, note that the sum
$ B_{01}+B_{12}+B_{23}=1.215=B_{0\mu},$
but $\nu_3\ne\mu$, i.e., after the sequence of optimal actions with budgets $B_{01}$,
$B_{12}$, and $B_{23}$, the market is still not at the belief shared by all agents,
even though with the budget $B_{0\mu}$, it would have reached it.
\end{example}

\paragraph{Budget additivity.}
The above example suggests that multiple traders with the same belief may have less power in moving the market state towards their belief compared to a single trader with the same belief and the combined budget.
Since prediction markets aim to efficiently aggregate information from agents, it is natural to ask under what conditions multiple traders with the same beliefs do have a combined impact equal to a single trader with the combined budget.

Next, we formally define this property as {\em budget additivity}.
We then define the Euclidean version of the acute angles condition that we show is sufficient for budget additivity.


\begin{definition}[Budget additivity]
\label{def:budget:additivity}
We say that a prediction market is budget additive on $\M'\subseteq\M$
if for all beliefs $\mu\in\M'$ and
all initial states $q_0\in\pinv(\M')$ the following holds: For any budgets $B,B'\ge 0$ and
any sequence of solutions $q\in\hQ(B;q_0)$ and $q'\in\hQ(B';q)$, we have $p(q),p(q')\in\M'$ and $q'\in\hQ(B+B';q_0)$.
\end{definition}

In other words, the market is budget additive if the sequence of optimal actions of
two agents with the same belief and budgets $B$ and $B'$
is also an optimal action of a single agent with the same belief and a larger budget $B+B'$.
Thanks to \Thm{budget:constraints} we then also obtain that the price vector following
the sequence of optimal actions by the two agents is the same as the price vector after the optimal action by an agent with the combined budget (all with the same beliefs).
%
%

We now state the acute angles assumption for the Euclidean case, to give an intuition. Our acute angles
assumption (Definition~\ref{def:acute}) is a generalization of this. We later show that the acute angles property is sufficient for budget additivity (Theorem~\ref{thm:main}).
\begin{definition}\label{def:Eacute}
We say that the \emph{Euclidean acute angles} hold for a face $X$, if the angle
between any point $\bnu\in\M$, its projection on the affine hull of $X$
and any payoff $\omega\in\Omega$ is non-obtuse (the angle is measured at the projection).
\end{definition}

Based on the above example, one may hypothesize that the obtuse angles
are to blame for the lack of budget additivity. In the following sections we will show
that this is indeed the case, but that the notion of obtuse/acute
angles depends on the Bregman divergence. In particular, the above
example would have been budget-additive if we used the log-partition cost instead of
the quadratic cost.

\section{BREGMAN DIVERGENCE AND PERPENDICULARS}
\label{sec:convex}


We will see next that the utility function $U$ can be written as the difference of two terms measuring the distance between the belief and the market state before and after the trade.
This distance measure is the mixed \emph{Bregman divergence}.\footnote{%
 Our notion of Bregman divergence is more general than typically assumed in the literature.}
To define the Bregman divergence, first let
$C^*:\Re^n\to(-\infty,\infty]$ be the convex conjugate of $C$ defined as
   $C^*(\nu) \coloneqq \sup_{q'\in\Re^n} \Bracks{q'\inprod\nu - C(q')}.$
Since $C^*$ is a supremum of linear functions, it is convex lower semi-continuous.
Up to a constant,
it characterizes the
maximum achievable utility on
an outcome $\omega$ for a fixed initial state $q$ as
  $\sup_{q'\in\Re^n} U(q',\omega;q) = C^*(\omega) + \big[C(q) - q\inprod\omega\big]$.
The term in the brackets is always finite, but $C^*$ might be positive infinite.
We make a standard assumption that $C^*(\omega)<\infty$ for all $\omega\in\Omega$, i.e.,
that the maximum achievable utility, which is also the maximum loss of the market maker, is bounded by a finite constant.
By convexity, this implies that
$C^*(\mu)<\infty$ for all $\mu\in\M$.
The \emph{Bregman divergence} derived from $C$ is a function $D:\Re^n\times\Re^n\to(-\infty,\infty]$ measuring the
maximum expected utility under belief $\mu$ at a state $q$
\[
\textstyle
  D(q,\mu) \coloneqq C(q) + C^*(\mu)-q\inprod\mu = \sup_{q'\in\Re^n} U(q',\mu;q)
\enspace.
\]
From the convexity of $C$ and $C^*$ and the definition of $C^*$, it is clear that: (i) $D$ is convex and lower semi-continuous in each argument separately;
(ii) $D$ is non-negative; and (iii) $D$ is zero iff $p(q)=\nabla C(q)=\mu$.
By the bounded loss assumption, Bregman divergence is finite on $\mu\in\M$. For $\mu\in\M$,
we can write
\begin{equation}
\label{eq:U}
  U(q',\mu;q) = D(q,\mu) - D(q',\mu)
\enspace.
\end{equation}
Thus, maximizing the expected utility is the same as minimizing the Bregman divergence
between the state $q'$ and the belief $\mu$.
From \Eq{U} it is also clear that each constraint in (\ref{convex:program})  is equivalent to $D(q,\omega)\le D(q_0,\omega)+B$, and the geometric interpretation is that
the agent seeks to find the state closest to his belief, within the intersection of Bregman balls

For the quadratic cost, we have $C^*(\nu)=\frac12\norm{\nu}^2$ and $D(q,\nu)=\frac12\norm{q-\nu}^2$,
i.e., the Bregman divergence coincides with the Euclidean distance squared.
For log-partition cost, we have
$C^*(\nu)=\sum_{\omega\in\Omega} P_\nu(\omega)\ln P_\nu(\omega)$ where $P_\nu$
is the distribution maximizing entropy among $P$ satisfying $\E_P[\omega]=\nu$. The
Bregman divergence is the KL-divergence between $P_q$ and $P_\nu$: $D(q,\nu)=\KL(P_\nu\Vert P_q)$.

\textbf{Convex analysis.}
We overview a few standard definitions and results from convex analysis.
For $X\subseteq\Re^n$, we write $\ri X$ for
the \emph{relative interior} of $X$ (i.e., the interior relative to the affine hull).
For a convex function $F:\Re^n\to(-\infty,\infty]$, we define its \emph{effective domain} as
$\dom F\coloneqq\set{u\in\Re^n:\:F(u)<\infty}$ (i.e., the set of points where it is finite). The \emph{subdifferential} of $F$ at a point $u$ is the set
   $\partial F(u) \coloneqq \set{v\in\Re^n:\: F(u')\ge F(u)+(u'-u)\inprod v\text{ for all }u'\in\Re^n}$.
We say that $F$ is \emph{subdifferentiable} at $u$ if $\partial F(u)\ne\emptyset$.
A standard result of convex analysis states that $F$ is always
subdifferentiable on a superset of $\ri\dom F$.
If $F$ is not only convex, but also lower semi-continuous, then $\partial F$ and
$\partial F^*$ are inverses in the sense that $v\in\partial F(u)$ iff $u\in\partial F^*(v)$.
If $F$ is \emph{differentiable} everywhere on $\Re^n$, then $F^*$ is strictly convex
on $\ri\dom F^*$.

Let $\im p\coloneqq\set{p(q):\:q\in\Re^n}$ denote the set of prices that can be expressed by market
states. The implications for our setting are that: (i) $C^*$ is subdifferentiable on $\im p$;
(ii) $\pinv(\nu)=\partial C^*(\nu)$ for all $\nu\in\Re^n$; (iii) all beliefs in $\ri\dom C^*$
can be expressed by some state $q$; (iv) $C^*$ is strictly convex on $\ri\dom C^*$, and similarly $D(q,\nu)$ is strictly convex on $\ri\dom C^*$
as a function of the second argument.
%
%

\textbf{Assumptions on the cost function.}
%
\begin{itemize}[topsep=0pt]
\item\emph{Convexity and differentiability on $\Re^n$.}
     $C$ is convex and differentiable on $\Re^n$.
\item\emph{Finite loss.}
     $\M\subseteq\dom C^*$, i.e., $C^*$ is finite on $\M$.
\item\emph{Inclusion of the relative interior.}
     $\ri\M\subseteq\ri\dom C^*$.
\end{itemize}
The first two assumptions are standard. The third assumption is a regularity condition that we require
in our results. Here we briefly discuss how it compares with the finite loss assumption.
While the two assumptions look similar, neither of them
implies the other. For example, if $\dom C^*$ is an $n$-dimensional simplex and $\M$ is one of its lower
dimensional faces,
which are lower dimensional simplices, then the finite loss assumption holds, but the inclusion assumption does not.
Similarly, for $n=1$ and $\M=[0,1]$, the inclusion assumption is satisfied by the conjugate $C^*(\nu)=1/\nu+1/(1-\nu)$ on $\nu\in(0,1)$ and $C^*(\nu)=\infty$ on $\nu\not\in(0,1)$, but this conjugate does not satisfy the finite loss assumption.

We do not view the inclusion assumption as very restrictive, since it is satisfied by many common
cost functions. For instance, it always holds when $C$ is constructed as in \cite{AbernethyCV11}, because their
construction guarantees $\dom C^*=\M$. However, the inclusion assumption might not hold for cost functions that allow arbitrage (e.g., \citep{DudikLaPe12}).

Our main result relies on strict convexity of $C^*$ on $\ri\dom C^*$, so some of our
statements will require that the market prices and beliefs lie in that set. The inclusion assumption above
guarantees that at the minimum
 $\ri\M\subseteq\ri\dom C^*$, but the boundary of $\M$ is not necessarily included.
To allow some generality beyond $\ri\M$, we define the set
\[
   \tM\coloneqq\begin{cases}
   \M&\text{if $\M\subseteq\ri\dom C^*$}
\\
   \ri\M&\text{otherwise.}
   \end{cases}
\]
In either case we obtain that
$\tM\subseteq\ri\dom C^*\subseteq\im p$, i.e., beliefs in $\tM$ can be expressed by some state $q$.
For the quadratic cost, $\tM=\M$. For the log-partition cost, $\tM=\ri\M$.

\textbf{Perpendiculars.}
We now define the notion of a Bregman perpendicular to an affine space.
This is a {\em constructive} definition. It plays a central role in the definition of
the acute angles assumption, and also in the proof of the main result (\Thm{main}).
We will see that the set of optimal price vectors for different budgets is a sequence of Bregman
perpendiculars. Naturally, perpendiculars are closely related to the conditions in Lemma \ref{lem:KKT:initial}; in particular to the set of $q$'s that satisfy conditions \ref{kkt:tight} and \ref{kkt:conv} for a given face $X$.

For quadratic costs, Bregman perpendiculars coincide with the usual Euclidean perpendiculars.
Consider an affine space and a point not in it. A projection of the point onto the space is the point in the space that is closest in Euclidean distance to the given point.
Now consider moving this affine space towards the projected point. The locus of the projection as we move the space is the perpendicular to the space through the given point.
We extend this definition to arbitrary Bregman divergences by defining the projection  using the corresponding Bregman divergence.

A Bregman perpendicular is determined by three geometric objects within the affine hull
$\aff(\dom C^*)$.
The first of these is an affine space, say $A_0 \subseteq \aff(\dom C^*)$.
The second is a point $a_1\in\aff(\dom C^*)\wo A_0$. The affine space
$A=\aff(A_0\cup\set{a_1})\subseteq\aff(\dom C^*)$
will be the ambient space that will contain the perpendicular.
Define parallel spaces to $A_0$ in $A$, for an arbitrary point $a_0 \in A_0$, as
    $A_\lambda\coloneqq A_0 + \lambda(a_1-a_0)$
for $\lambda\in\Re$. Note that the definition of $A_\lambda$ is independent of the choice of $a_0$.
The third geometric object is a market state $q\in\Re^n$ such that $p(q)\in A$. For technical reasons,
we will define a perpendicular at $q$ rather than a more natural notion, which would be at $p(q)$. Our
reason for switching into $q$-space is that inner products, defining optimality of the Bregman projection,
are between elements of $q$-space and $\nu$-space (the two spaces
coincide for Euclidean distance).
For all $\lambda\in\Re$ define a Bregman projection of $q$ onto $A_\lambda$ as
\[
    \nu_\lambda\coloneqq\argmin_{\nu\in A_\lambda} D(q,\nu)
\enspace.
\]
Since $D(q,\nu)$ is bounded from below and lower semi-continuous, the minimum is always attained
(but it may be equal to $\infty$).
If it is attained at more than one point, we choose an arbitrary minimizer. Whenever we can choose
$\nu_\lambda\in\ri\dom C^*$, this $\nu_\lambda$ must be the unique minimizer by strict
convexity of $D(q,\cdot)$ on $\ri\dom C^*$, and the minimum is finite.
We use these $\nu_\lambda$'s to define the perpendicular:

\begin{definition}
Given $A_0$, $a_1$ and $q$ as above, the $a_1$-\emph{perpendicular} to $A_0$ at $q$
is a map $\gamma:\lambda\mapsto\nu_\lambda$ defined over
$\lambda\in\Lambda\coloneqq\set{\lambda\in\Re:\nu_\lambda\in\ri\dom C^*}$. We call $\Lambda$
the domain of the perpendicular.
We define a total order on $\nu_{\lambda},\nu_{\lambda'}\in\im\gamma$ as
$\nu_{\lambda}\preceq\nu_{\lambda'}$ iff $\lambda\le\lambda'$.
\end{definition}

In \App{perp:cont} of the full version, we show that perpendiculars are continuous maps.
The name perpendicular is justified by the following proposition
which matches our Euclidean intuition that the perpendiculars can be obtained
by intersecting the ambient space $A$
with the affine space which passes through $q$ and is orthogonal to $A_0$.
It also shows that the perpendicular corresponds to the set of prices that satisfy conditions \ref{kkt:tight} and \ref{kkt:conv} with the convex hull relaxed to the affine hull (when $A_0$ is the affine hull of
face $X$, point $a_1$ coincides with $\mu$ and $q$ is the initial state).
Recall that for an arbitrary set $X\subseteq\Re^n$, its
orthogonal complement is defined as
$X^\perp\coloneqq\set{u:\: u\inprod(x'-x)=0\text{ for all }x,x'\in X}$.

\begin{proposition}
\label{prop:perp:dual}
Let $\gamma$ be the $a_1$-perpendicular to $A_0$ at $q$, and let $A=\aff(A_0\cup\set{a_1})$.
The following two statements are equivalent for any $\nu'\in\Re^n$:
\begin{itemize}[topsep=0pt]
\item[\textup{(i)}] $\nu'\in\im\gamma$
\item[\textup{(ii)}] $\nu'\in A\cap(\ri\dom C^*)$, $\pinv(\nu')\cap(q+A_0^\perp)\ne\emptyset$
\end{itemize}
\end{proposition}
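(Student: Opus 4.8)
The plan is to funnel both (i) and (ii) through a common reformulation: $\nu'\in\ri\dom C^*$, $\nu'$ lies in the unique parallel slice $A_\lambda$ that contains it, and $\nu'$ minimizes $D(q,\cdot)$ over $A_\lambda$. Two preliminary facts are needed. First, the translates $\{A_\lambda\}_{\lambda\in\Re}$ partition $A=\aff(A_0\cup\set{a_1})$, so every $\nu'\in A$ lies in exactly one $A_\lambda$; and each $A_\lambda$ shares its direction space with $A_0$, so $A_\lambda^\perp=A_0^\perp$ for all $\lambda$. Second, since $D(q,\nu)=[C(q)-q\inprod\nu]+C^*(\nu)$ differs from $C^*$ by an affine function in $\nu$, we have $\dom D(q,\cdot)=\dom C^*$ and $\partial_2 D(q,\nu)=\partial C^*(\nu)-q=\pinv(\nu)-q$, the last equality being convex-analysis fact (ii) of \Sec{convex}.

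The core step is to show, for any $\nu'\in(\ri\dom C^*)\cap A_\lambda$, that $\nu'$ minimizes $D(q,\cdot)$ over $A_\lambda$ iff $\pinv(\nu')\cap(q+A_0^\perp)\ne\emptyset$. By Fermat's rule, $\nu'$ is a minimizer of $D(q,\cdot)+\iota_{A_\lambda}$ iff $0\in\partial\bigParens{D(q,\cdot)+\iota_{A_\lambda}}(\nu')$; the subgradient sum rule applies because $\nu'\in(\ri\dom C^*)\cap A_\lambda=\ri\dom D(q,\cdot)\cap\ri A_\lambda$, so $\partial\bigParens{D(q,\cdot)+\iota_{A_\lambda}}(\nu')=\partial_2 D(q,\nu')+A_\lambda^\perp=\bigParens{\pinv(\nu')-q}+A_\lambda^\perp$. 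Since $A_\lambda^\perp=A_0^\perp$ is a linear subspace, $0$ lies in this set iff $\bigParens{\pinv(\nu')-q}\cap A_0^\perp\ne\emptyset$, i.e.\ iff $\pinv(\nu')\cap(q+A_0^\perp)\ne\emptyset$. On this domain the minimum value is automatically finite, being at most $D(q,\nu')<\infty$.

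Assembling the two directions: for (i)$\Rightarrow$(ii), if $\nu'=\nu_\lambda$ with $\lambda\in\Lambda$, then $\nu'\in A_\lambda\subseteq A$ and $\nu'\in\ri\dom C^*$, and $\nu'$ minimizes $D(q,\cdot)$ over $A_\lambda$ by definition of $\nu_\lambda$, so the core step supplies the dual clause of (ii). For (ii)$\Rightarrow$(i), take the unique $\lambda$ with $\nu'\in A_\lambda$; the core step gives that $\nu'$ minimizes $D(q,\cdot)$ over $A_\lambda$; since $\nu'\in\ri\dom C^*$ and $D(q,\cdot)$ is strictly convex there (fact (iv)), this minimizer is unique — two distinct minimizers would span a segment inside $A_\lambda$ containing a relative-interior point, along which $D(q,\cdot)$ would be simultaneously constant and strictly convex — so the chosen $\nu_\lambda$ must equal $\nu'$ and $\lambda\in\Lambda$, whence $\nu'=\gamma(\lambda)\in\im\gamma$.

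The main obstacle is the core step: verifying the constraint qualification for the subgradient sum rule (which is precisely why the statement restricts $\nu'$ to $\ri\dom C^*$), and identifying the normal cone to the affine slice $A_\lambda$ with $A_0^\perp$ uniformly in $\lambda$. The uniqueness-of-minimizer point in (ii)$\Rightarrow$(i) is the only other place needing care, and it merely reuses the strict-convexity remark already invoked in the definition of $\Lambda$; the remainder is bookkeeping.
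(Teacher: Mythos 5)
Your proposal is correct and follows essentially the same route as the paper: both reduce the claim to the first-order optimality condition for the Bregman projection onto $A_\lambda$, identify $\partial_2 D(q,\nu')=\pinv(\nu')-q$ and $A_\lambda^\perp=A_0^\perp$, and run the equivalence in both directions. The only difference is that you spell out the constraint qualification for the subgradient sum rule and the uniqueness-of-minimizer step via strict convexity, details the paper leaves implicit (the latter already being noted in its definition of the perpendicular).
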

\Prop{perp:dual} is proved in \App{perp} of the full version.
The perpendiculars  have the following closure property which is useful for showing budget additivity
(also proved in \App{perp} of the full version):
\begin{proposition}
\label{prop:perp:equiv}
Under the assumptions of \Prop{perp:dual},
$\gamma$ is also the $a_1$-perpendicular to $A_0$ at any $q'\in\pinv(\im\gamma)\cap(q+A_0^\perp)$.
\end{proposition}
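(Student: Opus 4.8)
The plan is to deduce the statement from the dual description of perpendiculars in \Prop{perp:dual}. Write $A=\aff(A_0\cup\set{a_1})$ and $A_\lambda=A_0+\lambda(a_1-a_0)$ as in the construction of $\gamma$, let $\Lambda$ be the domain of $\gamma$, and fix a point $q'\in\pinv(\im\gamma)\cap(q+A_0^\perp)$ (if no such $q'$ exists, the claim is vacuous). First I would check that the $a_1$-perpendicular to $A_0$ at $q'$ is well defined at all: the requirements $A_0\subseteq\aff(\dom C^*)$ and $a_1\in\aff(\dom C^*)\wo A_0$ involve only $A_0$ and $a_1$ and so are inherited from $\gamma$; and $q'\in\pinv(\im\gamma)$ means exactly $p(q')\in\im\gamma$, which by \Prop{perp:dual} is contained in $A$, so the remaining requirement $p(q')\in A$ holds too. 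Denote this perpendicular by $\gamma'$ and its domain by $\Lambda'$; note that the ambient space $A$ and the family of slices $A_\lambda$ are literally the same for $\gamma$ and $\gamma'$.

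The crucial elementary observation is that $q+A_0^\perp=q'+A_0^\perp$, since $q'-q\in A_0^\perp$ and $A_0^\perp$ is a linear subspace. Applying \Prop{perp:dual} to $\gamma$ and to $\gamma'$, a vector $\nu\in\Re^n$ lies in $\im\gamma$ iff $\nu\in A\cap(\ri\dom C^*)$ and $\pinv(\nu)\cap(q+A_0^\perp)\ne\emptyset$, and lies in $\im\gamma'$ iff $\nu\in A\cap(\ri\dom C^*)$ and $\pinv(\nu)\cap(q'+A_0^\perp)\ne\emptyset$. By the observation these two conditions are identical, hence $\im\gamma=\im\gamma'$.

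It then remains to promote equality of images to equality of the parametrized maps, including their domains. For this I would use that the family $\set{A_\lambda}_{\lambda\in\Re}$ is pairwise disjoint and covers $A$: because $a_1\notin A_0$, the direction $a_1-a_0$ is not in the linear space parallel to $A_0$, so distinct $\lambda$ yield affine subspaces with the same direction space but no common point, i.e.\ disjoint; and every point of $A$ lies on one of them. Since $\gamma(\lambda)\in A_\lambda$ for each $\lambda\in\Lambda$ and $\gamma'(\lambda)\in A_\lambda$ for each $\lambda\in\Lambda'$, intersecting $\im\gamma$ with $A_\lambda$ recovers the single point $\gamma(\lambda)$ exactly when $\lambda\in\Lambda$ and is empty otherwise, and similarly for $\gamma'$. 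Intersecting the already established equality $\im\gamma=\im\gamma'$ with each slice $A_\lambda$ therefore forces $\Lambda=\Lambda'$ and $\gamma(\lambda)=\gamma'(\lambda)$ for all $\lambda$, i.e.\ $\gamma=\gamma'$, which is the claim.

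The step I expect to be the main obstacle is this last one: a perpendicular is a parametrized curve, not merely a point set, so one must argue that the common image is traced with the same parametrization and over the same parameter range — which is precisely what the disjoint-slices picture supplies. A smaller point that needs care is the well-definedness of $\gamma'$, namely that $p(q')$ actually lands in the ambient space $A$; this is where the hypothesis $q'\in\pinv(\im\gamma)$ is used, rather than merely $q'\in q+A_0^\perp$. Alternatively, one could avoid \Prop{perp:dual} and argue directly at the level of minimizers: for $\lambda\in\Lambda$ the first-order optimality condition for $\nu_\lambda=\argmin_{\nu\in A_\lambda}D(q,\nu)$ reads $\partial C^*(\nu_\lambda)\cap(q+A_0^\perp)\ne\emptyset$, which by $q+A_0^\perp=q'+A_0^\perp$ coincides with the first-order condition for $\argmin_{\nu\in A_\lambda}D(q',\nu)$; since $D(q',\cdot)$ is convex and $\nu_\lambda\in\ri\dom C^*$, this condition is sufficient, and strict convexity of $C^*$ on $\ri\dom C^*$ makes the minimizer unique, so $\nu_\lambda=\argmin_{\nu\in A_\lambda}D(q',\nu)\in\ri\dom C^*$ and $\lambda\in\Lambda'$, with the reverse inclusion symmetric via $q-q'\in A_0^\perp$.
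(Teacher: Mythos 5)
Your proof is correct and follows essentially the same route as the paper's: invoke \Prop{perp:dual}(ii) together with the observation that $q'-q\in A_0^\perp$ gives $q+A_0^\perp=q'+A_0^\perp$, so both perpendiculars have the same dual description within the same ambient space $A$. Your additional step upgrading equality of images to equality of the parametrized maps via the pairwise-disjoint slices $A_\lambda$ is extra rigor that the paper's two-line proof leaves implicit.
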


\section{BUDGET ADDITIVITY}
\label{sec:main}
We now state the \emph{acute angles} property which links
the Bregman perpendicular and  \Cor{budget:nu},  and is sufficient for  budget additivity.
\begin{definition}
\label{def:acute}
We say that the \emph{acute angles} hold for a face $X$, if for every
$\mu$-perpendicular $\gamma$ to $X$ at $q$, such that $\mu\in\tM$ and
$q\in\pinv(\tM)$, the following holds:
If $\nu'\in\im\gamma$ and $\nu'\succeq p(q)$, then $\pinv(\nu')\cap[q+\witness(X)]\ne\emptyset$.
\end{definition}

The motivation for the name ``acute angles'' comes from the Euclidean distance case,
where this assumption is equivalent to Definition \ref{def:Eacute} (see \Prop{Eacute} in the full version).
The acute angles property is non-trivial and we have seen that without this property, budget
additivity need not hold; we conjecture that it is also a necessary condition.
After stating the main theorem, we analyze in more detail
when the acute angles are satisfied by the quadratic and log-partition costs.

We now state the main result, that the acute angles are sufficient for budget additivity:
\begin{theorem}[Sufficient conditions for budget additivity]
\label{thm:main}
If acute angles hold for every face $X\subseteq\Omega$,
then the prediction market is budget additive on $\tM$.
\end{theorem}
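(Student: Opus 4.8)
The plan is to fix $\mu\in\tM$, $q_0\in\pinv(\tM)$, budgets $B,B'\ge 0$, and optimal actions $q\in\hQ(B;q_0)$, $q'\in\hQ(B';q)$, and to reduce the whole statement to the claim that the minimal faces shrink along the two steps: $X_{p(q')}\subseteq X_{p(q)}$ (together with $p(q),p(q')\in\tM$). Indeed, writing $\nu\coloneqq p(q)$, $\nu'\coloneqq p(q')$, \Thm{KKT:final} applied from $q_0$ and from $q$ gives $q-q_0\in\witness(X_\nu)$ and $q'-q\in\witness(X_{\nu'})$; each witness cone is a convex cone and the map $X\mapsto\witness(X)$ is anti-monotone, so $X_{\nu'}\subseteq X_\nu$ forces $q-q_0\in\witness(X_\nu)\subseteq\witness(X_{\nu'})$, hence $q'-q_0\in\witness(X_{\nu'})$, hence $q'\in\hQ(q_0)$ by \Thm{KKT:final}. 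That the associated budget is exactly $B+B'$ follows from path independence of $U$ and \Lem{KKT:initial}\ref{kkt:budget}: if $X_{\nu'}\neq\emptyset$, pick $x\in X_{\nu'}\subseteq X_\nu$, so $-U(q,x;q_0)=B$ and $-U(q',x;q)=B'$, whence $-U(q',x;q_0)=B+B'$, matching \ref{kkt:budget} for $q'$ from $q_0$ with active face $X_{\nu'}$; if $X_{\nu'}=\emptyset$, then $\nu'=\mu$ and $-U(q',\omega;q_0)\le B+B'$ for all $\omega$ by the two budget constraints, again matching \ref{kkt:budget}. So everything hinges on $X_{\nu'}\subseteq X_\nu$ and on $\nu,\nu'\in\tM$.

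To get these, I would first chart the price trajectory $\set{p(\hat q):\hat q\in\hQ(q_0)}$ as a finite concatenation of Bregman perpendiculars along a strictly decreasing chain of faces --- the picture of \Fig{budget}. Put $\nu_0\coloneqq p(q_0)$, $X_0\coloneqq X_{\nu_0}$, and let $\gamma_0$ be the $\mu$-perpendicular to $X_0$ at $q_0$; by \Prop{perp:dual}, $\nu_0\in\im\gamma_0$, and the \emph{acute angles} property for $X_0$ says exactly that every $\sigma\in\im\gamma_0$ with $\sigma\succeq\nu_0$ has $\pinv(\sigma)\cap[q_0+\witness(X_0)]\neq\emptyset$, so by \Cor{budget:nu} such $\sigma$ lie on the trajectory, with minimal face $X_0$ while $\sigma\in\ri\conv(X_0\cup\set{\mu})$. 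By continuity of perpendiculars, $\gamma_0$ either reaches $\mu$ or first leaves $\ri\conv(X_0\cup\set{\mu})$ at some $\nu_1$ lying on a proper face $\conv(X_1\cup\set{\mu})$ with $X_1\subsetneq X_0$ a face of $\M$, so $X_{\nu_1}=X_1$. A state $q_1\in\pinv(\nu_1)\cap[q_0+\witness(X_0)]$ exists by \Cor{budget:nu}, and $q_1-q_0\in\witness(X_0)\subseteq\witness(X_1)$ by anti-monotonicity; so re-running the argument with the $\mu$-perpendicular to $X_1$ at $q_1$ extends the trajectory inside $q_0+\witness(X_1)$, and since $X_0\supsetneq X_1\supsetneq\cdots$ this terminates with an arc reaching $\mu$. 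An induction on the length of this chain, using \Cor{budget:nu} and \Prop{perp:dual} for the reverse direction, shows conversely that every trajectory price sits on one of these arcs (its minimal face is one of the $X_i$) and lies in $\ri\dom C^*\cap\conv(\Omega\cup\set{\mu})\subseteq\tM$.

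With the chart in hand, I would finish via the restart property. For $q\in\hQ(B;q_0)$, the price $\nu$ lies on some arc $i$, and since $q-q_0\in\witness(X_\nu)=\witness(X_i)\subseteq X_i^\perp$ (and the breakpoint state $q_i$ likewise satisfies $q_i-q_0\in X_i^\perp$), \Prop{perp:equiv} identifies the $\mu$-perpendicular to $X_i$ at $q$ with arc $i$; recursing (applying the same analysis from $q$), the trajectory $\set{p(\hat q):\hat q\in\hQ(q)}$ equals the tail of the $q_0$-trajectory from $\nu$, with faces $X_i\supseteq X_{i+1}\supseteq\cdots$. Hence $q'\in\hQ(B';q)$ has price $\nu'$ on an arc $j\ge i$, so $X_{\nu'}=X_j\subseteq X_i=X_\nu$ and $\nu'\in\tM$. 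Combined with the first paragraph, this yields budget additivity on $\tM$.

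I expect the charting step to be the main obstacle. The delicate points: that each forward perpendicular arc is \emph{realized} inside $\hQ(q_0)$ with active face inside the current face --- this is precisely where \emph{acute angles} is used, and without it the active face can jump to a face incomparable with the previous one, as in the obtuse-triangle Example~\ref{example:obtuse:short}, breaking both face-monotonicity and budget additivity; the breakpoint analysis that the minimal face strictly drops on leaving $\ri\conv(X_i\cup\set{\mu})$, with care for degenerate configurations and the convention $\emptyset^\perp=\Re^n$; the continuity of perpendiculars so consecutive arcs join with no gap; and the reverse inclusion, that $\hQ(q_0)$ yields no prices off the arcs. By contrast, the budget bookkeeping of paragraph one and $\tM$-membership are routine once the trajectory structure and \Thm{KKT:final}/\Cor{budget:nu} are available.
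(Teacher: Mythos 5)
Your overall architecture is the same as the paper's: reduce to face monotonicity plus budget bookkeeping via path independence and \Lem{KKT:initial}\ref{kkt:budget} (the paper's Part 2), chart the solution prices as a concatenation of Bregman perpendicular arcs over a strictly shrinking chain of faces, using acute angles to pick the breakpoint states and \Prop{perp:equiv} for the restart property (the paper's Parts 1, 1', 3'). Those portions of your plan are sound, modulo the arbitrage-free-initialization technicality the paper patches in its appendix (it uses \Prop{init:hQ} so that \Lem{KKT:initial} and \Thm{KKT:final} can legitimately be applied from the intermediate state $q$).

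The genuine gap is the reverse inclusion --- that every solution in $\hQ(q_0)$ has its price on the constructed path $L$ (the paper's Part 3), which is exactly what your face-monotonicity claim $X_{p(q')}\subseteq X_{p(q)}$ rests on. Your proposed mechanism, ``induction on the length of the chain, using \Cor{budget:nu} and \Prop{perp:dual},'' does not deliver it. \Cor{budget:nu} characterizes which price vectors are optimal for \emph{some} budget, but gives no link between the specific budget $B$ of a given solution and a point of $L$; and even if you knew that a solution price $\nu$ has minimal face equal to some $X_i$ in the chain, \Prop{perp:dual} only places $\nu$ on the full perpendicular $\gamma_i$ through $q_i$ (using $q_i+X_i^\perp=q_0+X_i^\perp$), not on the traversed segment $\ell_i$ between $\nu_i$ and $\nu_{i+1}$ --- and it is not clear a priori that the minimal face of an arbitrary solution price is one of the $X_i$ at all. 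The paper closes this by a quite different argument: it shows the set of budgets realized along each segment, $B_i(\nu)=\set{-U(q,x;q_0):q\in\pinv(\nu)\cap[q_0+\witness(X_i)]}$, is a nonempty closed interval whose endpoints $B_i^{\min},B_i^{\max}$ are strictly monotone and one-sidedly continuous along $\ell_i$ (the continuity uses the compactness statement \Thm{perp:cont}(b)), so the union over $\im L$ of realized budgets is an interval covering every feasible budget level; then \Thm{budget:constraints} (all solutions at a fixed budget share one price vector) forces the price of \emph{any} $q\in\hQ(B;q_0)$ to coincide with the point of $L$ realizing budget $B$. That continuity-of-budgets-along-$L$ plus uniqueness-of-prices-per-budget argument is the missing idea; without it, your first paragraph's hypothesis $X_{\nu'}\subseteq X_\nu$ is unsupported.
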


\textbf{Sufficient conditions for acute angles.}
%
We next give the sufficient conditions when
the acute angles hold for the quadratic and log-partition cost functions.
We also show that the acute angles hold for all one-dimensional outcome spaces, and
that they are preserved by taking \emph{direct sums} of markets.
Recall that a set $\setK\in\Re^n$ is called a cone if it is closed under multiplication by positive scalars.
A cone is called \emph{acute}, if $x\inprod y\ge 0$ for all $x,y\in\setK$. An \emph{affine cone} with the
vertex $a_0$ is a set $\setK'$ of the form $a_0+\setK$ where $\setK$ is a cone.

\begin{theorem}[Sufficient condition for quadratic cost]
\label{thm:quad}
Let $X$ be a face and $A'$ be the affine space $a_0+X^\perp$ for an arbitrary $a_0\in\aff(X)$.
Acute angles hold for the face $X$ and the quadratic cost
if and only if the projection of $\Omega$ (or, equivalently, $\M$) on $A'$ is contained in an affine acute
cone with the vertex $a_0$.
\end{theorem}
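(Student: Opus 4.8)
The plan is to specialize Definition~\ref{def:acute} to the quadratic cost, where everything becomes Euclidean, and reduce the acute-angles condition to a finite inequality about the projected outcomes. Recall that for $C(q)=\frac12\norm q^2$ we have $p(q)=q$, $D(q,\nu)=\frac12\norm{q-\nu}^2$, $\dom C^*=\ri\dom C^*=\Re^n$, $\tM=\M$, and $\pinv(\nu)=\{\nu\}$, so $\pinv(\tM)=\M$. Fix the face $X$, pick $a_0\in\aff(X)$, write $V\coloneqq\aff(X)-a_0$ for its direction space (so $X^\perp=V^\perp$), let $P$ be orthogonal projection onto $V^\perp$, and let $\pi(y)\coloneqq a_0+P(y-a_0)$ be orthogonal projection onto $A'=a_0+X^\perp$. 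For a $\mu$-perpendicular $\gamma$ to $X$ at $q$ with $\mu\in\M$, $q\in\M$, $q\in A=\aff(X\cup\{\mu\})$ and (necessarily) $\mu\notin\aff(X)$, the point $\nu_\lambda=\argmin_{\nu\in A_\lambda}D(q,\nu)$ is simply the Euclidean projection of $q$ onto $A_\lambda=\aff(X)+\lambda(\mu-a_0)$, which I would compute directly as $\nu_\lambda=a_0+(I-P)(q-a_0)+\lambda(\pi(\mu)-a_0)$; alternatively \Prop{perp:dual} already gives $\im\gamma=A\cap(q+X^\perp)$. Either way $\im\gamma$ is the line through $q$ in direction $\pi(\mu)-a_0$, we have $q=\nu_{\lambda_0}$ for some finite $\lambda_0$, and as $\lambda$ grows $\nu_\lambda$ drifts in the $+(\pi(\mu)-a_0)$ direction (the direction in which $A_\lambda$ is being translated). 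Hence $\nu'\in\im\gamma$ with $\nu'\succeq p(q)$ means exactly $\nu'-q=s(\pi(\mu)-a_0)$ for some $s\ge 0$.

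Next I would rewrite $\witness(X)$ using the projection. Since $\witness(X)\subseteq X^\perp$ (orthogonality of witness cones), and since for $u\in X^\perp$ and $x\in X$ one has $u\inprod(x-a_0)=0$ (as $x-a_0\in V$) and $(I-P)(\omega-a_0)\in V$, we get $u\inprod(\omega-x)=u\inprod(\omega-a_0)=u\inprod(\pi(\omega)-a_0)$ for all $\omega\in\Omega$, $x\in X$. Therefore
\[
  \witness(X)=\set{u\in X^\perp:\ u\inprod(\pi(\omega)-a_0)\ge 0\ \text{for all }\omega\in\Omega}.
\]
Because $\pinv(\nu')=\{\nu'\}$, the conclusion of Definition~\ref{def:acute} says $\nu'-q\in\witness(X)$; but $\nu'-q=s(\pi(\mu)-a_0)$ already lies in $X^\perp$, so — using that $\witness(X)$ is a cone — this holds for all admissible $\nu'$ iff $\pi(\mu)-a_0\in\witness(X)$, i.e. iff $(\pi(\mu)-a_0)\inprod(\pi(\omega)-a_0)\ge 0$ for every $\omega\in\Omega$. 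Crucially this is independent of $q$, so the quantifier ``for every $\mu$-perpendicular'' collapses to ``for every $\mu\in\M$'': acute angles hold for $X$ under the quadratic cost iff $(\pi(\mu)-a_0)\inprod(\pi(\omega)-a_0)\ge 0$ for all $\mu\in\M$ and $\omega\in\Omega$.

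It then remains to match this with the cone condition in the statement. Since $\pi$ is affine and $\M=\conv(\Omega)$, the set $\set{\pi(\mu)-a_0:\mu\in\M}$ is the convex hull of $\set{\pi(\omega)-a_0:\omega\in\Omega}$, and the inner product against a fixed vector is linear, so the condition above is equivalent to the finite pairwise condition $(\pi(\omega')-a_0)\inprod(\pi(\omega)-a_0)\ge 0$ for all $\omega,\omega'\in\Omega$ (one direction by taking convex combinations, the other by restricting $\mu$ to vertices of $\M$). A cone is acute precisely when its generators are pairwise non-obtuse, so this is the same as saying that the conic hull of $\set{\pi(\omega)-a_0:\omega\in\Omega}$ is acute, equivalently that $\pi(\Omega)$ is contained in the affine acute cone with vertex $a_0$ obtained by adding $a_0$ to that conic hull; conversely any affine acute cone $a_0+\setK$ containing $\pi(\Omega)$ forces all $\pi(\omega)-a_0\in\setK$, hence pairwise non-obtuseness. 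Since $\pi(\M)-a_0$ is the convex hull of $\set{\pi(\omega)-a_0}$ and hence sits inside that conic hull, $\Omega$ may be replaced by $\M$ throughout. This gives both implications of the theorem.

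The main obstacle, I expect, is getting the Bregman (here Euclidean) perpendicular and, especially, the orientation of its total order exactly right: one must verify that $\nu'\succeq p(q)$ is the half-line leaving $q$ toward $\mu$ (direction $+(\pi(\mu)-a_0)$) rather than away from it — this is where the direction $a_1-a_0=\mu-a_0$ of translation of $A_\lambda$ enters, pushing its closest point to $q$ along $P(\mu-a_0)=\pi(\mu)-a_0$; with the sign reversed the resulting condition would be the (false) opposite inequality. A secondary subtlety is the quantifier structure of Definition~\ref{def:acute}: I must argue the reduced inequality does not depend on the state $q$ (it does not, since $\witness(X)$ is a cone), and for the forward direction it suffices to use the particular witnesses $\mu=q=\omega'$ for each $\omega'\in\Omega$ with $\omega'\notin\aff(X)$ — the excluded case $\omega'\in\aff(X)$ gives $\pi(\omega')=a_0$ and a trivial inequality. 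Everything else ($\tM=\M$, single-valued $\pinv$, $\Lambda=\Re$) is routine specialization of the general framework.
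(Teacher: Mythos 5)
Your proposal is correct and takes essentially the same approach as the paper: specialize to the Euclidean setting where the perpendicular is the line through $q$ orthogonal to $\aff(X)$ in the direction of the projection of $\mu$ onto $a_0+X^\perp$, rewrite membership in $\witness(X)$ as non-negativity of inner products with the projected outcomes $\pi(\omega)-a_0$, and identify this with containment of $\pi(\Omega)$ (hence $\pi(\M)$) in an affine acute cone at $a_0$. The only difference is organizational — you derive one $q$-independent equivalence and then read off both implications, whereas the paper argues the two directions separately with specifically chosen perpendiculars (the $\omega_1$-perpendicular at $a_0$ for one direction, a general $\mu$-perpendicular for the other) — but the underlying argument, including the crucial orientation check that $\nu'\succeq p(q)$ points along $+\,(\pi(\mu)-a_0)$, matches the paper's.
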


\begin{corollary}
\label{cor:quad:cube}
Acute angles hold for the quadratic cost and a hypercube $\Omega=\set{0,1}^n$.
\end{corollary}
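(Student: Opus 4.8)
The plan is to verify the acute-angles property for every face of $\Omega=\set{0,1}^n$ by checking the geometric condition of \Thm{quad}. The empty face is trivial: for the quadratic cost $\dom C^*=\Re^n$, $\pinv(\nu)=\set{\nu}$, and $\witness(\emptyset)=\Re^n$, so the condition in \Def{acute} holds vacuously. So fix a non-empty face $X$.

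By the maximizer definition of a face, a non-empty face of the cube is a coordinate subcube: there is a ``free'' set $S\subseteq[n]$ and fixed bits $b_i\in\set{0,1}$ for $i\notin S$ with $X=\set{x\in\set{0,1}^n:\:x_i=b_i\text{ for all }i\notin S}$. Then $\aff(X)=\set{x:\:x_i=b_i\text{ for }i\notin S}$, the differences $x'-x$ (for $x,x'\in X$) span $\linspan\set{e_i:\:i\in S}$, and therefore $X^\perp=\linspan\set{e_i:\:i\notin S}$. I would pick any $a_0\in\aff(X)$ (so $(a_0)_i=b_i$ for $i\notin S$) and set $A'=a_0+X^\perp=\set{y:\:y_i=(a_0)_i\text{ for }i\in S}$, exactly as in \Thm{quad}.

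Next I would compute the orthogonal projection $\pi$ of $\Omega$ onto $A'$: $\pi$ overwrites the $S$-coordinates of a point by those of $a_0$ and keeps the remaining coordinates, so for $\omega\in\Omega$ the vector $\pi(\omega)-a_0$ is supported on $[n]\wo S$ with $(\pi(\omega)-a_0)_i=\omega_i-b_i\in\set{-1,0,1}$. The key point is that this entry's sign is forced by $b_i$: it is $\ge 0$ when $b_i=0$ and $\le 0$ when $b_i=1$. Hence, writing $\epsilon_i=1-2b_i\in\set{+1,-1}$ for $i\notin S$, every projected point lies in the ``signed orthant'' cone $\setK=\set{v\in\Re^n:\:v_i=0\text{ for }i\in S,\ \epsilon_i v_i\ge 0\text{ for }i\notin S}$, i.e.\ $\pi(\Omega)\subseteq a_0+\setK$. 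This $\setK$ is a (polyhedral, convex) cone, and it is acute: for $v,w\in\setK$ we have $v\inprod w=\sum_{i\notin S}v_iw_i=\sum_{i\notin S}(\epsilon_i v_i)(\epsilon_i w_i)\ge 0$ since each factor is non-negative. As $\pi$ is affine, equivalently $\pi(\M)=\conv(\pi(\Omega))\subseteq a_0+\setK$. Thus the projection of $\Omega$ (equivalently $\M$) lies in an affine acute cone with vertex $a_0$, and \Thm{quad} yields the acute-angles property for $X$, completing the proof.

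There is no real obstacle; the argument is a short computation. The only points requiring care are the identification of the faces of the cube via the maximizer definition, the bookkeeping for $X^\perp$ and the projection formula, and noticing that the fixed bits $b_i$ pin down the sign of each coordinate of $\pi(\omega)-a_0$ — acuteness of $\setK$ then follows because the inner product decomposes coordinatewise into products of reals sharing a common (coordinate-dependent) sign.
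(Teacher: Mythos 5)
Your proof is correct and follows essentially the same route as the paper: both verify the condition of \Thm{quad} by showing the projection of $\Omega$ onto $a_0+X^\perp$ lies in an orthant-type acute cone, the only difference being that the paper reduces to $a_0=0$ by a rigid-motion symmetry while you carry out the coordinate bookkeeping (signed orthant determined by the fixed bits $b_i$) directly, and you additionally dispose of the empty face explicitly.
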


\begin{corollary}
\label{cor:quad:simplex}
Acute angles hold for the quadratic cost and simplex
$\Omega=\set{e_i:\:i\in[n]}$ where $[n]=\set{1,2,\dotsc,n}$
and $e_i$ is the $i$-th vector of the standard basis in $\Re^n$.
\end{corollary}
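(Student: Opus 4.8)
The plan is to reduce the claim to the iff-criterion of \Thm{quad}, i.e.\ to show that for every face $X$ of the simplex $\M=\conv(\Omega)$, the orthogonal projection of $\Omega$ onto the normal affine space $A'=a_0+X^\perp$ lies in an affine acute cone with vertex $a_0$, for a convenient $a_0\in\aff(X)$ (the criterion is independent of this choice since two choices of $a_0$ differ by a vector in $\linspan(X-X)=(X^\perp)^\perp$, which does not change the shifted projection set). The empty face is handled separately and trivially: $\witness(\emptyset)=\Re^n$, so the condition of \Def{acute} for $X=\emptyset$ reduces to $\pinv(\nu')\ne\emptyset$ for $\nu'\in\im\gamma$, which holds because $\im\gamma\subseteq\ri\dom C^*\subseteq\im p$ and $\pinv=\partial C^*$ is nonempty on $\im p$. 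So fix a nonempty face $X=\set{e_i:i\in S}$ with $\emptyset\ne S\subseteq[n]$, write $s=\abs{S}$, and take $a_0=\tfrac1s\sum_{i\in S}e_i$, the centroid of $X$ (which lies in $\aff(X)$).

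Next I would compute $X^\perp$ and the projection onto $A'$. Since $\linspan(X-X)=\linspan\set{e_i-e_j:i,j\in S}$ is exactly the set of vectors supported on $S$ with zero coordinate sum, its orthogonal complement $X^\perp$ consists of all vectors that are constant on the coordinates of $S$ (and arbitrary elsewhere); hence $A'=a_0+X^\perp=\set{v\in\Re^n:\:v_i=v_j\text{ for all }i,j\in S}$. A one-line Euclidean projection computation then gives: each vertex $e_k$ with $k\in S$ projects onto $A'$ to the point $a_0$, while each vertex $e_k$ with $k\notin S$ already lies in $A'$ (its $S$-coordinates are all $0$, hence constant) and so projects to itself. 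Therefore the projection of $\Omega$ onto $A'$ equals $\set{a_0}\cup\set{e_k:\:k\in[n]\wo S}$.

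It remains to exhibit the acute cone. Let $\setK$ be the conic hull (set of nonnegative combinations) of $\set{e_k-a_0:\:k\in[n]\wo S}$; it is convex, contains $0$, and $a_0+\setK$ contains the whole projected set. A conic hull of vectors whose pairwise inner products are nonnegative is itself acute, so it suffices to compute, for $k,l\notin S$, using $e_k\inprod a_0=\tfrac1s[k\in S]=0$ and $\norm{a_0}^2=\tfrac1s$, that $(e_k-a_0)\inprod(e_l-a_0)=\delta_{kl}+\tfrac1s>0$. Hence $a_0+\setK$ is an affine acute cone with vertex $a_0$ containing the projection of $\Omega$, and \Thm{quad} yields the acute angles for $X$. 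Since $X$ was arbitrary, the corollary follows (and, combined with \Thm{main}, the market with quadratic cost is budget additive on $\M$). There is no real obstacle here; the only mildly delicate point is correctly identifying $X^\perp$ and checking that the vertices $e_k$ with $k\notin S$ land exactly in $A'$, so that $a_0$ is the right cone vertex; once the projection is pinned down, the positivity $\delta_{kl}+1/s>0$ is immediate.
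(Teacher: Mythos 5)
Your proof is correct and follows essentially the same route as the paper: both reduce to the criterion of \Thm{quad} and verify that the projected vertices, shifted by the base point, have pairwise nonnegative inner products. The only difference is cosmetic---you take $a_0$ to be the centroid of the face (so the off-face vertices already lie in $A'$ and the inner products come out as $\delta_{kl}+1/s$), whereas the paper takes $a_0=e_1$ and computes the projections explicitly, obtaining $1/k$.
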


\begin{theorem}[Log-partition over affinely independent outcomes]
\label{thm:log:partition}
If the set $\Omega$ is affinely independent then acute angles assumption
is satisfied for the log-partition cost.
\end{theorem}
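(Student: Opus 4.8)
The plan is to verify the acute-angles condition (Definition~\ref{def:acute}) directly for every face $X\subseteq\Omega$ and then invoke \Thm{main}; the engine of the argument is that affine independence forces each interior price vector to have a \emph{unique} representing distribution, which pins down the distributions along a Bregman perpendicular in closed form. I would first discard the degenerate faces: for $X=\emptyset$ the condition is trivial since $\witness(\emptyset)=\Re^n$, and for any $X$ with $\mu\in\aff(X)$ there is no admissible $\mu$-perpendicular, so the condition is vacuous; since $\mu\in\tM=\ri\M$ and $\Omega$ is affinely independent this leaves exactly the faces $\emptyset\neq X\subsetneq\Omega$, for which also $\Omega\wo X\neq\emptyset$. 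So fix such an $X$ and a $\mu$-perpendicular $\gamma$ to $A_0=\aff(X)$ at $q$ with $\mu\in\ri\M$, $q\in\pinv(\ri\M)$. For the log-partition cost $\dom C^*=\M$, so $\ri\dom C^*=\ri\M$, and by affine independence each $\nu\in\ri\M$ is the mean of a unique distribution $P_\nu$ on $\Omega$ (its barycentric coordinates), with $\pinv(\nu)=\set{q':\,P_{q'}=P_\nu}$.

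The key step is to compute $\gamma$ explicitly. Writing a generic point of $A_\lambda$ as $\lambda\mu+\sum_{x\in X}\gamma_x x$ with $\sum_x\gamma_x=1-\lambda$, and substituting the unique representation $\mu=\sum_{\omega\in\Omega}P_\mu(\omega)\,\omega$, one reads off that the barycentric coordinates of $\nu_\lambda$ are $\lambda P_\mu(\omega)$ for $\omega\in\Omega\wo X$; since $\nu_\lambda\in\ri\M$ whenever $\lambda$ lies in the domain $\Lambda$ of the perpendicular, uniqueness gives $P_{\nu_\lambda}(\omega)=\lambda P_\mu(\omega)$ on $\Omega\wo X$ — hence $P_{\nu_\lambda}(X)=1-\lambda\,P_\mu(\Omega\wo X)$ and $\Lambda\subseteq(0,\infty)$. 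By \Prop{perp:dual}, $p(q)\in\im\gamma$ — take $q$ itself as the witness in $q+A_0^\perp$ — say $p(q)=\nu_{\lambda_0}$; then $P_q=P_{\nu_{\lambda_0}}$, so $P_q(\omega)=\lambda_0 P_\mu(\omega)$ on $\Omega\wo X$.

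To conclude, take any $\nu'=\nu_\lambda\in\im\gamma$ with $\nu'\succeq p(q)$, i.e.\ $\lambda\ge\lambda_0$, and, via \Prop{perp:dual}, pick $q_\lambda\in\pinv(\nu_\lambda)\cap(q+A_0^\perp)$. Since $A_0^\perp=X^\perp$, it suffices to check $(q_\lambda-q)\inprod(\omega-x)\ge0$ for $\omega\in\Omega\wo X$, $x\in X$. From $P_{q_\lambda}(\omega)/P_q(\omega)=\exp\bigParens{(q_\lambda-q)\inprod\omega-C(q_\lambda)+C(q)}$ this is equivalent to $P_{q_\lambda}(\omega)/P_q(\omega)\ge P_{q_\lambda}(x)/P_q(x)$; the right-hand ratio is constant in $x\in X$ because $q_\lambda-q\in X^\perp$, hence equals $P_{q_\lambda}(X)/P_q(X)=P_{\nu_\lambda}(X)/P_q(X)$, while the left-hand ratio equals $\lambda P_\mu(\omega)/(\lambda_0 P_\mu(\omega))=\lambda/\lambda_0$. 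Substituting the formulas for $P_{\nu_\lambda}(X)$ and $P_q(X)$ and cross-multiplying (all denominators are positive), the inequality collapses exactly to $\lambda\ge\lambda_0$, which holds. Therefore $q_\lambda-q\in\witness(X)$, so $\pinv(\nu')\cap[q+\witness(X)]\neq\emptyset$, which is the acute-angles condition for $X$; \Thm{main} then gives budget additivity on $\tM$.

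I expect the main obstacle to be the middle paragraph: recognizing that affine independence makes the distribution behind the Bregman projection $\nu_\lambda$ coincide with $\lambda P_\mu$ on $\Omega\wo X$. Once that identity is in hand the acute-angles inequality reduces to a one-line computation; without it one is left wrestling with the genuinely nonlinear perpendicular (cf.\ \Fig{obtuse:perp}, right).
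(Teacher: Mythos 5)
Your proof is correct, and it verifies exactly what the paper verifies (the acute-angles condition face by face, via an explicit description of the log-partition perpendicular and the log-likelihood-ratio identity for $(q'-q)\inprod(\omega-x)$), but the way you obtain the description of the perpendicular is genuinely leaner than the paper's. The paper parametrizes the ambient space through the conditional means $\nu_X,\nu_{X^c}$ of $P_q$, pins down the coefficients of a point $\nu'\in\im\gamma$ by first producing $q'\in\pinv(\nu')\cap(q+X^\perp)$ and exploiting the constancy of $P_{q'}(x)/P_q(x)$ on $X$, and then proves the reverse inclusion (that the whole open segment between $\nu_X$ and $\nu_{X^c}$ is covered by $\im\gamma$) using the continuity theorem for perpendiculars together with a compactness argument on the interval decomposition of $\Lambda$. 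You bypass all of that: affine independence makes the coordinates on $\Omega\wo X$ of \emph{every} point of the slice $A_\lambda$ equal to $\lambda P_\mu(\omega)$, so the representing distribution along $\im\gamma$ is read off from membership in $A_\lambda$ alone, the ordering $\nu'\succeq p(q)$ translates directly into $\lambda\ge\lambda_0$, and the witness-cone inequality collapses to $\lambda/\lambda_0\ge\bigParens{1-\lambda P_\mu(\Omega\wo X)}/\bigParens{1-\lambda_0 P_\mu(\Omega\wo X)}$, i.e.\ to $\lambda\ge\lambda_0$ after cross-multiplication (your positivity checks on the denominators are in order, since $P_q$ and $P_{\nu_\lambda}$ have full support on $\ri\M$). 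What each route buys: the paper's argument yields the full two-sided characterization $\im\gamma=J_{(0,1)}$ of the perpendicular, which is more informative, whereas yours uses only the forward direction and therefore needs neither \Thm{perp:cont} nor any surjectivity step, making the proof shorter and self-contained; your handling of the degenerate faces ($X=\emptyset$, $\mu\in\aff(X)$, $X=\Omega$) is also slightly more explicit than the paper's. The closing appeal to \Thm{main} is not part of the statement being proved, but it is harmless.
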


\begin{theorem}[One-dimensional outcome spaces]
\label{thm:1d}
Acute angles hold for any cost function if $\M$ is a line segment.
\end{theorem}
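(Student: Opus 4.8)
The plan is to verify the acute-angles property (\Def{acute}) for every face $X\subseteq\Omega$; this is precisely the content of ``acute angles hold''. The first step is to list the faces of $\M$. Since $\M$ is a line segment, $\Omega$ lies on the line $L=\aff(\M)$, and a linear functional restricted to $L$ is either constant---whose set of maximizers over $\Omega$ is all of $\Omega$---or strictly monotone along $L$---whose unique maximizer over $\Omega$ is an endpoint of $\M$. Hence the only faces are $\emptyset$, the two singletons consisting of an endpoint of $\M$, and $\Omega$ itself (the degenerate case $\abs\Omega=1$ is trivial). Two of these are immediate. For $X=\emptyset$ we have $\witness(X)=\Re^n$, so $\pinv(\nu')\cap[q+\witness(X)]=\pinv(\nu')$, which is nonempty since $\nu'\in\im\gamma\subseteq\ri\dom C^*\subseteq\im p$. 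For $X=\Omega$ we have $\aff(X)=L\ni\mu$, so $\mu$ violates the requirement $a_1\in\aff(\dom C^*)\wo A_0$ in the definition of a perpendicular; no $\mu$-perpendicular to $X$ exists, and the condition is vacuous. Everything thus reduces to the endpoint singletons, and by symmetry it suffices to treat $X=\set{\omega_0}$ with $\omega_0$ an endpoint of $\M$.

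Next I would set up this case. Fix $\mu\in\tM$ (we may assume $\mu\neq\omega_0$, else there is no perpendicular) and $q\in\pinv(\tM)$, and let $\gamma$ be the $\mu$-perpendicular to $X$ at $q$. Because $\aff(X)=\set{\omega_0}$ is a single point, each $A_\lambda$ is the single point $\omega_0+\lambda(\mu-\omega_0)$, so $\nu_\lambda=\omega_0+\lambda(\mu-\omega_0)$ and $\im\gamma$ is a sub-segment of $L$ contained in $\ri\dom C^*$; in particular $p(q)=\nu_{\lambda_0}\in\im\gamma$ for the $\lambda_0$ with $\omega_0+\lambda_0(\mu-\omega_0)=p(q)$ (here using $p(q)\in\tM\subseteq\ri\dom C^*$). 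Since $\omega_0$ is an endpoint, write $\mu-\omega_0=s\,e$ with $s>0$, where $e$ is the unit vector from $\omega_0$ into $\M$; then every $\omega-\omega_0$ with $\omega\in\Omega$ is a nonnegative multiple of $e$, so $\witness(X)=\set{u\in\Re^n:\:u\inprod e\ge0}$ is a half-space. If $\nu'\in\im\gamma$ satisfies $\nu'\succeq p(q)$, i.e.\ $\nu'=\nu_\lambda$ with $\lambda\ge\lambda_0$, then $\nu'=p(q)+c\,e$ for some $c\ge0$ and $\nu'\in\ri\dom C^*$, so $\pinv(\nu')=\partial C^*(\nu')\neq\emptyset$. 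The condition to prove is that some $v\in\partial C^*(\nu')$ satisfies $(v-q)\inprod e\ge0$.

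The core of the argument is a monotonicity property of $C^*$ along $L$. By the standard duality between directional derivatives and subdifferentials of a convex function at a point of the relative interior of its domain, the one-sided directional derivative of $C^*$ at any $\nu\in\ri\dom C^*$ in the direction $e$ equals $\max\set{v\inprod e:\:v\in\partial C^*(\nu)}$, with the maximum attained. Then: (i) $q\in\partial C^*(p(q))$, so the subgradient inequality along $e$ gives $q\inprod e\le$ the directional derivative of $C^*$ at $p(q)$ in the direction $e$; and (ii) $t\mapsto C^*(p(q)+t\,e)$ is convex, hence has a nondecreasing right derivative, so the directional derivative of $C^*$ at $\nu'=p(q)+c\,e$ in the direction $e$ is at least that at $p(q)$. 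Combining (i) and (ii), $\max\set{v\inprod e:\:v\in\partial C^*(\nu')}\ge q\inprod e$, so a maximizing $v$ satisfies $(v-q)\inprod e\ge0$, i.e.\ $v\in\pinv(\nu')\cap[q+\witness(X)]$. This proves the acute-angles property for $X$, and therefore for every face.

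I expect the main obstacle to be conceptual rather than computational. For a vertex $X$ the witness cone $\witness(X)$ is a half-space rather than a lower-dimensional subspace, so the acute-angles condition is a monotonicity statement, not an orthogonality statement; recognizing that monotonicity of the one-sided directional derivative of the conjugate $C^*$ along the line $\aff(\M)$ is exactly what is needed is the key insight. The only technical care is to keep $\mu$, $p(q)$, and $\nu'$ inside $\ri\dom C^*$ so that the subdifferentials are nonempty and the directional-derivative/subdifferential identity (with attainment) applies---which is precisely why the hypotheses $\mu\in\tM$, $q\in\pinv(\tM)$, and $\nu'\in\im\gamma$ are invoked.
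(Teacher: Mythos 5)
Your proof is correct, and it performs the same reduction as the paper---the only faces of a segment are $\emptyset$, the two endpoint singletons, and $\Omega$; the last gives no perpendicular, the empty face is trivial, and for an endpoint singleton $X=\set{\omega_0}$ the witness cone is the half-space $\set{u:\,u\inprod e\ge 0}$---but your core step goes by a genuinely different, dual-side route. The paper finishes in two lines on the primal side: for \emph{any} $q'\in\pinv(\nu')$, monotonicity of the gradient of the convex function $C$ gives $(q'-q)\inprod\bigl(\nu'-p(q)\bigr)\ge 0$, and since $\nu'-p(q)$ is a nonnegative multiple of $e$ this places every preimage of $\nu'$ (for $\nu'\succ p(q)$; the case $\nu'=p(q)$ is trivial) inside $q+\witness(X)$. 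You instead argue through $C^*$: the subgradient inequality for $q\in\partial C^*(p(q))$, monotonicity of the right derivative of the convex one-dimensional function $t\mapsto C^*(p(q)+te)$, and the identification of the directional derivative at a relative-interior point with the attained maximum of $v\inprod e$ over $\partial C^*(\nu')$, which lets you select one suitable preimage. Both are sound; the paper's argument is shorter, never touches conjugates or directional derivatives, and yields the stronger conclusion that the whole fiber $\pinv(\nu')$ lies in the witness cone, whereas yours produces a single witness and leans on attainment of the supremum, which is the one delicate point of your route. Attainment does hold here, but it merits more than a bare appeal to a standard duality: because $\nu'\in\ri\dom C^*$ and $e$ is parallel to $\aff(\dom C^*)$, one has $\partial C^*(\nu')=S+\setL^\perp$ with $S$ compact and $\setL$ the linear space parallel to $\aff(\dom C^*)$ (the same decomposition the paper uses in its continuity argument for perpendiculars), so $v\mapsto v\inprod e$ is constant along $\setL^\perp$ and attains its maximum on $S$; with that observation supplied, your argument is complete.
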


Let $\Omega_1\subseteq\Re^{n_1}$ and $\Omega_2\subseteq\Re^{n_2}$ be outcome spaces with costs $C_1$ and $C_2$. We define the \emph{direct sum} of $\Omega_1$ and $\Omega_2$ to be
the outcome space $\Omega=\Omega_1\times\Omega_2$ with the cost $C:\Re^{n_1+n_2}\to\Re$
defined as
  $C(q_1,q_2)=C_1(q_1)+C_2(q_2)$.

\begin{theorem}[Acute angles for direct sums]
\label{thm:direct:sum}
If acute angles hold for $\Omega_1$ with cost $C_1$, and $\Omega_2$ with
cost $C_2$, then they also hold for their direct sum.
\end{theorem}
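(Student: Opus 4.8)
The plan is to reduce the acute‑angles condition for a face of $\Omega_1\times\Omega_2$ to the acute‑angles conditions for its two factors, exploiting the separability of the direct‑sum cost. First I would record the standard consequences of $C(q_1,q_2)=C_1(q_1)+C_2(q_2)$: we get $C^*(\nu_1,\nu_2)=C_1^*(\nu_1)+C_2^*(\nu_2)$, hence $\dom C^*=\dom C_1^*\times\dom C_2^*$, $\ri\dom C^*=\ri\dom C_1^*\times\ri\dom C_2^*$, $p(q_1,q_2)=(p_1(q_1),p_2(q_2))$, $p^{-1}(\nu_1,\nu_2)=p_1^{-1}(\nu_1)\times p_2^{-1}(\nu_2)$, and $D\bigl((q_1,q_2),(\nu_1,\nu_2)\bigr)=D_1(q_1,\nu_1)+D_2(q_2,\nu_2)$. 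Also $\M=\M_1\times\M_2$, the direct sum inherits the three standing assumptions, and $\tM\subseteq\tM_1\times\tM_2$ (possibly strict, but this direction suffices). The nonempty faces of $\Omega_1\times\Omega_2$ are exactly the products $X_1\times X_2$ of nonempty faces $X_i$ of $\Omega_i$ (maximizing a linear functional separates over the two coordinate blocks), and one checks directly, by choosing $\omega_i$ in the respective face, that $\witness(X_1\times X_2)=\witness(X_1)\times\witness(X_2)$. The empty face is trivial: there is no $\mu$‑perpendicular to it, so acute angles hold vacuously (and in any case the conclusion reduces to $\nu'\in\im p$).

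Next I would decompose the perpendicular. Fix a face $X=X_1\times X_2$, a belief $\mu=(\mu_1,\mu_2)\in\tM$, a state $q=(q_1,q_2)\in p^{-1}(\tM)$ (so $\mu_i\in\tM_i$ and $q_i\in p_i^{-1}(\tM_i)$), and a $\mu$‑perpendicular $\gamma$ to $X$ at $q$; recall this requires $p(q)\in A:=\aff(\aff(X)\cup\set{\mu})$. The key, and somewhat delicate, point is that $A$ is in general \emph{not} the product of the two component ambient spaces (it is one dimension smaller), yet each parallel slice still factors: $A_\lambda=A^{(1)}_\lambda\times A^{(2)}_\lambda$ with $A^{(i)}_\lambda=\aff(X_i)+\lambda(\mu_i-a_0^i)$. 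By separability of $D$ the Bregman projection then factors, $\nu_\lambda=(\nu^{(1)}_\lambda,\nu^{(2)}_\lambda)$ with $\nu^{(i)}_\lambda=\argmin_{\nu\in A^{(i)}_\lambda}D_i(q_i,\nu)$, and $\Lambda=\set{\lambda:\nu^{(1)}_\lambda\in\ri\dom C_1^*\text{ and }\nu^{(2)}_\lambda\in\ri\dom C_2^*}$. For a block with $\mu_i\notin\aff(X_i)$ we have $\nu^{(i)}_\lambda=\gamma_i(\lambda)$, where $\gamma_i$ is the $\mu_i$‑perpendicular to $X_i$ at $q_i$ (well defined since $p(q)\in A$ forces $p_i(q_i)\in\aff(\aff(X_i)\cup\set{\mu_i})$), and $\Lambda\subseteq\dom\gamma_i$. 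For a block with $\mu_i\in\aff(X_i)$ (in particular $X_i=\Omega_i$), $A^{(i)}_\lambda=\aff(X_i)$ is constant, so $\nu^{(i)}_\lambda$ is a fixed point; and $p(q)\in A$ forces $p_i(q_i)\in\aff(X_i)$, so that fixed point equals $p_i(q_i)$. Finally, since $q\in p^{-1}(p(q))\cap(q+\aff(X)^\perp)$, \Prop{perp:dual} gives $p(q)\in\im\gamma$, so $p(q)=\nu_{\lambda_0}$ for some $\lambda_0\in\Lambda$, and hence $p_i(q_i)=\nu^{(i)}_{\lambda_0}$ in every block.

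Now I would verify acute angles for $X$. Take $\nu'=\nu_{\lambda'}\in\im\gamma$ with $\nu'\succeq p(q)$, i.e.\ $\lambda'\ge\lambda_0$; write $\nu'=(\nu'_1,\nu'_2)$. For each block I produce $q'_i$ with $p_i(q'_i)=\nu'_i$ and $q'_i-q_i\in\witness(X_i)$: if $\mu_i\notin\aff(X_i)$, then $\lambda_0,\lambda'\in\dom\gamma_i$ and $\nu'_i=\gamma_i(\lambda')\succeq\gamma_i(\lambda_0)=p_i(q_i)$, so the acute‑angles property of $X_i$ (with belief $\mu_i$ and base point $q_i$) yields such a $q'_i$; if $\mu_i\in\aff(X_i)$, then $\nu'_i=p_i(q_i)$ and $q'_i=q_i$ works since $0\in\witness(X_i)$. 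Setting $q'=(q'_1,q'_2)$ gives $p(q')=\nu'$ and $q'-q\in\witness(X_1)\times\witness(X_2)=\witness(X)$, so $p^{-1}(\nu')\cap[q+\witness(X)]\ne\emptyset$. As this holds for every face $X$ of $\Omega_1\times\Omega_2$, \Thm{main} yields budget additivity of the direct sum on $\tM$, as claimed.

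The main obstacle is precisely the structure of the product perpendicular: the ambient space $A$ does not factor, so one cannot simply assert that ``the product perpendicular is the product of the perpendiculars.'' What rescues the argument is that the individual parallel slices $A_\lambda$ \emph{do} factor, together with the observation that in a degenerate block ($\mu_i\in\aff(X_i)$, where the component perpendicular is undefined) the requirement $p(q)\in A$ pins the $i$‑th projection to the constant value $p_i(q_i)$, allowing the choice $q'_i=q_i$ there. Once this is in place, matching the domains $\Lambda$ against $\dom\gamma_i$ and the induced total orders is routine.
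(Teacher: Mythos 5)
Your proposal is correct and takes essentially the same route as the paper's proof: establish the product structure of $\M$ and its faces, use separability of the Bregman divergence so that the projection onto each parallel slice $A_\lambda$ factors into the two blocks, apply the componentwise acute-angles hypothesis to get $q_i'\in p_i^{-1}(\nu_i')\cap[q_i+\witness(X_i)]$, and reassemble $q'=(q_1',q_2')$ with $q'-q\in\witness(X)$ via the additive inner-product decomposition. If anything, you are slightly more careful than the paper, which glosses over the degenerate block with $\mu_i\in\aff(X_i)$ (where the component perpendicular is undefined); your observation that this block is pinned to $p_i(q_i)$, so $q_i'=q_i$ suffices, cleanly covers that case.
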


As a direct consequence of this theorem, we obtain that the log-partition
cost function satisfies the acute angles assumption on a hypercube. More generally,
any direct sum of costs on line segments satisfies the acute angles. This means that all
cost-based prediction markets consisting of independent binary questions are budget additive, regardless of
costs used to price individual questions.

As mentioned in the introduction, a vast number of deployed cost-based prediction
markets consists of independent questions (not necessarily binary), each priced according
to an LMSR (i.e., a log-partition cost on a simplex).
Theorems~\ref{thm:log:partition} and~\ref{thm:direct:sum} imply that this industry standard
is budget additive.


\subsection{Proof of \Thm{main}}
\label{sec:proof}
In this section we sketch the proof of \Thm{main} (for
a complete proof see \App{proof} of the full version).
We proceed in several steps. Let $\nu_0=p(q_0)$.
Assuming acute angles, we begin by constructing an oriented curve $L$
joining $\nu_0$ with $\mu$, by sequentially choosing portions of perpendiculars
for monotonically decreasing active sets. We then show
that budget additivity holds for any solutions with prices in $L$,
and finally show that the curve $L$
is the locus of the optimal prices of solutions $\hQ(q_0)$, as
well as optimal prices of solutions $\hQ(q)$ for any $q\in\hQ(q_0)$.

\textbf{Part 1: Construction of the solution path $L$.}
In this part, we construct:
\begin{itemize}[topsep=0pt]
\item a sequence of prices $\nu_0,\nu_1,\dotsc,\nu_k$ with $\nu_0=p(q_0)$ and $\nu_k=\mu$
\item a sequence of oriented curves $\ell_0,\dotsc,\ell_{k-1}$ where each $\ell_i$ goes from $\nu_i$ to $\nu_{i+1}$
\item a monotone sequence of sets $\Omega\supseteq X_0\supset X_1\supset\dotsb\supset X_k=\emptyset$, such that
      the following \emph{minimality} property holds: $X_i$ is the minimal face for
      all $\nu\in(\im\ell_i)\wo\set{\nu_{i+1}}$ for $i\le k-1$,
      and $X_k$ is the minimal face for $\nu_k$.
\item a sequence of states $q_1,\dotsc,q_{k-1}$ such that $q_i\in\pinv(\nu_i)\cap[q_{i-1}+\witness(X_{i-1})]$
\end{itemize}
The curves $\ell_i$ will be referred to as \emph{segments}.
The curve obtained by concatenating the segments $\ell_0$ through $\ell_{k-1}$ will
be called the \emph{solution path} and denoted $L$. In the special case
that $\nu_0=\mu$, we have $k=0$, $X_0=\emptyset$ and $L$ is a degenerate curve
with $\im L=\set{\mu}$.

If $\nu_0\ne\mu$, we construct the sequence of segments iteratively.
Let $X_0\ne\emptyset$ be the minimal face such that $\nu_0\in \conv(X_0\cup\set{\mu})$.
By the minimality, $\mu\not\in\aff(X_0)$.
Let $\gamma$ be the $\mu$-perpendicular to $\aff(X_0)$ at $q_0$. The curve $\gamma$ passes through $\nu_0$ and eventually reaches the boundary of $\conv(X_0\cup\set{\mu})$ at some $\nu_1$ by continuity of $\gamma$ (see \Thm{perp:cont}). Let segment $\ell_0$ be the portion of $\gamma$ going from $\nu_0$ to $\nu_1$.

This construction gives us the first segment $\ell_0$. There are two possibilities:
\begin{enumerate}[topsep=0pt]
\item $\nu_1=\mu$; in this case we are done;
\item $\nu_1$ lies on a lower-dimensional face of $\conv(X_0\cup\set{\mu})$; in this case, we pick
    some $q_1\in\pinv(\nu_1)\cap [q_0+\witness(X_0)]$, which can be done by the acute angles assumption, and use the above construction again, starting with $q_1$, and obtaining a new set $X_1 \subset X_0$ and a new segment $\ell_1$; and iterate.
\end{enumerate}
The above process eventually ends, because with each iteration, the size of the active set decreases.
This construction yields monotonicity of $X_i$ and the minimality property.

The above construction yields a specific sequence of $q_i\in\pinv(\nu_i)\cap[q_{i-1}+\witness(X_{i-1})]$. We show in \App{proof} of the full version that actually $q_i\in\pinv(\nu_i)\cap(q_0+X_{i-1}^\perp)$ and that the construction of $L$ is
independent of the choice of $q_1,q_2,\dotsc,q_{k-1}$.



\textbf{Part 2: Budget additivity for points on $L$.}
Let $\nu,\nu'\in\im L$ such that $\nu\preceq\nu'$. Let $q\in\hQ(\nu;q_0)$
and $q'\in\hQ(\nu';q)$ such that $q\in\hQ(B;q_0)$ and $q'\in\hQ(B';q)$. In this part we show that $q'\in\hQ(B+B';q_0)$.

First, consider the case that $\nu'=\mu$. To see that $q'\in\hQ(B+B';q_0)$, first note that the constraints of
\CP\ hold, because
$
    U(q',\omega;q_0)=U(q',\omega;q)+U(q,\omega;q_0)\ge -B'-B
$
for all $\omega$
by path independence of the utility function. As noted in the introduction, in the absence
of constraints, the utility $U(\bq,\mu;q_0)$ is maximized at any $\bq$ with $p(\bq)=\mu$. Thus,
$q'$ is a global maximizer of the utility and satisfies the constraints, so $q'\in\hQ(B+B';q_0)$.
If $\nu=\mu$, we must also have $\nu'=\mu$ and the statement holds by previous reasoning.

In the remainder, we only analyze the case $\nu\preceq\nu'\prec\mu$.
This means that $\nu\in(\im\ell_i)\wo\set{\nu_{i+1}}$ and $\nu'\in(\im\ell_j)\wo\set{\nu_{j+1}}$ for $i\le j$. By \Thm{KKT:final}, we therefore
must have $q\in[q_0+\witness(X_i)]$ and $q'\in[q+\witness(X_j)]$. By anti-monotonicity
of witness cones, $\witness(X_j)\supseteq\witness(X_i)$ and hence, $q'\in[q_0+\witness(X_j)]$, yielding
$
   q'\in\hQ(\nu';q_0).
$

We now argue that the budgets add up.
Let $x\in X_j\subseteq X_i$. By \Lem{KKT:initial}, we obtain that
$q\in\hQ(B;q_0)$ for
   $B=-U(q,x;q_0)$, and
$q'\in\hQ(B';q)$ for
   $B'=-U(q',x;q)$, and finally
$q'\in\hQ(\bB;q_0)$ for
  $ \bB=-U(q',x;q_0)$.
%
%
However, by path independence of the utility function
\[
   \bB=-U(q',x;q_0)
      =-U(q',x;q)-U(q,x;q_0)
      =B'+B.
\]

\textbf{Part 3: $L$ as the locus of all solutions.}
See \App{proof} of the full version for the proof that
\[
   \textstyle \hQ(q_0) = \bigcup_{\nu\in\im L} \hQ(\nu;q_0)
\enspace.
\]

\textbf{Part 3': $L$ as the locus of solutions starting at a midpoint.}
Let $\nu\in\im L$ and $q\in\hQ(\nu;q_0)$. Since $\hQ(\nu;q_0)\subseteq\pinv(\nu)\cap(q_0+X_\nu^\perp)$,
Part 1' (\App{proof} of the full version) yields that the solution path $L'$ for $q$ coincides with the portion of
$L$ starting at $\nu$. Applying the proof of Part 3 to $L'$, we obtain
\[
  \textstyle  \hQ(q) = \bigcup_{\nu'\in\im L: \nu'\succeq\nu} \hQ(\nu';q)
\enspace.
\]

\textbf{Part 4: Proof of the theorem.}
Let $B,B'\ge0$ and $q\in\hQ(B; q_0)$ and $q'\in\hQ(B';q)$. From Parts 3 and 3', we know that $q\in\hQ(\nu;q_0)$ and $q'\in\hQ(\nu';q)$ for some $\nu,\nu'\in\im L$ such that $\nu\preceq\nu'$.
By Part 2, we therefore obtain that $q'\in\hQ(B+B';q_0)$, proving the theorem.

\newpage
\bibliographystyle{plainnat}
\bibliography{prediction-markets}

\newpage
\appendix
\section{EXAMPLES OF COST FUNCTIONS}
\label{app:scoring:rules}

\begin{example}[Quadratic cost]
\label{ex:quad}
The first example of a cost function, applicable to arbitrary outcome sets $\Omega$, is the quadratic cost
function defined by $C(q)=\frac12\norm{q}^2$. In this case, $p(q)=q$, and
$U(q',\mu;q) = \frac12\norm{q-\mu}^2 - \frac12\norm{q'-\mu}^2$. It is clear that the expected utility is maximized
when $p(q')=q'=\mu$.

\emph{Convex conjugate and Bregman divergence:}
$C^*(\nu)=\frac12\norm{\nu}^2$ and $D(q,\nu)=\frac12\norm{q-\nu}^2$,
i.e., the Bregman divergence is a monotone transformation of the Euclidean distance.
\end{example}

\begin{example}[LMSR]
\label{ex:lmsr}
Our second example is Hanson's logarithmic market-scoring rule (LMSR)~\citep{Hanson03,Hanson07}, which is applied to
\emph{complete markets} whose outcomes coincide with basis vectors, i.e.,
$\Omega=\set{e_i:\:i\in [n]}$ where $e_i$ denotes the $i$-th basis vector and $[n]$
denotes the set $\set{1,\dotsc,n}$.
In this case $\M$ is the simplex in $\Re^n$ and beliefs $\mu$ are in one-to-one correspondence
with probability distributions over $\Omega$. The LMSR cost function is
\[
\textstyle
  C(q) = \ln\Parens{\sum_{i=1}^n e^{q[i]}}
\]
where $q[i]$ denotes the $i$-th coordinate of $q$. The price vector is
\[
  p(q)[i] = \frac{\partial C(q)}{\partial q[i]}
          =  \frac{e^{q[i]}}{\sum_{j=1}^n e^{q[j]}}
              = e^{q[i]-C(q)}
\enspace.
\]
For $\mu\in\M$, the expected utility function takes form
\begin{align*}
  U(q',\mu;q) & = \sum_{i=1}^n \mu[i]
  \BigParens{\ln p(q')[i] - \ln p(q)[i]} \\
             & = \KL(\mu\Vert p(q)) - \KL(\mu\Vert p(q'))
\enspace,
\end{align*}
where $\KL(\mu\Vert\nu)=\sum_{i=1}^n \mu[i]\ln(\mu[i]/\nu[i])$ is the KL-divergence.
KL-divergence is not symmetric, but it is non-negative, and zero only if the arguments are equal.
Thus, the expected utility is clearly maximized if and only if $\mu=p(q')$.

\emph{Convex conjugate and Bregman divergence:}
$C^*(\nu)=\infty$ if $\nu$ is not a probability measure on $\Omega$, and
$C^*(\nu)=\sum_{i=1}^n \nu[i]\ln\nu[i]$ otherwise, with the convention $0\ln 0=0$.
The Bregman divergence is
$D(q,\nu)=\KL(\nu\Vert p(q))$.
\end{example}

\begin{example}[Log-partition cost]
\label{ex:log-partition}
Next example is the log-partition function, which is applicable to arbitrary
outcome sets $\Omega$ and which
generalizes LMSR:
\[
\textstyle
  C(q) = \ln\Parens{\sum_{\omega\in\Omega} e^{q\inprod\omega}}
\enspace.
\]
Let $P_q$ be the probability measure over $\Omega$ defined by
\[
  P_q(\omega)=e^{q\inprod\omega-C(q)}
\enspace.
\]
The prices then correspond
to expected values of $\omega$ under $P_q$:
\[
  p(q) = \sum_{\omega\in\Omega} P_q(\omega)\omega
\enspace.
\]
For $\mu\in\M$, let $P_\mu$ denote the distribution of maximum entropy among $P$ with $\E_P[\omega]=\mu$ (this distribution is unique and always exists). Note that we are overloading notation on $P_q$ and $P_\nu$ and use the ``type'' of the subscript to
indicate which probability distribution we have in mind. The
expected utility function can be written as
\begin{align*}
  U(q',\mu;q) &
  = (q'-q)\inprod\E_{\omega\sim P_\mu}[\omega]-C(q')+C(q)
\\&{}
  = \E_{\omega\sim P_\mu}\Bracks{\ln P_{q'}(\omega) - \ln P_{q}(\omega)} \\
  &{}= \E_{\omega\sim P_\mu}\Bracks{\ln\Parens{\frac{P_\mu(\omega)}{P_{q}(\omega)}}
                             - \ln\Parens{\frac{P_\mu(\omega)}{P_{q'}(\omega)}}}
\\&{}
  = \KL(P_\mu\Vert P_q) - \KL(P_\mu\Vert P_{q'})
\enspace.
\end{align*}
A standard duality result shows that the infimum of $\KL(P_\mu\Vert P_{q'})$ over
the set $\set{P_{q'}:q'\in\Re^n}$ is zero. If there exists $q'$ attaining
this minimum, we must have $P_\mu=P_{q'}$ and thus $\mu=p(q')$. We argue
that the converse is true as well. Let $q',q''$ be such that $P_{q'}=P_\mu$ and $p(q')=p(q'')=\mu$. Then
by convexity of $C$, we have $C(q'')-C(q')=(q''-q')\cdot p(q')$. Therefore,
\[
   \KL(P_{q'}\Vert P_{q''})=(q'-q'')\inprod p(q') - C(q') + C(q'')=0
\enspace,
\]
i.e., $P_{q'}=P_{q''}=P_\mu$. Hence, for any $q\in\Re^n$,
$P_q$ is exactly the distribution
of maximum entropy among those $P$ that satisfy $\E_P[\omega]=p(q)$. In other words,
$P_{p(q)}=P_q$.

\emph{Convex conjugate and Bregman divergence:}
$C^*(\nu)=\infty$ if there is no distribution $P$ on $\Omega$ such that $\E_P[\omega]=\nu$,
 and $C^*(\nu)=\sum_{\omega\in\Omega} P_\nu(\omega)\ln P_\nu(\omega)$ otherwise.
 The Bregman divergence
$D(q,\nu)=\KL(P_\nu\Vert P_q)=\KL(P_\nu\Vert P_{p(q)})$.
\end{example}

\section{PROOF OF THEOREM 2.2}
\label{app:budget:constraints}

\begin{proof}[Proof of \Thm{budget:constraints}]
Throughout this proof we use concepts of convex conjugacy and Bregman divergence
introduced in \Sec{convex}.
Let $B\ge 0$ and $\B \coloneqq \set{q: U(q,\omega;q_0) \ge -B\text{ for all }\omega\in\Omega}$ be the set of states satisfying the constraints of \CP.
Using the definition of utility function, we can rewrite \CP\ as
\begin{equation}
\label{eq:primal}
   \max_{q\in\Re^n}~
\Bracks{
    (q-q_0)\inprod\mu - C(q)+C(q_0)
    -\ind_{\B}(q)
}
\end{equation}
where $\ind_{\B}(\cdot)$ is the convex indicator function, equal to $0$ on the set
$\B$ and $\infty$ outside it. Since the cost function
$C$ is convex on $\Re^n$, and $\B$ is closed, convex
and non-empty,
Fenchel's Duality Theorem~\citep[][Theorem 31.1]{Rockafellar70} implies that the
supremum of the above objective equals the following minimum
\begin{equation}
\label{eq:dual}
\min_{\nu\in\Re^n}~
\Bracks{
   C^*(\nu)-q_0\inprod\mu+C(q_0)
   +\ind_{\B}^*(\mu-\nu)
}
\end{equation}
and this minimum is attained at some $\hnu\in\Re^n$.
Now, let $\hq\in\hQ(B;q_0)$ be a solution of \Eq{primal}.
By Fenchel's Duality, the gap between the objectives
of \Eq{dual} and \Eq{primal} at $\hnu$ and $\hq$ must be
zero:
\begin{align*}
0
&{}=
   C^*(\hnu)-q_0\inprod\mu+C(q_0)
   +\ind_{\B}^*(\mu-\hnu) \\
&{} \quad -(\hq-q_0)\inprod\mu + C(\hq)-C(q_0)
    +\ind_{\B}(\hq)
\\
&{}=
   C^*(\hnu)-\hq\inprod\hnu + C(\hq)
   +\ind_{\B}^*(\mu-\hnu)-\hq\inprod(\mu-\hnu) \\
&{} \quad +\ind_{\B}(\hq)
\\
&{}=
   D(\hq,\hnu) +\Bracks{\ind_{\B}^*(\mu-\hnu)
   -\BigParens{\hq\inprod(\mu-\hnu)-\ind_{\B}(\hq)}}
\enspace.
\end{align*}
The term in the brackets is non-negative from the definition of the convex conjugate.
Since $D(\hq,\hnu)$ is also non-negative, we obtain that it must be zero,
i.e., $p(\hq)=\hnu$. Since this reasoning is independent of the choice
$\hq\in\hQ(B;q_0)$, the theorem follows.
\end{proof}

\section{OPTIMALITY CONDITIONS AND THE MINIMAL FACE}
\label{app:proof:KKT}

This appendix discusses arbitrage-free initialization,
provides proofs of optimality conditions (\Lem{KKT:initial} and \Thm{KKT:final}),
and shows that minimal faces are well defined.

\subsection{Arbitrage-free initialization}
\label{app:arb:free}

Throughout the paper we assume that the initial state $q_0$ is \emph{arbitrage-free}
in the sense that a trader with no budget prefers to stay in $q_0$:
\begin{definition}
We say that the initial state $q_0$ is arbitrage-free with respect to $\mu\in\M$
if $q_0$ is an optimal state at budget zero, i.e., $q_0\in\hQ(0;q_0)$.
\end{definition}
This corresponds to the assumption that
a trader cannot extract a positive expected profit without risking
some capital. Below we show that the condition is easily ensured for both the log-partition
and quadratic cost.

The assumption of arbitrage-free initialization was added after the paper was
published in the Proceedings of the 31st Conference on Uncertainty in
Artificial Intelligence, 2015.
For the sake of consistency with the published
version, we have only made corrections in Appendix. Without arbitrage-free initialization,
\Lem{KKT:initial} and its corollaries fail to hold for $B=0$.
The main text requires the following four corrections:
\begin{description}
\item[\Lem{KKT:initial}:] Let $q_0$ be arbitrage-free. Then $q\in\hQ(B;q_0)$ if and only if \dots
\medskip
\item[\Thm{KKT:final}:] Let $q_0$ be arbitrage-free. Then
                        $q\in\hQ(q_0)$ if and only if \dots
\medskip
\item[\Cor{budget:nu}:] Let $q_0$ be arbitrage-free. Then
                        $\hQ(\nu;q_0) = \pinv(\nu)\cap[q_0+\witness(X_\nu)]$. In particular, \dots
\medskip
\item[\Def{budget:additivity}:] \dots\
                        if for all beliefs $\mu\in\M'$ and
                        all arbitrage-free initial states $q_0\in\pinv(\M')$ \dots
\end{description}

%
Below we show that regardless of $q_0$, any $q\in\hQ(q_0)$ is arbitrage-free. This
property is useful in proving our main result (\Thm{main}).
We also derive necessary and sufficient conditions for arbitrage-free initialization.
In particular, we show that any $q_0$ is arbitrage-free for the log-partition
cost, and any $q_0$ with $p(q_0)\in\M$ is arbitrage-free for the quadratic cost.
\begin{proposition}
\label{prop:init:hQ}
Any $q\in\hQ(q_0)$ is arbitrage-free for any $q_0\in\Re^n$.
\end{proposition}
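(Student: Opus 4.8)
The statement to prove is that any optimal state $q\in\hQ(q_0)$ is arbitrage-free, i.e.\ $q\in\hQ(0;q)$: starting afresh at $q$ with zero budget, no trade beats staying put. The plan is a short composition argument via path independence. Fix $\mu\in\M$ (the belief defining the Convex Program) and, since $q\in\hQ(q_0)=\bigcup_{B\ge0}\hQ(B;q_0)$, pick $B\ge0$ with $q\in\hQ(B;q_0)$.

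First I would record two trivial facts. The state $q$ is feasible for \CP\ started at $q$ with budget $0$, because $U(q,\omega;q)=0\ge 0$ for all $\omega\in\Omega$, and its objective value there is $U(q,\mu;q)=0$. So it suffices to show that no feasible $q'$ for this zero-budget program has $U(q',\mu;q)>0$; then $q$ is a maximizer and hence arbitrage-free (this also disposes of any concern about attainment of the supremum, since $q$ itself attains the value $0$).

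Next, suppose toward a contradiction that some $q'$ satisfies $U(q',\omega;q)\ge 0$ for all $\omega\in\Omega$ and $U(q',\mu;q)>0$. I would then invoke path independence of the expected utility, $U(q',\cdot;q_0)=U(q',\cdot;q)+U(q,\cdot;q_0)$, valid for both the payoff argument $\omega$ and the belief $\mu$ since $\mu\in\M$. For feasibility at $q_0$ with budget $B$: $U(q',\omega;q_0)=U(q',\omega;q)+U(q,\omega;q_0)\ge 0+(-B)=-B$ for every $\omega$, using that $q\in\hQ(B;q_0)$ satisfies the constraints of \CP. For the objective: $U(q',\mu;q_0)=U(q',\mu;q)+U(q,\mu;q_0)>U(q,\mu;q_0)$. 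Thus $q'$ is a strictly better feasible point than $q$ for \CP\ started at $q_0$ with budget $B$, contradicting $q\in\hQ(B;q_0)$. Hence no such $q'$ exists, so $q\in\hQ(0;q)$, i.e.\ $q$ is arbitrage-free, completing the proof.

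There is essentially no serious obstacle here; the only point requiring a moment of care is making sure we do not need the zero-budget program at $q$ to have its optimum attained by a maximizer other than $q$ — but since $q$ is feasible with value $0$ and we have just shown no feasible point exceeds value $0$, $q$ itself is an optimal solution, which is exactly the definition of arbitrage-free.
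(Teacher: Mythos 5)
Your proof is correct and follows essentially the same route as the paper's: both rely on path independence to show that a zero-budget deviation at $q$ composes with the budget-$B$ trade from $q_0$ into a feasible point of \CP\ at $(B,q_0)$, and then invoke the optimality of $q$ there to rule out any strict improvement. The only cosmetic difference is that the paper argues with an optimal $q'\in\hQ(0;q)$ while you argue by contradiction with an arbitrary feasible strictly-better $q'$, which is the same idea.
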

\begin{proof}
Let $q\in\hQ(B;q_0)$ for some $B\ge 0$ and let $q'\in\hQ(0;q)$. We
will use path independence to show that also $q\in\hQ(0;q)$.
To begin, note that the budget constraints satisfied by $q$ and $q'$ are
\begin{align*}
   &
   U(q,\omega; q_0)\ge -B
   &&
   \text{for all $\omega\in\Omega$,}
\\
   &
   U(q',\omega; q)\ge 0
   &&
   \text{for all $\omega\in\Omega$.}
\end{align*}
By path independence, we therefore have, for all $\omega\in\Omega$,
\[
  U(q',\omega;q_0) = U(q',\omega;q) + U(q,\omega;q_0) \ge -B
\enspace,
\]
so $q'$ satisfies the budget constraints for $B$ at the initial state $q_0$.
Since $q$ is optimal for the budget $B$ and the initial state $q_0$, we must have
\[
  U(q',\mu; q_0) \le U(q,\mu;q_0)
\enspace.
\]
Path independence then gives
\[
  U(q',\mu; q) = U(q',\mu; q_0) - U(q,\mu;q_0) \le 0 = U(q,\mu; q)
\;.
\]
Since $q'\in\hQ(0;q)$ and $q$ is a feasible action for the budget zero
and the initial state $q$, we must have $U(q',\mu;q)=0=U(q,\mu;q)$ and thus $q\in\hQ(0;q)$.
\end{proof}
\begin{proposition}
\label{prop:init:necessary}
If $q_0$ is arbitrage-free then $p(q_0)\in\M$.
\end{proposition}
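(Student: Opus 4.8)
The plan is to prove the statement by contradiction: suppose $q_0$ is arbitrage-free with respect to $\mu\in\M$ but $\nu_0\coloneqq p(q_0)=\nabla C(q_0)\notin\M$. I will exhibit a state that is feasible for the budget-zero instance of \CP\ with initial state $q_0$ and strictly improves on the objective value $U(q_0,\mu;q_0)=0$, contradicting $q_0\in\hQ(0;q_0)$. Since $\M=\conv(\Omega)$ is a nonempty compact convex set and $\nu_0\notin\M$, the separating hyperplane theorem gives a direction $d\in\Re^n$ with $d\inprod\nu_0>d\inprod\mu'$ for all $\mu'\in\M$; in particular $d\inprod\nu_0>d\inprod\omega$ for every $\omega\in\Omega$, and (because $\mu\in\M$) also $d\inprod\nu_0>d\inprod\mu$. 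Let $\delta\coloneqq\min_{\omega\in\Omega}(d\inprod\nu_0-d\inprod\omega)$, which is strictly positive since $\Omega$ is finite.

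Next I would perturb $q_0$ along $-d$. For $t>0$ set $q_t\coloneqq q_0-td$. Because $C$ is differentiable at $q_0$ with $\nabla C(q_0)=\nu_0$, we have $C(q_t)=C(q_0)-t\,(d\inprod\nu_0)+o(t)$ as $t\to 0^+$, so for every $\omega\in\Omega$,
\[
   U(q_t,\omega;q_0)=(q_t-q_0)\inprod\omega-C(q_t)+C(q_0)=t\bigl(d\inprod\nu_0-d\inprod\omega\bigr)+o(t),
\]
where crucially the error term $o(t)$ equals $-C(q_t)+C(q_0)+t\,(d\inprod\nu_0)$ and hence does not depend on $\omega$. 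Therefore there is $t^\star>0$ such that for all $0<t<t^\star$ and all $\omega\in\Omega$ we have $U(q_t,\omega;q_0)\ge\tfrac12 t\delta>0$, i.e.\ $q_t$ satisfies every budget-zero constraint. The same expansion, applied with $\mu$ in place of $\omega$, gives $U(q_t,\mu;q_0)=t\,(d\inprod\nu_0-d\inprod\mu)+o(t)$, which is strictly positive for small enough $t$, whereas $U(q_0,\mu;q_0)=0$. Thus $q_0$ is not a maximizer of \CP\ at budget $0$, the desired contradiction; hence $p(q_0)\in\M$.

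The argument is short and I do not anticipate a genuine obstacle. The two points that need a little care are: (i) making a single choice of $t$ work simultaneously for all constraints, which is legitimate precisely because $\Omega$ is finite and the first-order error term in the expansion of $C$ is common to all of them; and (ii) ensuring the separation of $\nu_0$ from $\M$ can be taken strict and uniform over all of $\M$, which holds because $\M$ is compact. Note that no properties of $C^*$ or of the Bregman divergence are needed, so this proof does not depend on any of the later convex-analytic machinery.
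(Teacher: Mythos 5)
Your argument is correct, but it takes a genuinely different route from the paper. The paper proves the same contrapositive by a duality argument: it evaluates $\sup_{q}\min_{\nu\in\M}U(q,\nu;q_0)$ via Sion's minimax theorem and the conjugate $C^*$, identifies this value as $\min_{\nu\in\M}D(q_0,\nu)$, and uses positivity of the Bregman divergence (since $p(q_0)\ne\hnu$) to conclude that some $q'$ earns a uniformly positive utility against every $\omega\in\Omega$, contradicting optimality of $q_0$ at budget zero. You instead strictly separate $\nu_0=p(q_0)$ from the compact set $\M$ and perturb to $q_t=q_0-td$, using only differentiability of $C$ at $q_0$ to get $U(q_t,\omega;q_0)=t\bigl(d\inprod\nu_0-d\inprod\omega\bigr)+o(t)$ with an $\omega$-independent error term, so a single small $t$ makes all budget-zero constraints hold with slack while $U(q_t,\mu;q_0)>0$; the finiteness of $\Omega$ and strict separation from $\mu\in\M$ are exactly the points that need (and receive) care. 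Your proof is more elementary and self-contained, avoiding Sion's theorem, $C^*$, and the Bregman machinery entirely; the paper's approach is less local but yields the quantitative byproduct that the maximal risk-free profit available at $q_0$ equals $\min_{\nu\in\M}D(q_0,\nu)$, which fits naturally into the divergence framework used throughout the paper. Both constructions produce a feasible zero-budget deviation with strictly positive expected utility, which is the required contradiction with $q_0\in\hQ(0;q_0)$.
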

\begin{proof}
For a contradiction, assume $p(q_0)\not\in\M$. We will show that there exists
a state $q'$ such that $U(q',\nu;q_0)\ge\eps$ for all $\nu\in\M$ and
some $\eps>0$. This will imply that the budget constraints for $B=0$ are satisfied
for $q'$ and also that $U(q',\mu;q_0)\ge\eps$. This contradicts
the optimality of $q_0$ because $U(q_0,\mu;q_0)=0$.

To proceed, consider the optimization
\begin{align}
\notag
   &\adjustlimits\sup_{q\in\Re^n}\min_{\nu\in\M}~
   U(q,\nu;q_0)
\\
\label{eq:init:1}
   &\quad{}
   =
   \adjustlimits\min_{\nu\in\M}\sup_{q\in\Re^n}~
   U(q,\nu;q_0)
\\
\notag
   &\quad{}
   =
   \adjustlimits\min_{\nu\in\M}\sup_{q\in\Re^n}
   \Bracks{
   q\inprod\nu - q_0\inprod\nu - C(q) + C(q_0)
   }
\\
\label{eq:init:3}
   &\quad{}
   =
   \min_{\nu\in\M}
   \Bracks{
   C^*(\nu) - q_0\inprod\nu + C(q_0)
   }
\\
\label{eq:init:4}
   &\quad{}
   =
   \min_{\nu\in\M} D(q_0,\nu)
\\
\label{eq:init:5}
   &\quad{}
   =
   D(q_0,\hnu)
\end{align}
where we use Sion's minimax theorem in \Eq{init:1}, the
definition of conjugate in \Eq{init:3}, and
denote the minimizer of \Eq{init:4} as $\hnu$ in \Eq{init:5}.
By assumption, $p(q_0)\not\in\M$ and thus $p(q_0)\ne\hnu$. This
implies that $D(q_0,\hnu)>0$, so the value of the initial supremum is greater
than zero. Therefore, there must exist $q'$ such that
$\min_{\nu\in\M}~U(q',\nu;q_0)\eqqcolon\eps>0$, yielding the desired contradiction.
\end{proof}

While $p(q_0)\in\M$ is a necessary condition for arbitrage-free
initialization, it is not sufficient. For example,
consider $n=1$, $\M=[0,1]$, and the cost function $C(q)=\max\set{0,q^2/2}$.
Here, $q_0=-1$ is not arbitrage-free for any $\mu\in(0,1]$, even though
$p(q_0)=0\in\M$. We next present a technical lemma followed
by two sufficient conditions.
\begin{lemma}
\label{lem:budget:0}
Let $q_0$ be such that $p(q_0)\in\M$ and let $q\in\hQ(0;q_0)$. Then $p(q)=p(q_0)$,
and for any decomposition of $p(q_0)$ into a convex combination over $\omega$,
i.e., for any weights $c_\omega\ge 0$ such that $p(q_0)=\sum_{\omega\in\Omega}c_\omega\omega$
and $\sum_{\omega\in\Omega} c_\omega=1$, we have $c_\omega U(q,\omega;q_0)=0$ for all $\omega$.
\end{lemma}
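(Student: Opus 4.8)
The plan is to work with the Bregman divergence reformulation of the optimization problem, which was developed in Section~\ref{sec:convex}. Recall from \Eq{U} that for $\mu \in \M$, maximizing $U(q,\mu;q_0)$ over feasible $q$ is the same as minimizing $D(q,\mu)$, and each constraint $U(q,\omega;q_0) \ge -B$ is equivalent to $D(q,\omega) \le D(q_0,\omega)+B$. So with $B=0$ and belief $\mu = p(q_0)$ (or any belief, but we will see the relevant belief is $p(q_0)$), the feasible region for $q \in \hQ(0;q_0)$ is $\{q : D(q,\omega)\le D(q_0,\omega)\text{ for all }\omega\in\Omega\}$.

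First I would establish $p(q)=p(q_0)$. Since $q_0$ is feasible at budget zero (it trivially satisfies $D(q_0,\omega)\le D(q_0,\omega)$) and, by \Prop{init:necessary}'s setup or directly by hypothesis $p(q_0)\in\M$, $q_0$ is arbitrage-free, so $q_0 \in \hQ(0;q_0)$ and thus $U(q,\mu;q_0) = 0$ for the optimal $q$ under belief $\mu=p(q_0)$. By path independence (or just \Eq{U}), $U(q,p(q_0);q_0) = D(q_0,p(q_0)) - D(q,p(q_0))$. Now $D(q_0,p(q_0)) = 0$ by property (iii) of the Bregman divergence (it is zero iff the price of the first argument equals the second argument). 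Hence $D(q,p(q_0)) = 0$, which by the same property gives $p(q) = p(q_0)$.

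Next, for the decomposition claim, fix any convex combination $p(q_0) = \sum_{\omega} c_\omega \omega$ with $c_\omega \ge 0$, $\sum_\omega c_\omega = 1$. The key identity is that the expected utility is linear in the outcome, so
\[
   \sum_{\omega\in\Omega} c_\omega\, U(q,\omega;q_0) = U\Bigl(q, \sum_\omega c_\omega\omega;\, q_0\Bigr) = U(q, p(q_0); q_0) = 0,
\]
using the first step. But each term $c_\omega U(q,\omega;q_0)$ is nonpositive: $c_\omega \ge 0$ and $U(q,\omega;q_0) = D(q_0,\omega) - D(q,\omega) \le 0$ because $q$ satisfies the budget-zero constraint $D(q,\omega)\le D(q_0,\omega)$. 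A sum of nonpositive terms equal to zero forces every term to be zero, so $c_\omega U(q,\omega;q_0) = 0$ for all $\omega$, as claimed.

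The main obstacle — really the only delicate point — is justifying the reformulation $U(q,\omega;q_0) = D(q_0,\omega) - D(q,\omega)$ and its sign for every $\omega \in \Omega$; this relies on the finite loss assumption ($C^*(\omega) < \infty$ for $\omega\in\Omega$), which guarantees $D$ is finite and well-defined at all the $\omega$'s, and on the nonnegativity of $D$, both recorded in Section~\ref{sec:convex}. Everything else is bookkeeping with path independence and linearity of $U$ in its middle argument. One should also double-check that the belief relevant to ``$q\in\hQ(0;q_0)$'' is genuinely $p(q_0)$: since the lemma is a step toward characterizing arbitrage-free initialization (Definition in Appendix~\ref{app:arb:free}), the ambient belief $\mu$ in $\hQ$ may be arbitrary, but the argument above only used that $q_0$ is feasible at budget zero and that $U(q_0,\mu;q_0)=0 = U(q,\mu;q_0)$; to get $p(q) = p(q_0)$ we then need $\mu = p(q_0)$, which is exactly the arbitrage-free hypothesis combined with \Prop{init:necessary}, or can be assumed as part of the setup — I would state this dependence explicitly at the start of the proof.
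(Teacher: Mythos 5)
There is a genuine gap, and it sits at the very first step. You deduce that $q_0$ is arbitrage-free ``directly by hypothesis $p(q_0)\in\M$,'' but the paper shows immediately before this lemma that $p(q_0)\in\M$ is necessary and \emph{not} sufficient for arbitrage-freeness (the example $n=1$, $\M=[0,1]$, $C(q)=\max\set{0,q^2/2}$, $q_0=-1$). Worse, \Lem{budget:0} is precisely the tool used to \emph{prove} arbitrage-freeness in \Prop{init:sufficient}, so assuming it here is circular. You also quietly replace the trader's belief by $p(q_0)$: the belief $\mu$ in $\hQ(0;q_0)$ is arbitrary (the lemma is invoked in the $B=0$ case of \Lem{KKT:initial} with a general $\mu$), so a proof that only works when $\mu=p(q_0)$ does not establish the statement. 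Finally, there is a sign error in your second step: the budget-zero constraints say $U(q,\omega;q_0)\ge -B=0$, equivalently $D(q,\omega)\le D(q_0,\omega)$ gives $U(q,\omega;q_0)=D(q_0,\omega)-D(q,\omega)\ge 0$, so the terms $c_\omega U(q,\omega;q_0)$ are non\emph{negative}, not nonpositive; and your claim that their sum is zero rests entirely on the flawed first step.

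The statement can be proved from feasibility alone, with no appeal to optimality, to the belief, or to arbitrage-freeness. From the constraints, $\sum_\omega c_\omega U(q,\omega;q_0)\ge 0$, and by linearity of $U$ in its second argument this sum equals $U(q,p(q_0);q_0)$. Expanding and using the Fenchel--Young equality at $q_0$, namely $C^*(p(q_0))=q_0\inprod p(q_0)-C(q_0)$ (valid because $p(q_0)=\nabla C(q_0)$), one gets $U(q,p(q_0);q_0)=q\inprod p(q_0)-C^*(p(q_0))-C(q)=-D(q,p(q_0))\le 0$. The two inequalities force $D(q,p(q_0))=0$, hence $p(q)=p(q_0)$, and simultaneously force the sum of the nonnegative terms $c_\omega U(q,\omega;q_0)$ to vanish, hence each term is zero. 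This is the paper's argument; your proposal's outer structure (linearity plus a same-sign-terms-summing-to-zero argument) is close to it, but the crucial identity $U(q,p(q_0);q_0)=-D(q,p(q_0))$ is what lets you avoid the arbitrage-freeness assumption you invoked, and without it the proof does not go through under the lemma's actual hypotheses.
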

\begin{proof}
Budget constraints on $q\in\hQ(0;q_0)$ imply that
\begin{equation}
\label{eq:init:6}
   U(q,\omega;q_0)\ge 0\text{ for all $\omega\in\Omega$.}
\end{equation}
For a given set of weights $c_\omega$ write
\begin{align}
\label{eq:init:7}
  0
& \le \sum_\omega c_\omega U(q,\omega;q_0)
   =U\bigParens{q,\,\textstyle\sum_\omega c_\omega\omega;\,q_0}
\\
\notag
&  =U(q,p(q_0);q_0)
\\
\notag
&  =q\inprod p(q_0) - q_0\inprod p(q_0)-C(q) + C(q_0)
\\
\notag
&  =q\inprod p(q_0) - C^*\bigParens{p(q_0)} - C(q)
\\
\notag
& = -D\bigParens{q,p(q_0)}
\enspace.
\end{align}
Since Bregman divergence is non-negative, we obtain that $D(q,p(q_0))=0$ and thus $p(q)=p(q_0)$.
From \Eq{init:7}, we then also obtain that each of the terms $c_\omega U(q,\omega;q_0)$ must equal zero,
because $c_\omega\ge 0$ and utilities are non-negative by \Eq{init:6}.
\end{proof}

\begin{proposition}
\label{prop:init:sufficient}
The state $q_0$ is arbitrage-free if either of the following conditions holds:
\begin{enumerate}[label=\textup{(\alph*)},topsep=0pt]
\item
\label{init:ri}
$p(q_0)\in\ri\M$;

\item
\label{init:strict}
$p(q_0)\in\M$ and $C$ is strictly convex.
\end{enumerate}
\end{proposition}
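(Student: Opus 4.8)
The plan is to reduce both cases to the single claim that every state $q$ feasible for \CP\ at budget $B=0$ satisfies $U(q,\mu;q_0)\le 0$. Since $q_0$ is itself feasible at budget zero ($U(q_0,\omega;q_0)=0$ for all $\omega$) with objective value $U(q_0,\mu;q_0)=0$, this inequality exhibits $q_0$ as a maximizer, i.e.\ $q_0\in\hQ(0;q_0)$, which is exactly arbitrage-freeness. So I would fix an arbitrary $q$ with $U(q,\omega;q_0)\ge 0$ for all $\omega\in\Omega$. In both cases $p(q_0)\in\M$, so I would rerun the computation from the proof of \Lem{budget:0} (which uses only budget-zero feasibility and the Fenchel equality $C(q_0)+C^*(p(q_0))=q_0\inprod p(q_0)$): for every convex representation $p(q_0)=\sum_\omega c_\omega\omega$ it gives $0\le\sum_\omega c_\omega U(q,\omega;q_0)=-D(q,p(q_0))$, whence $D(q,p(q_0))=0$ (equivalently $p(q)=p(q_0)$) and, since each term $c_\omega U(q,\omega;q_0)\ge 0$, also $c_\omega U(q,\omega;q_0)=0$ for all $\omega$.

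For case~\ref{init:ri}, $p(q_0)\in\ri\M$, and here I would invoke the standard fact that a relative-interior point of the convex hull of a finite set admits a convex representation $p(q_0)=\sum_\omega c_\omega\omega$ in which \emph{every} weight is strictly positive. (If this needs justification: the centroid $|\Omega|^{-1}\sum_\omega\omega$ lies in $\ri\M$ with all weights positive, and since $p(q_0)\in\ri\M$ the segment from the centroid to $p(q_0)$ extends slightly past $p(q_0)$ inside $\M$, so $p(q_0)$ is a strict convex combination of the centroid and another point of $\M$.) With such weights, $c_\omega U(q,\omega;q_0)=0$ forces $U(q,\omega;q_0)=0$ for every $\omega\in\Omega$; since $U(q,\cdot\,;q_0)$ is affine and $\mu\in\M=\conv(\Omega)$, this yields $U(q,\mu;q_0)=0$, as required.

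For case~\ref{init:strict}, I would use strict convexity directly: the previous paragraph gives $p(q)=p(q_0)$, i.e.\ $D(q,p(q_0))=0$, but $q\mapsto D(q,p(q_0))=C(q)-q\inprod p(q_0)+C^*(p(q_0))$ is a finite strictly convex function (because $C$ is strictly convex and $C^*(p(q_0))<\infty$ by finite loss), hence has a unique minimizer; since both $q$ and $q_0$ attain its minimum value $0$, we get $q=q_0$ and therefore $U(q,\mu;q_0)=0$. The steps that are not pure bookkeeping are (i) checking that the \Lem{budget:0} argument needs only feasibility, not optimality, so that it applies to an arbitrary feasible $q$ (otherwise one would first need $\hQ(0;q_0)\neq\emptyset$), and (ii) the all-positive-weights decomposition of a relative-interior point used in case~\ref{init:ri}; everything else reduces to path independence, affineness of $U$ in the belief argument, and nonnegativity of the Bregman divergence.
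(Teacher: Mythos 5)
Your proof is correct and follows essentially the same route as the paper's: both cases reduce to the computation of \Lem{budget:0} (which indeed uses only budget-zero feasibility and the Fenchel equality $C^*(p(q_0))=q_0\inprod p(q_0)-C(q_0)$), with a strictly positive convex representation of $p(q_0)$ handling case \ref{init:ri} and strict convexity forcing $q=q_0$ in case \ref{init:strict}. The only difference is that you run the argument for an arbitrary state feasible at budget zero rather than for an element of $\hQ(0;q_0)$, which directly exhibits $q_0$ as a maximizer and sidesteps the implicit attainment of the optimum; this is a mild strengthening, not a different approach.
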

\begin{proof}
~

\textbf{Part~\ref{init:ri}.}
Let $q\in\hQ(0;q_0)$. It suffices to show that $U(q,\mu;q_0)\le U(q_0,\mu;q_0)=0$.
To begin, note that since $p(q_0)\in\ri\M$, we can write $p(q_0)$ as a positive convex combination
of $\omega\in\Omega$, i.e., $p(q_0)=\sum_{\omega\in\Omega} c_\omega\omega$
where $c_\omega>0$ and $\sum_{\omega\in\Omega} c_\omega=1$. By \Lem{budget:0},
we obtain that $c_\omega U(q,\omega;q_0) = 0$ for all $\omega$
and hence $U(q,\omega;q_0)=0$ for all $\omega$. Since $\mu$ is a convex
combination of $\omega$, the linearity of utility in the second argument
yields $U(q,\mu;q_0)=0$.

\textbf{Part~\ref{init:strict}.}
We again appeal to \Lem{budget:0}. Let $q\in\hQ(0;q_0)$. Then by \Lem{budget:0},
we have that $p(q)=p(q_0)$, and the strict convexity of $C$ yields $q=q_0$.
\end{proof}

Part~\ref{init:ri}
implies that every $q_0\in\Re^n$ is arbitrage-free for the log-partition cost.
Part~\ref{init:strict}
implies that every $q_0$ such that $p(q_0)\in\M$ is arbitrage-free for
the quadratic cost.

\subsection{Proofs of Lemma~\ref{lem:KKT:initial} and Theorem~\ref{thm:KKT:final}}

\begin{proof}[Proof of \Lem{KKT:initial}]
We prove the revised version of the lemma
with the additional assumption
that $q_0$ is arbitrage-free (see \App{arb:free}).
First consider $B=0$ and assume $q\in\hQ(B;q_0)$.
Since $q_0$ is arbitrage-free,
we have $p(q_0)\in\M$, so it can be written
as a convex combination of $\omega\in\Omega$, say $p(q_0)=\sum_\omega c_\omega \omega$.
By \Lem{budget:0} we therefore obtain that
each of the terms $c_\omega U(q,\omega;q_0)$ must equal zero.
Thus, whenever $c_\omega>0$, we necessarily have $U(q,\omega;q_0)=0$. Let $S\coloneqq\set{\omega\in\Omega:\: c_\omega>0}$.
Note that for any $x,x'\in S$ we have $(q-q_0)\inprod(x'-x)=U(q,x';q_0)-U(q,x;q_0)=0$. Since $S$ is non-empty,
we can pick an arbitrary $s\in S$ and define $X\coloneqq\set{\omega\in\Omega:\: (q-q_0)\inprod(\omega-s)=0}$.
We next argue that $X$ satisfies conditions \ref{kkt:tight}--\ref{kkt:budget} and then show that it is actually a face.
Conditions \ref{kkt:tight} and \ref{kkt:budget} follow from the definition of $X$ and the fact that $U(q,s;q_0)=0$.
Condition \ref{kkt:nontight} follows because $q$ satisfies budget constraints for $B=0$.
Finally, condition \ref{kkt:conv} follows because
$S\subseteq X$. To see that $X$ is a face, note that condition \ref{kkt:nontight} actually shows that
$X$ is exactly the set of minimizers of the linear function $(q-q_0)\inprod\omega$ over $\omega\in\Omega$.

Now, consider $B>0$.
We begin by forming a Lagrangian of \CP, with non-negative multipliers $\lambda=(\lambda_\omega)_{\omega\in\Omega}$:
\[
  L(q,\lambda) = U(q,\mu;q_0) + \sum_\omega \lambda_\omega\Parens{U(q,\omega;q_0) + B}
\enspace.
\]
Since the utilities are convex and finite over $q\in\Re^n$, and $q_0$ is feasible with all of the constraints
satisfied with strict inequalities, KKT conditions are
both necessary and sufficient for optimality~\citep[Corollary 28.3.1]{Rockafellar70}. KKT conditions state that $q$ and $\lambda$
solve the above problem if and only if the following hold:
\begin{itemize}[topsep=0pt]
\item \emph{primal feasibility}: $U(q,\omega;q_0)\ge -B$ for all $\omega\in\Omega$;
\item \emph{dual feasibility}: $\lambda\ge 0$;
\item \emph{first-order optimality}: $\nabla_1 L(q,\lambda)=0$;
\item \emph{complementary slackness}: $\lambda_\omega\Parens{U(q,\omega;q_0)+B}=0$; for all $\omega\in\Omega$.
\end{itemize}

We next show that KKT conditions imply \ref{kkt:tight}--\ref{kkt:budget}. Assume that
KKT conditions hold. Let $X$ be
the set of outcomes with tight constraints, i.e., $X=\set{x\in\Omega: U(q,x;q_0)=-B}$. For this
$X$, the conditions \ref{kkt:tight} and \ref{kkt:nontight} hold by primal feasibility and our
definition of $X$. Note that we have either $X=\emptyset$ or $X=\argmin_{x\in\Omega} (q-q_0)\inprod x$, i.e., $X$ is a face of $\M$. If $X\ne\emptyset$, then $\ref{kkt:budget}$ follows from our definition of $X$. If $X=\emptyset$, then
$\ref{kkt:budget}$ follows by primal feasibility.
We prove \ref{kkt:conv} by analyzing
first-order optimality. First note that:
\[
   \nabla_1 U(q,\nu;q_0) = \nu - \nabla C(q) = \nu - p(q)
\enspace.
\]
Thus, first-order optimality is equivalent to
\begin{gather*}
   \nabla_1 U(q,\mu;q_0) + \sum_\omega \lambda_\omega\nabla_1 U(q,\omega;q_0)=0
\\
   \mu - p(q) + \sum_\omega \lambda_\omega(\omega-p(q))=0
\\
   p(q) = \frac{\mu+\sum_\omega\lambda_\omega\omega}{1+\sum_\omega\lambda_\omega}
\enspace.
\end{gather*}
By complementary slackness, $\lambda_\omega=0$ for $\omega\in\Omega\wo X$, so this shows \ref{kkt:conv}.

For the converse, assume that \ref{kkt:tight}--\ref{kkt:budget} hold. In particular,
by \ref{kkt:conv}, let $p(q)=c_\mu\mu + \sum_{\omega\in X} c_\omega\omega$ where $c_\mu,c_\omega\ge 0$
and $c_\mu+\sum_{\omega\in X}c_\omega=1$. If $c_\mu>0$, then we obtain
that KKT conditions hold for the given $q$ and $\lambda_\omega=c_\omega/c_\mu$ for $\omega\in X$
and $\lambda_\omega=0$ for $\omega\in\Omega\wo X$. Since KKT conditions are sufficient for optimality
\citep[Theorem 28.3]{Rockafellar70}, we obtain $q\in\hQ(B;q_0)$.

If $c_\mu=0$, we actually have $p(q)\in\conv(X)$ and $X\ne\emptyset$. Thus, from \ref{kkt:tight}
and \ref{kkt:budget},
\begin{align}
\notag
   B
   &=
   \sum_{\omega\in X}
   c_\omega\Bracks{-U(q,\omega;q_0)}
   =
   -U\bigParens{q,\,
   \textstyle\sum_{\omega\in X}c_\omega\omega;\,
   q_0}
\\
\label{eq:KKT:1}
   &=-U\bigParens{q,p(q);q_0}
\\
   &=-q\inprod p(q) + q_0\inprod p(q) + C(q) - C(q_0)
\\
   &=-C^*\bigParens{p(q)} + q_0\inprod p(q) - C(q_0)
\\
\label{eq:KKT:2}
   &=-D\bigParens{q_0,p(q)}
\enspace.
\end{align}
%
By non-negativity of Bregman divergence, \Eq{KKT:2}
can only hold if
$B=0$ and $p(q_0)=p(q)$. Feasibility of $q$ follows by
\ref{kkt:budget} and \ref{kkt:nontight}. To see that $q$ is also
optimal, first note that $U(q,\omega;q_0)\ge 0$ for all $\omega$.
Since $\mu$ is a convex combination of $\omega$, the linearity of
$U$ in the second argument implies $U(q,\mu;q_0)\ge 0$. However,
by arbitrage-free initialization, $q_0$ is optimal and $U(q_0,\mu;q_0)=0$,
so we must have $U(q,\mu;q_0)=0$ and $q$ optimal as well.
\end{proof}

\begin{proof}[Proof of \Thm{KKT:final}]
We prove the revised version of the theorem
with the additional assumption
that $q_0$ is arbitrage-free (see \App{arb:free}).
If $q\in[q_0+\witness(X_{p(q)})$ then we have $q\in\hQ(q_0)$ by \Lem{KKT:initial} with $X=X_{p(q)}$.
For the converse, assume that $q\in\hQ(q_0)$. \Lem{KKT:initial} then implies that there exists
a face $X$ such that $q\in[q_0+\witness(X)]$ and $p(q)\in\conv(X\cup\set{\mu})$. By minimality of $X_{p(q)}$,
we must have $X_{p(q)}\subseteq X$. By anti-monotonicity of witness cones, we then have $q\in[q_0+\witness(X_{p(q)})]$,
finishing the proof.
\end{proof}

\subsection{Minimal face}

\begin{proposition}
\label{prop:minimal}
Fix $\mu\in\M$. Then for any $\nu\in\M$, there exists the minimal face $X_\nu$ with
the following property: for any face $X$ such that $\nu\in\conv(X\cup\set{\mu})$,
we must have $X_\nu\subseteq X$.
\end{proposition}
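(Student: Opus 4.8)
The plan is to realize $X_\nu$ as the intersection of \emph{all} faces that work, and reduce everything to a single closure property. Let
$\mathcal{F}_\nu \coloneqq \set{X\subseteq\Omega :\: X \text{ is a face of }\M \text{ and } \nu\in\conv(X\cup\set{\mu})}$.
This family is non-empty, since $\Omega$ itself is a face and $\nu\in\M=\conv(\Omega)\subseteq\conv(\Omega\cup\set{\mu})$, and it is finite because $\Omega$ is finite. If I can show that $\mathcal{F}_\nu$ is closed under pairwise intersection, then by induction $X_\nu\coloneqq\bigcap_{X\in\mathcal{F}_\nu}X$ lies in $\mathcal{F}_\nu$. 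Moreover $X_\nu$ is a face --- a finite intersection of faces, using the fact recalled in Section~2 that the intersection of two faces is a face --- it satisfies $\nu\in\conv(X_\nu\cup\set{\mu})$ because $X_\nu\in\mathcal{F}_\nu$, and by construction $X_\nu\subseteq X$ for every $X\in\mathcal{F}_\nu$, i.e.\ for every face $X$ with $\nu\in\conv(X\cup\set{\mu})$. That is exactly the asserted minimality, so the whole proof hinges on the closure claim.

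To prove closure, take $X,X'\in\mathcal{F}_\nu$. If $\nu=\mu$ this is immediate, since $\nu=\mu\in\conv(\set{\mu})\subseteq\conv((X\cap X')\cup\set{\mu})$ (here the empty set, which Section~2 admits as a face, plays the role of $X\cap X'$ if needed). Assume $\nu\ne\mu$ and write $\nu=(1-\lambda)\mu+\lambda y$ with $y\in\conv(X)$ and $\nu=(1-\lambda')\mu+\lambda' y'$ with $y'\in\conv(X')$; since $\nu\ne\mu$ both $\lambda,\lambda'\in(0,1]$, and we may assume $\lambda\le\lambda'$. Eliminating $\nu$ gives $y'=\tfrac{\lambda}{\lambda'}y+\bigl(1-\tfrac{\lambda}{\lambda'}\bigr)\mu$; geometrically $y$ and $y'$ both lie on the ray from $\mu$ through $\nu$, with $y'$ between $\mu$ and $y$ (and $y\ne\mu$, else $\nu=\mu$). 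When $\tfrac{\lambda}{\lambda'}<1$, the point $y'$ is in the relative interior of the segment $[\mu,y]$ whose endpoints lie in $\M$, so the standard face property of the face $\conv(X')$ forces $y\in\conv(X')$; when $\tfrac{\lambda}{\lambda'}=1$ we simply have $y=y'\in\conv(X')$. Either way $y\in\conv(X)\cap\conv(X')$.

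It remains to push $y$ into $\conv(X\cap X')$, which is the standard polytope identity $\conv(X)\cap\conv(X')\subseteq\conv(X\cap X')$. Writing $X'$ as the set of maximizers over $\Omega$ of a linear functional $\ell'$ with maximum value $M'$, every point of $\conv(X')$ attains $\ell'=M'$ over $\M$; so expressing $y\in\conv(X)$ as $\sum_{\omega\in X}c_\omega\omega$ with $\sum c_\omega = 1$, the identity $\ell'(y)=M'$ together with $\ell'(\omega)\le M'$ forces $c_\omega=0$ unless $\omega\in X'$. Hence the combination is supported on $X\cap X'$ and $y\in\conv(X\cap X')$. Substituting back, $\nu=(1-\lambda)\mu+\lambda y\in\conv((X\cap X')\cup\set{\mu})$, so $X\cap X'\in\mathcal{F}_\nu$, which completes the closure claim and the proof.

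The step I expect to need the most care is the passage $X,X'\in\mathcal{F}_\nu\Rightarrow X\cap X'\in\mathcal{F}_\nu$: one must observe that the two witnessing points $y,y'$ lie on a common ray through $\mu$, so that the one ``further out'' sits inside the other's face by the face property, and then invoke $\conv(X)\cap\conv(X')=\conv(X\cap X')$. Everything else is bookkeeping, but the degenerate cases $\nu=\mu$ (use the empty face) and $\lambda=\lambda'$ should be noted explicitly rather than swept under ``without loss of generality.''
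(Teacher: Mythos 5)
Your proof is correct, and it takes a genuinely different route from the paper's. You obtain $X_\nu$ abstractly, as the intersection of the finite family $\mathcal{F}_\nu$ of all admissible faces, reducing everything to the closure claim that $X,X'\in\mathcal{F}_\nu$ implies $X\cap X'\in\mathcal{F}_\nu$; the heart of that claim is your observation that the two witnesses $y\in\conv(X)$, $y'\in\conv(X')$ are collinear with $\mu$ and $\nu$, so the exposed-face property pushes the farther witness into both faces, after which $\conv(X)\cap\conv(X')\subseteq\conv(X\cap X')$ finishes it. The paper instead constructs $X_\nu$ explicitly: it takes the ray from $\mu$ through $\nu$, lets $\nu'$ be the last point of that ray inside $\M$, and defines $X_\nu$ as the unique face with $\nu'\in\ri\conv(X_\nu)$ (invoking the standard partition of a polytope into relative interiors of its faces), then shows by two supporting-functional contradiction arguments that $\nu'\in\conv(X)$ and $X_\nu\subseteq X$ for every admissible $X$. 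Your version is cheaper in prerequisites --- it needs only finiteness of $\Omega$, the fact (already stated in Section~2) that intersections of faces are faces, and elementary exposed-face reasoning, and minimality comes for free from closure under intersection --- while the paper's version buys a concrete geometric description of $X_\nu$ as the carrier face of the ray's exit point, which is useful intuition for how $\mu$ partitions $\M$ into regions $\conv(X\cup\set{\mu})$. Your handling of the degenerate cases is fine; the only implicit step worth making explicit is that when $\nu\ne\mu$ every $X\in\mathcal{F}_\nu$ is automatically non-empty, so the decomposition $\nu=(1-\lambda)\mu+\lambda y$ with $y\in\conv(X)$ and $\lambda\in(0,1]$ is legitimate.
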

\begin{proof}
If $\nu=\mu$ then $X_\nu=\emptyset$ and the statement holds. Otherwise, consider
the ray $\rho$ from $\mu$ towards $\nu$, and let $\nu'$ be the last point on the ray
that is contained in $\M$. Let $X_\nu$ be the unique face such that $\nu'$ lies
in the relative interior of $\conv(X_\nu)$.\footnote{The existence of such a unique face
follows by the standard result stating that relative interiors of $\conv(X)$
across non-empty faces $X$ form a disjoint partition of $\M$.}
We will argue that this face satisfies the condition stated in the proposition.
Let $X$ be any face such that $\nu\in\conv(X\cup\set{\mu})$. Then $\nu=\lambda\mu+(1-\lambda)\nu_X$
for $\nu_X\in\conv(X)$ and $\lambda\in[0,1)$. Since $\nu_X\in\M$, it must lie on the
ray $\rho$ at some point between $\nu$ and $\nu'$. We next argue than $\nu'\in\conv(X)$.
Suppose not, this means that $\nu'\ne\nu_X$, and $\nu_X$ maximizes some linear function, say $u\cdot\bnu$, over $\bnu\in\M$, and $u\cdot\nu'<u\cdot\nu_X$, i.e.,
\[
   u\cdot(\nu_X-\nu')>0
\enspace.
\]
Since $\nu\ne\mu$ and $\nu'\ne\nu_X$, and the points $\mu,\nu,\nu_X,\nu'$ lie on the ray $\rho$ (in that order), there exists $\eta>0$ such that $\mu-\nu_X=\eta(\nu_X-\nu')$ and thus
\[
  u\cdot(\mu-\nu_X)=\eta u\cdot(\nu_X-\nu') > 0
\]
implying that $u\cdot\mu>u\cdot\nu_X$ and contradicting the assumption that $\nu_X$ is the
maximizer. Thus, $\nu'\in\conv(X)$. By a similar reasoning, we can also show that for any
$x\in X_\nu$, we must have $x\in X$. Again, for the sake of contradiction assume that there
is $u$ such that $\nu'$ is a maximizer of $u\cdot\bnu$ over $\bnu\in\M$, but $x$ is not.
Then $x\ne\nu'$, and since $\nu'\in\ri\conv(X_\nu)$, for sufficiently small $\eta$, we
have $\nu''\coloneqq\nu'+\eta(\nu'-x)\in\M$, and $u\cdot\nu''>u\cdot\nu'$ contradicting
the maximizer property of $\nu'$. Thus, $X_\nu\subseteq X$.
\end{proof}

\section{IMPOSSIBILITY RESULT OF FORTNOW AND SAMI}
\label{app:FortnowS12}

We can use the KKT lemma (\Lem{KKT:initial}) and the continuity of the perpendiculars (\Thm{perp:cont})
to derive the impossibility result of \citet{FortnowS12}.
The result states that in the presence of budget constraints, there is
no market scoring rule guaranteeing that the market prices move towards the agent
belief along the connecting straight line (unless $\aff(\M)$ is a line or a point).

Our construction is based on the observation that according to the KKT lemma, the solutions
$q$ must lie on a perpendicular at $q_0$, and the continuity of the perpendiculars
implies that $p(q)$ is arbitrarily close to $p(q_0)$ for a small enough budget. In particular, if
$p(q_0)$ lies in the relative interior of $\conv(X_{p(q_0)}\cup\set{\mu})$ then we can assure that
so does $p(q)$, and this property does not change with small perturbations of $\mu$---in
particular, $q$ remains a solution to the budget constrained optimization.
Thus, the direction of movement of market prices is
independent of small changes in $\mu$.


This informal reasoning can be turned into the following formal argument.
Assume the dimension of $\aff(\M)$ is $d\ge 2$. Choose $\mu\in\ri\M$ and a face $X$
such that $\aff(X)$ is of dimension $d-1$, so $\conv(X\cup\set{\mu})$ is of dimension~$d$.
Pick $q_0$ such that $p(q_0)$ lies in $\ri\conv(X\cup\set{\mu})$, which assures
that $X$ is the minimal face for $p(q_0)$. Let $\nu_0$ denote $p(q_0)$.
Consider the $\mu$-perpendicular to $X$ at $q_0$. By continuity of perpendiculars,
we can pick a point $\nu\succ \nu_0$ on the perpendicular which is arbitrarily
close to $\nu_0$ and, in particular, which still lies in $\ri\conv(X\cup\set{\mu})$, so its minimal face is still $X$. Pick $q\in q_0+X^\perp$ such that $p(q)=\nu$, which
  is possible by \Prop{perp:dual}. We next show that actually $q\in q_0+\setK(X)$, which
implies that $q$ is a solution to budget constrained optimization
(by \Thm{KKT:final}).

Let $u$ be the normal to $\aff(X)$ within $\aff(\M)$ such that $\M$
lies in the non-negative half-space, i.e., all $\nu'\in\M$ can be expressed
in the form $\nu'=a'+t'u$ where $a'\in\aff(X)$ and $t'\ge 0$. Thus, we can write
\begin{align*}
   \nu_0&=a_0+t_0 u
   \enspace,
\\
   \nu&=a_\nu+t_\nu u
   \enspace,
\\
   \omega&=a_\omega + t_\omega u\quad\text{for all }\omega\in\Omega
   \enspace,
\end{align*}
for some $a_0,a_\nu,a_\omega\in\aff(X)$ and $t_0,t_\nu,t_\omega\ge0$.
By convexity
\begin{align}
\notag
  0&\le(q-q_0)\inprod(\nu-\nu_0)
\\
\notag
   &=(q-q_0)\inprod\Bracks{a_\nu-a_0+(t_\nu-t_0)u}
\\
\label{eq:FS:1}
   &=(q-q_0)\inprod(t_\nu-t_0)u
\end{align}
where \Eq{FS:1} follows because $(q-q_0)\perp\aff(X)$. Since $\nu\succ\nu_0$ along the perpendicular,
we have that $t_\nu>t_0$ and so \Eq{FS:1} implies
\[
  0\le (q-q_0)\inprod u
\enspace.
\]
Thus, for any $\omega\in\Omega$ and $x\in X$,
\begin{align*}
  (q-q_0)\inprod(\omega-x)
   &=(q-q_0)\inprod(a_\omega-x+t_\omega u)
\\
   &=t_\omega(q-q_0)\inprod u
\\
   &\ge 0
\enspace,
\end{align*}
showing that $q\in q_0+\setK(X)$, i.e., $q\in\hQ(q_0)$. If $\nu_0,\nu$ and $\mu$ are not
on a straight line, we are done. Otherwise, slightly move $\mu$ within
the affine space parallel to $\aff(X)$, so that $X$ remains the minimal face for $\nu$
and thus $q$ remains a solution, but $\nu_0,\nu$ and $\mu$ are no longer on
a straight line.

\section{BUDGET ADDITIVITY: EXAMPLES}
\label{app:additivity:examples}

Using the KKT lemma, we illustrate on examples that budget additivity sometimes
 holds and sometimes does not. Recall
that budget additivity states that if several agents have the same belief and limited budgets,
the sequence of their actions is equivalent to the action of a single agent with the same belief
and the sum of the budgets. In the first example, we give an illustration of when this
property holds. In the second example, we show how this property can be violated,
and the single agent with the sum of budgets has more power in the market.

\begin{example}[Quadratic cost on a square]
Consider the following outcome space and belief:
\[
\renewcommand{\arraycolsep}{0pt}
\renewcommand{\arraystretch}{1.1}
\begin{array}{rl}
\omega_{00} &{}= (0,\,0)
\\
\omega_{01} &{}= (0,\,1)
\\
\omega_{10} &{}= (1,\,0)
\\
\omega_{11} &{}= (1,\,1)
\\
\mu &{}= (0.9,\,0.3)
\end{array}
\]
Further, consider the following market states:
\[
\renewcommand{\arraycolsep}{0pt}
\renewcommand{\arraystretch}{1.1}
\begin{array}{rl}
  q_0 &{}= \nu_0 = (0.5,\,0.1)
\\
  q_1 &{}= \nu_1 = (0.6,\,0.2) = \frac13\omega_{00}+\frac23\mu
\\
  q_\mu &{}= \mu
\end{array}
\]
The divergence of these states (and the belief $\mu$) from individual outcomes is:
\[
\begin{array}{c|cccc}
\frac12\norm{\cdot-\cdot}^2
& \omega_{00}
& \omega_{01}
& \omega_{10}
& \omega_{11}
\\
\hline
\nu_0
& 0.13 & 0.53 & 0.13 & 0.53
\\
\nu_1
& 0.2 & 0.5 & 0.1 & 0.4
\\
\mu
& 0.45 & 0.65 & 0.05 & 0.25
\end{array}
\]
With these in hand, we can now use the KKT lemma and show that $q_1=\nu_1$ is
an optimal action at $q_0=\nu_0$ under belief $\mu$ for a specific budget.
Since $q_1$ is a convex combination of $\omega_{00}$ and $\mu$, we need to show that the only tight budget constraint is due to $\omega_{00}$. We also calculate budgets required to move from $q_0$ and $q_1$ to $q_\mu$:
\[
\begin{array}{c|c@{~}c@{~}c@{~}c|c}
& \omega_{00}
& \omega_{01}
& \omega_{10}
& \omega_{11}
&
\\
\hline
U(q_1,\cdot;\,q_0)
& -0.07 & 0.03 & 0.03 & 0.13
& B_{01}=0.07
\\
U(q_\mu,\cdot;\,q_0)
& -0.32 & -0.12 & 0.08 & 0.28
& B_{0\mu}=0.32
\\
U(q_\mu,\cdot;\,q_1)
& -0.25 & -0.15 & 0.05 & 0.15
& B_{1\mu}=0.25
\end{array}
\]
Hence, a sequence of moves with budgets $B_{01}$ and $B_{1\mu}$ is
equivalent to a single move with the budget $B_{0\mu}=B_{01}+B_{1\mu}$.
While we have shown this only for a specific sequence of budgets,
results of \Sec{main} show that for the quadratic cost on a square, budget
additivity holds for any sequence of budgets and any belief $\mu\in\M$.
\end{example}

\begin{example}[Quadratic cost on an obtuse triangle.]
\label{example:obtuse:long}
Now, we work out an example where the budget additivity
does not hold.
Consider the following outcome space and belief:
\[
\renewcommand{\arraycolsep}{0pt}
\renewcommand{\arraystretch}{1.1}
\begin{array}{rl}
  \omega_1 &{}= (0.0,\,0.0)
\\
  \omega_2 &{}= (1.8,\,0.0)
\\
  \omega_3 &{}= (6.0,\,4.2)
\\
  \mu &= (2.7,\,1.8)
\end{array}
\]
Further, consider the following set of market states:
\[
\renewcommand{\arraycolsep}{0pt}
\renewcommand{\arraystretch}{1.1}
\begin{array}{rl}
  q_0 &{}= \nu_0 = (2.7,\,0.9)
\\
  q_1 &{}= \nu_1 = (2.4,\,1.2) = \frac13\omega_2 + \frac23\mu
\\
  q_2 &{}= \nu_2 = (2.4,\,1.6) = \frac19\omega_1 + \frac89\mu
\\
  q_3 &{}= \nu_3 = \Parens{0.9\sqrt{\frac{105}{13}},\,
                   0.6\sqrt{\frac{105}{13}}}
\\
      &{}= \Parens{1-\frac13\sqrt{\frac{105}{13}}}\omega_1
           + \Parens{\frac13\sqrt{\frac{105}{13}}}\mu
         \approx \frac{1}{19}\omega_1 + \frac{18}{19}\mu
\\
  q_\mu &{}= \mu
\end{array}
\]
The divergence of these states (and the belief $\mu$) from individual outcomes is:
\[
\begin{array}{c|ccc}
\frac12\norm{\cdot-\cdot}^2
& \omega_1
& \omega_2
& \omega_3
\\
\hline
\nu_0
& 4.05 & 0.81 & 10.89
\\
\nu_1
& 3.6 & 0.9 & 10.98
\\
\nu_2
& 4.16 & 1.46 & 9.86
\\
\nu_3
& 4.725 & 1.74\dots & 9.04\dots
\\
\mu
& 5.265 & 2.025 & 8.325
\end{array}
\]
Again as before, we can  use the KKT lemma and show for $j=1,2,3$,
that $q_j=\nu_j$ is an optimal action at $q_{j-1}=\nu_{j-1}$ under belief $\mu$, with the
corresponding budgets as:
\[
\renewcommand{\arraycolsep}{4pt}
\begin{array}{c|c@{~}c@{~}c|c}
& \omega_1
& \omega_2
& \omega_3
&
\\
\hline
U(q_1,\cdot;\,q_0)
& 0.45 & -0.09 & -0.09
& B_{01}=0.09
\\
U(q_2,\cdot;\,q_1)
& -0.56 & -0.56 & 1.12
& B_{12}=0.56
\\
U(q_3,\cdot;\,q_2)
& -0.565 & -0.28\dots & 0.82\dots
& B_{23}=0.565
\\
U(q_\mu,\cdot;\,q_0)
& -1.215 & -1.215 & 2.565
& B_{0\mu}=1.215
\end{array}
\]
The above table also shows that the budget $B_{0\mu}=1.215$ suffices to move directly
from $q_0$ to $q_\mu$. However, note that the sum
\[
 B_{01}+B_{12}+B_{23}=1.215=B_{0\mu}
\enspace,
\]
but $\nu_3\ne\mu$, i.e., after the sequence of optimal actions with budgets $B_{01}$,
$B_{12}$, and $B_{23}$, the market is still not at the belief shared by all agents,
even though with the budget $B_{0\mu}$, it would have reached it. Note that
it is possible to achieve budget additivity by using log-partition cost instead
of quadratic cost (\Thm{log:partition}).
\end{example}

\section{PERPENDICULARS}
\label{app:perp}

\subsection{Proofs of Propositions~\ref{prop:perp:dual} and~\ref{prop:perp:equiv}}

%
%
\begin{proof}[Proof of \Prop{perp:dual}]
We will show that condition (i) is equivalent to condition (ii) by analyzing
the first order optimality conditions. Consider the problem
\begin{equation}
\label{problem:perp}
  \min_{\nu'\in A_\lambda}~
  D(q,\nu')
\end{equation}
used to define $\nu_\lambda$. Assume that the minimum is attained
at some $\nu'\in\ri\dom C^*$. Thus, $\nu'\in A\cap(\ri\dom C^*)$.
Since $D(q,\nu')$ is subdifferentiable at $\nu'$,
the first order optimality implies that
\begin{equation}
\label{cond:perp:opt}
   \bigParens{\partial_2 D(q,\nu')}\cap A_\lambda^\perp\ne\emptyset
\enspace.
\end{equation}
Since $\partial_2 D(q,\nu')=\partial C^*(\nu')-q=\pinv(\nu')-q$,
and $A_\lambda^\perp=A_0^\perp$,
we have
\[
   \pinv(\nu')\cap(q+A_0^\perp)\ne\emptyset
\enspace,
\]
proving that (i)$\Rightarrow$(ii).
Conversely, assume that $\nu'\in A\cap(\ri\dom C^*)$ and $\pinv(\nu')\cap(q+A_0^\perp)\ne\emptyset$.
Then we can pick $\lambda$ such that $\nu'\in A_\lambda$, and for this $\lambda$,
we obtain that condition \eqref{cond:perp:opt} holds and
hence $\nu'$ solves problem \eqref{problem:perp}. Since $\nu'\in\ri\dom C^*$,
we obtain that $\nu'\in\im\gamma$.
\end{proof}

%
\begin{proof}[Proof of \Prop{perp:equiv}]
Let $\gamma'$ be the $a_1$-perpendicular to $A_0$ at $q'$. Since the ambient
space $A$ for both perpendiculars is the same, by \Prop{perp:dual}(ii), it suffices
to show that $q+A_0^\perp=q'+A_0^\perp$. However, this follows by the assumption
of the theorem, since $q'-q\in A_0^\perp$.
\end{proof}

\subsection{Continuity of perpendiculars}
\label{app:perp:cont}

In this section, we prove two important properties of
perpendiculars:
(a) they are continuous maps;
(b) intersections of perpendiculars with compact convex sets
correspond to compact sets of market states up to certain ``irrelevant displacements''.
To define these irrelevant displacements,
let $\setL$ be the linear space parallel to $\aff(\dom C^*)$. Then the displacements of
market state within $\setL^\perp$ are \emph{irrelevant} in the sense that
they have no effect on the Bregman divergence and
hence by \Eq{U} also no effect on the utility function. Specifically, $D(q+u,\nu)=D(q,\nu)$
for all $u\in\setL^\perp$ (see next proposition). For instance, for LMSR over a simplex,
the irrelevant displacements are of the form $\lambda\vec{1}$ where $\lambda\in\Re$ and $\vec{1}$ is the all-ones vector.

\begin{proposition}
Let $\setL$ be the linear space parallel to $\aff(\dom C^*)$.
Then for all $q\in\Re^n$ and $u\in\setL^\perp$
\[
  D(q+u,\nu)=D(q,\nu)
\enspace.
\]
\end{proposition}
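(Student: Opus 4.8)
The plan is to unwind the definition of the Bregman divergence and reduce the claim to a statement purely about the cost function $C$. If $\nu\notin\dom C^*$ then both sides equal $\infty$ (translating the first argument of $D$ does not change this), so I would immediately restrict to $\nu\in\dom C^*$, a set that is nonempty since $\M\subseteq\dom C^*$. Writing out $D(q,\nu)=C(q)+C^*(\nu)-q\inprod\nu$ gives $D(q+u,\nu)-D(q,\nu)=C(q+u)-C(q)-u\inprod\nu$, so it suffices to prove $C(q+u)-C(q)=u\inprod\nu$. A preliminary observation I would record is that the right-hand side is independent of the choice of $\nu\in\dom C^*$: any two points of $\dom C^*$ differ by a vector of $\setL$, and $u\in\setL^\perp$, so $u\inprod\nu$ takes a common value, call it $c(u)$, over all $\nu\in\dom C^*$.

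The key step is to compute $C(q+u)$ via the biconjugate. Since $C$ is convex and differentiable on all of $\Re^n$, it is finite and continuous, hence proper and lower semi-continuous, so $C=C^{**}$; that is, $C(q')=\sup_{\nu'\in\dom C^*}\bigBracks{q'\inprod\nu'-C^*(\nu')}$ for every $q'$ (the restriction to $\dom C^*$ is harmless because the bracket is $-\infty$ elsewhere). Evaluating at $q'=q+u$ and using $u\inprod\nu'=c(u)$ throughout $\dom C^*$, the constant $c(u)$ factors out of the supremum, leaving $C(q+u)=c(u)+C(q)$. Combined with the reduction above this gives $D(q+u,\nu)-D(q,\nu)=c(u)-u\inprod\nu=0$, as desired.

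I do not expect a genuine obstacle here; the proof is essentially a two-line conjugate computation, and the two places needing a word of care are (i) the constancy of $u\inprod\nu'$ on $\dom C^*$, which is exactly the role of the hypothesis $u\in\setL^\perp$, and (ii) the use of $C=C^{**}$, which is valid precisely because $C$ is assumed finite-valued, convex and differentiable on all of $\Re^n$ (hence lower semi-continuous). As a backup I could instead integrate the gradient along the segment from $q$ to $q+u$: each $p(q+tu)=\nabla C(q+tu)$ lies in $\dom C^*$ (differentiability of $C$ forces $q+tu\in\partial C^*(\nabla C(q+tu))$), so $u\inprod\nabla C(q+tu)=c(u)$ for all $t\in[0,1]$ and $C(q+u)-C(q)=\int_0^1 u\inprod\nabla C(q+tu)\,dt=c(u)$, giving the same conclusion.
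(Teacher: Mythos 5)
Your proposal is correct. Both reductions match the paper's: dispose of $\nu\notin\dom C^*$, then reduce to showing $C(q+u)-C(q)=u\inprod\nu$, using that $u\inprod\cdot$ is constant on $\dom C^*$ because $u\in\setL^\perp$. Where you differ is in how you compute $C(q+u)-C(q)$: the paper applies the Mean Value Theorem along the segment from $q$ to $q+u$ to write this difference as $u\inprod\nabla C(\bq)$ for some intermediate $\bq$, and then uses $\nabla C(\bq)\in\dom C^*$ to conclude $u\inprod\bigParens{\nabla C(\bq)-\nu}=0$; your primary argument instead invokes the Fenchel--Moreau identity $C=C^{**}$, writes $C(q+u)$ as a supremum of affine functions indexed by $\nu'\in\dom C^*$, and factors out the constant $u\inprod\nu'=c(u)$. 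Both are valid; the biconjugate route buys you independence from differentiability of $C$ (it would work for any closed proper convex cost with nonempty $\dom C^*$), at the price of invoking conjugate duality, whereas the paper's MVT argument is a purely local calculus step that leans on the standing differentiability assumption. Your backup argument---integrating $u\inprod\nabla C(q+tu)=c(u)$ over $t\in[0,1]$, with $\nabla C(q+tu)\in\dom C^*$ since gradients are subgradients---is essentially the paper's proof in integral rather than mean-value form. The two points you flag as needing care (constancy of $u\inprod\nu'$ on $\dom C^*$, and lower semi-continuity of the finite convex $C$ to justify $C=C^{**}$) are exactly the right ones, and both are handled correctly.
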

\begin{proof}
If $\nu\not\in\dom C^*$ then the statement obviously holds.
Pick $\nu\in\dom C^*$, $q\in\Re^n$ and $u\in\setL^\perp$. By the Mean Value
Theorem, we can write
\[
  C(q+u)-C(q)=u\inprod\nabla C(\bq)
\]
for some $\bq$. Let
$\bnu\coloneqq\nabla C(\bq)\in\dom C^*$. Then we can write
\begin{multline*}
  D(q+u,\nu)-D(q,\nu)
\\
  =
  C(q+u)-C(q)-u\inprod\nu
  =u\inprod(\bnu-\nu) = 0
\end{multline*}
since $u\perp(\bnu-\nu)$.
\end{proof}

The following result of~\citet{Rockafellar70} will be instrumental in proving
continuity properties of the perpendicular. It is paraphrased for our setting. The notation $\interior$ refers to the topological interior of the set.

\begin{theorem}[Theorem 24.7 of \citet{Rockafellar70}]
\label{thm:compact}
Let $G:\Re^n\to(-\infty,\infty]$ be a lower semi-continuous convex function, and let
$K$ be a non-empty, closed and bounded subset of $\interior(\dom G)$. Then the set
\[
      \partial G(K) = \bigcup_{u\in K}\partial G(u)
\]
is non-empty, closed and bounded.
\end{theorem}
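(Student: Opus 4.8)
The statement is classical---it is Theorem~24.7 of \citet{Rockafellar70}---so the cleanest route is simply to cite it; if one prefers a self-contained argument, the plan is to establish the three asserted properties separately, using only two standard facts about a convex function on the interior of its effective domain: that it is subdifferentiable there, and that it is finite and continuous there. \emph{Nonemptiness} is then immediate: every $u\in K$ lies in $\interior(\dom G)\subseteq\ri\dom G$, so $\partial G(u)\neq\emptyset$, and since $K\neq\emptyset$ we get $\partial G(K)=\bigcup_{u\in K}\partial G(u)\neq\emptyset$.

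Before the remaining two parts I would fix an auxiliary compact neighborhood of $K$. Since $K$ is compact and $\interior(\dom G)$ is open with $K\subseteq\interior(\dom G)$, there is $\eps>0$ such that $K_\eps\coloneqq\set{x\in\Re^n:\:\mathrm{dist}(x,K)\le\eps}$ is still contained in $\interior(\dom G)$; being closed and bounded, $K_\eps$ is compact, and by continuity of $G$ on $\interior(\dom G)$ there is a finite $M$ with $\abs{G(x)}\le M$ for all $x\in K_\eps$. For \emph{boundedness}, I would take any $u\in K$ and any $v\in\partial G(u)$ with $v\neq 0$, and apply the subgradient inequality at the probe point $u'\coloneqq u+\eps v/\Norm{v}\in K_\eps$: this gives $G(u')\ge G(u)+\eps\Norm{v}$, whence $\eps\Norm{v}\le G(u')-G(u)\le 2M$, so $\Norm{v}\le 2M/\eps$. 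Thus $\partial G(K)$ is contained in the ball of radius $2M/\eps$.

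For \emph{closedness}, I would take any sequence $v_k\to v$ with $v_k\in\partial G(u_k)$ and $u_k\in K$, and, using compactness and closedness of $K$, pass to a subsequence with $u_k\to u\in K$. For every $u'\in\Re^n$ the subgradient inequality reads $G(u')\ge G(u_k)+v_k\inprod(u'-u_k)$; since $u\in\interior(\dom G)$ and $G$ is continuous there, $G(u_k)\to G(u)$, and $v_k\inprod(u'-u_k)\to v\inprod(u'-u)$, so letting $k\to\infty$ yields $G(u')\ge G(u)+v\inprod(u'-u)$ for all $u'$. Hence $v\in\partial G(u)\subseteq\partial G(K)$, so $\partial G(K)$ is closed.

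There is no serious obstacle; the only point needing a little care is the choice of the cushion $\eps$ and the set $K_\eps$, which must simultaneously stay inside $\interior(\dom G)$---so that $G$ is finite and uniformly bounded on it---and contain the probe point $u+\eps v/\Norm{v}$ used to turn the subgradient inequality into a uniform bound on $\Norm{v}$. Everything else is a routine limiting argument together with the standard continuity and subdifferentiability of a convex function on the interior of its domain.
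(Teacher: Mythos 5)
Your argument is correct. Note, however, that the paper does not prove this statement at all: it is quoted verbatim (paraphrased to the paper's notation) as Theorem 24.7 of \citet{Rockafellar70}, and the paper's ``proof'' is simply the citation, exactly the cleanest route you mention at the outset. Your self-contained argument is the standard one and is sound: nonemptiness follows from subdifferentiability of a convex function on the interior of its domain; for boundedness, the cushion $K_\eps\subseteq\interior(\dom G)$ exists by compactness of $K$, $G$ is finite and continuous on $\interior(\dom G)$ hence bounded in absolute value by some $M$ on the compact set $K_\eps$, and the probe point $u+\eps v/\Norm{v}$ turns the subgradient inequality into $\Norm{v}\le 2M/\eps$; for closedness, compactness of $K$ lets you extract $u_k\to u\in K$, and continuity of $G$ at $u$ plus the pointwise subgradient inequalities pass to the limit (trivially when $G(u')=\infty$), giving $v\in\partial G(u)$. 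Lower semicontinuity of $G$ is never actually used, which is fine since the hypotheses place $K$ in the interior of $\dom G$, where the behavior of $G$ is unaffected by taking closures. Compared with citing Rockafellar, your proof buys self-containedness at the cost of a few lines, and it makes transparent that the result is really a quantitative local-Lipschitz statement (the subgradients over $K$ are uniformly bounded by $2M/\eps$), which is in the spirit of how the result is used in Appendix~F of the paper.
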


%
%
%
%
%
%
%

Now we are ready to state and prove the continuity of
perpendiculars:

\begin{theorem}
\label{thm:perp:cont}
Let $\gamma$ be the $a_1$-perpendicular to $A_0$ at $q$, and $K\subseteq\ri\dom C^*$ be a closed bounded convex set intersecting $\im\gamma$.
\begin{itemize}[topsep=0pt]
\item[\textup{(a)}]
  The map $\gamma$ is continuous and $\Lambda$ is open.
\item[\textup{(b)}]
  The intersection
  $M\coloneqq\set{(\nu,q):\:\nu\in(\im\gamma)\cap K,\,q\in\pinv(\nu)}$
  can be written as
  $M=\setC+(0\oplus\setL^\perp)$ where $\setC$ is compact and $\oplus$ denotes a direct sum of vector spaces.
\end{itemize}
\end{theorem}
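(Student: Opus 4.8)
The plan is to prove the two parts of \Thm{perp:cont} separately, using \Thm{compact} as the main engine for compactness and then leveraging the strict convexity of $C^*$ on $\ri\dom C^*$ to upgrade set-valued continuity to continuity of the single-valued map $\gamma$.

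\textbf{Part (b): compactness of the state preimage.}
I would start here since it feeds into part (a). The set $(\im\gamma)\cap K$ is a subset of $\ri\dom C^* \subseteq \interior(\dom C^*)$ up to the irrelevant displacements in $\setL^\perp$; more precisely, $\im\gamma \subseteq \ri\dom C^*$, and $K$ is closed, bounded, and convex, so $(\im\gamma)\cap K$ is a bounded set contained in $\ri\dom C^*$. First I would show that $(\im\gamma)\cap K$ is closed: this needs continuity of $\gamma$ (part (a)) plus the fact that $\Lambda$ restricted to the portion mapping into the compact $K$ is closed, so in fact I would prove (b) \emph{after} (a), or interleave them — a cleaner route is to observe that $\im\gamma$ is relatively closed in $\ri\dom C^*$ (from continuity and openness of $\Lambda$) and intersect with the compact $K$. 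Call the resulting compact set $K' \coloneqq \cl\bigParens{(\im\gamma)\cap K}$; one checks $K'\subseteq\ri\dom C^*$ because $K\subseteq\ri\dom C^*$ and $K$ is closed. Then $\pinv(K') = \partial C^*(K')$, and since $C^*$ is lower semi-continuous convex and $K'$ is a nonempty closed bounded subset of $\interior(\dom C^*)$ — restricting to the affine hull $\aff(\dom C^*)$ and using that $\ri\dom C^*$ is the interior relative to that hull — \Thm{compact} gives that $\partial C^*(K')$ is nonempty, closed and bounded \emph{within} $\aff(\dom C^*)$. The full preimage $\pinv(\nu)$ for $\nu\in\dom C^*$ is of the form (a point of $\aff(\dom C^*)$-part) $+\;\setL^\perp$, because adding anything in $\setL^\perp$ to a state does not change its gradient (prices are gradients of $C$, and $\setL^\perp$ is orthogonal to the direction space of $\dom C^* = \dom\partial C = \im p$ up to closure). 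So setting $\setC \coloneqq M \cap \aff(\dom C^*)$ (equivalently the projection of $M$ onto $\aff(\dom C^*)$, viewed as a set of pairs), we get $\setC$ compact and $M = \setC + (0\oplus\setL^\perp)$.

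\textbf{Part (a): continuity of $\gamma$ and openness of $\Lambda$.}
The definition of $\gamma$ is: $\nu_\lambda = \argmin_{\nu\in A_\lambda} D(q,\nu)$, and $\lambda\in\Lambda$ iff $\nu_\lambda\in\ri\dom C^*$, in which case $\nu_\lambda$ is the unique minimizer by strict convexity of $D(q,\cdot)$ on $\ri\dom C^*$. I would fix $\lambda_0\in\Lambda$ and prove that $\gamma$ is continuous at $\lambda_0$ and that a neighborhood of $\lambda_0$ lies in $\Lambda$. The argument: since $\nu_{\lambda_0}\in\ri\dom C^*$, we may by \Prop{perp:dual} pick $q'\in\pinv(\nu_{\lambda_0})\cap(q+A_0^\perp)$, and by \Prop{perp:equiv}, $\gamma$ is also the $a_1$-perpendicular to $A_0$ at $q'$. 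Now consider the function $\lambda\mapsto D(q',\nu_\lambda')$ where $\nu_\lambda' = \argmin_{\nu\in A_\lambda}D(q',\nu)$. Because $D(q',\cdot)$ is a convex function finite and subdifferentiable at $\nu_{\lambda_0}$ with $0\in\partial_2 D(q',\nu_{\lambda_0}) \cap A_{\lambda_0}^\perp$ (that is the content of the first-order optimality plus the choice of $q'$), the minimizer moves continuously: I would use that $D(q',\cdot)$ is a closed proper convex function, so its restriction to the affine slices $A_\lambda$ (which translate continuously in $\lambda$) has a minimizer set that is outer semicontinuous, and strict convexity on $\ri\dom C^*$ forces that minimizer to be single-valued and hence continuous near $\lambda_0$. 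Concretely: take a sequence $\lambda_k\to\lambda_0$; the points $\nu_{\lambda_k}$ eventually lie in any neighborhood of the compact set $\{\nu_{\lambda_0}\}$ because $D(q',\nu_{\lambda_k})$ is bounded (sandwiched between the value at the nearest point of $A_{\lambda_k}$ to $\nu_{\lambda_0}$, which $\to D(q',\nu_{\lambda_0})$, and the infimum), level sets of $D(q',\cdot)$ are bounded near $\nu_{\lambda_0}$ by strict convexity and finiteness, and any subsequential limit $\bar\nu$ must by lower semicontinuity satisfy $D(q',\bar\nu)\le D(q',\nu_{\lambda_0})$ with $\bar\nu\in A_{\lambda_0}$, hence $\bar\nu = \nu_{\lambda_0}$ by uniqueness — and $\bar\nu\in\ri\dom C^*$ too since it equals $\nu_{\lambda_0}$. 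This simultaneously shows $\lambda_k\in\Lambda$ for large $k$ (so $\Lambda$ is open) and $\nu_{\lambda_k}\to\nu_{\lambda_0}$ (continuity).

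\textbf{Main obstacle.}
The delicate point is the continuity of the argmin as $\lambda$ varies: one must rule out the minimizer escaping toward the relative boundary of $\dom C^*$ (where strict convexity and finiteness can fail). The key is that near $\lambda_0$, the minimal value $\inf_{\nu\in A_\lambda}D(q',\nu)$ stays close to the finite value $D(q',\nu_{\lambda_0})$, and sublevel sets $\{\nu: D(q',\nu)\le D(q',\nu_{\lambda_0})+\eps\}$ are compact (since $D(q',\cdot)$ is a closed convex function which is bounded below by $0$ and attains a finite value — its recession function is nonnegative, and to get genuine compactness of sublevel sets I would use that $C$ is finite and differentiable on all of $\Re^n$, so $D(q',\cdot) = C(q') + C^*(\cdot) - q'\cdot(\cdot)$ has $C^*$ whose recession directions are exactly those directions along which $C$ is affine, and one argues no full line of minimizers persists — alternatively this compactness is exactly the kind of statement \Thm{compact} underwrites after passing to the subdifferential side). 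Making this escape-prevention argument airtight — in particular handling the case where $\dom C^*$ is not full-dimensional so one works inside $\aff(\dom C^*)$, and the case where sublevel sets of $D(q',\cdot)$ are unbounded in $\setL^\perp$ directions (the irrelevant displacements, which is why part (b) phrases the answer modulo $\setL^\perp$) — is where the real care is needed; everything else is a routine application of closedness/lower-semicontinuity and the already-established \Prop{perp:dual}, \Prop{perp:equiv}, and \Thm{compact}.
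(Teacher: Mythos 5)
Your part (b) is where the genuine gap lies. The whole decomposition $M=\setC+(0\oplus\setL^\perp)$ with $\setC$ compact forces $M$ to be closed (a compact set plus a closed subspace is closed), so closedness of $(\im\gamma)\cap K$ is not a side remark but the actual content of (b) beyond boundedness. You justify it by saying $\im\gamma$ is ``relatively closed in $\ri\dom C^*$ (from continuity and openness of $\Lambda$)'', but that implication is false as stated: a continuous map on an open parameter set need not have relatively closed image, and nothing in your part (a) excludes a sequence $\nu_{\lambda_k}$ with $\lambda_k$ tending to a boundary point of $\Lambda$ while $\nu_{\lambda_k}$ converges to a point of $\ri\dom C^*$. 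The claim is in fact true, but it needs its own argument (for instance: if $\nu_{\lambda_k}\to\bar\nu\in\ri\dom C^*$ and some $\hat\nu\in A_{\bar\lambda}$ had $D(q,\hat\nu)<D(q,\bar\nu)$, one may move $\hat\nu$ slightly toward $\bar\nu$ to place it in $\ri\dom C^*$, translate it to the slices $A_{\lambda_k}$, and contradict lower semicontinuity of $D(q,\cdot)$ at $\bar\nu$). The paper sidesteps the issue entirely: using \Prop{perp:dual} it rewrites $M$ as the intersection of the graph of $\partial C^*$ (closed by Theorem 24.4 of Rockafellar) with the closed sets $\Re^n\times(q+A_0^\perp)$ and $K\times\Re^n$, and only then applies \Thm{compact} to the restriction of $C^*$ to $\aff(\dom C^*)$. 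Note also that you never invoke the closed-graph property of $\partial C^*$, which you need in any case: \Thm{compact} controls only the union $\partial C^*(K')$, i.e.\ boundedness of the second coordinate, not closedness of your set of pairs $\setC$.

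Your part (a) is a genuinely different and essentially workable route: re-basing at $q'\in\pinv(\nu_{\lambda_0})\cap(q+A_0^\perp)$ via \Prop{perp:dual} and \Prop{perp:equiv} makes $\nu_{\lambda_0}$ the unique global minimizer of $D(q',\cdot)$ with value $0$, and then value convergence along translated points, lower semicontinuity, and a compactness extraction give openness of $\Lambda$ and continuity simultaneously; the paper instead never moves the base state and traps the minimizer in a small ball around $\nu_\lambda$ using strict convexity on a sphere together with uniform continuity of $D(q,\cdot)$ on a compact ball inside $\ri\dom C^*$. The ingredient you flag as the obstacle---boundedness of the minimizers---is true and standard, but your proposed justifications are off: since $\dom C=\Re^n$, the recession function of $C^*$ is the support function of $\Re^n$, hence $+\infty$ off the origin, so every sublevel set of $\nu\mapsto C^*(\nu)-q'\inprod\nu$ is compact (Rockafellar, \emph{Convex Analysis}, Corollary 14.2.2); it is not ``the directions along which $C$ is affine'', and \Thm{compact} concerns subdifferentials in $q$-space, not $\nu$-space sublevel sets. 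Likewise the worry about $\setL^\perp$-directions in (a) is misplaced, since sublevel sets of $D(q',\cdot)$ lie in $\dom C^*\subseteq\aff(\dom C^*)$. With the coercivity fact supplied in (a) and a real closedness argument (or the paper's rewriting of $M$) in (b), your plan goes through; as written, both are gaps, the one in (b) being the serious one.
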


\begin{proof}
Throughout the proof, let $F(\nu)\coloneqq D(q,\nu)=C(q)+C^*(\nu)-q\inprod\nu$. Note
that $F$ is strictly convex on $\ri\dom C^*$.
We will be also making frequent use of the fact that $F$ is continuous
on $\ri\dom C^*$ (because $C^*$ is continuous on $\ri\dom C^*$
by Theorem 10.1 of \citet{Rockafellar70}). Let $\norm{\cdot}$ denote the usual Euclidean norm.
Let $a_0=\argmin_{a\in A_0}\norm{a_1-a}$, i.e., $(a_1-a_0)\in A_0^\perp$.
Let $A=\aff(A_0\cup\set{a_1})$ and recall that
\[
  A_\lambda = A_0+\lambda(a_1-a_0)
\]
and
\[
  \nu_\lambda=\argmin_{A_\lambda} F(\nu)
\enspace.
\]
We use the notation $\setB(\nu,r;M)\coloneqq\set{\nu'\in M:\:\norm{\nu'-\nu}\le r}$
for the Euclidean ball relative to set $M$, and $\setS(\nu,r;M)\coloneqq\set{\nu'\in M:\:\norm{\nu'-\nu}=r}$
for the Euclidean sphere relative to set $M$.

\paragraph{Part (a).}
We need to show that $\gamma$ is continuous.
Let $\lambda\in\Lambda$, i.e., $\nu_\lambda\in\ri\dom C^*$.
Choose a sufficiently small $r>0$
such that the ball
$\setB\coloneqq\setB(\nu_\lambda,r;A)$ is contained in $\ri\dom C^*$. To show the continuity of $\gamma$
and openness of $\Lambda$, it suffices to show that if $\lambda'$ is close enough to $\lambda$ then $\nu_{\lambda'}\in\setB$.

Let $\eps=r/\sqrt{2}$. Consider
the sphere $\setS_\lambda\coloneqq\setS(\nu_\lambda,\eps;A_\lambda)\subseteq\ri\dom C^*$.
This sphere is a compact set, so $F$ attains the minimum on $\setS_\lambda$. By
strict convexity of $F$ and the optimality of $\nu_\lambda$, this minimum must be bounded
away from $F(\nu_\lambda)$. Thus, there exists $\delta>0$ such that
\begin{equation}
\label{eq:sep}
  F(\nu)\ge F(\nu_\lambda)+\delta
  \text{ for all }\nu\in\setS_\lambda
\enspace.
\end{equation}
Let $\delta'=\delta/3$.
Since $F$ is continuous on $\ri\dom C^*$, it is uniformly
continuous on $\setB$ and thus there exists $\eps'\in(0,\eps]$ such that
\begin{equation}
\label{eq:uniform}
\begin{aligned}
  &\abs{F(\nu')-F(\nu)}\le\delta'
\\&\quad\text{for all $\nu,\nu'\in\setB$ such that $\norm{\nu'-\nu}\le\eps'$.}
\end{aligned}
\end{equation}
Let $\setB_\lambda\coloneqq\setB(\nu_\lambda,\eps;A_\lambda)$
be the closed ball with $\setS_\lambda$ as the border.
For any $\lambda'$, let $\setS_{\lambda'}\coloneqq\setS_\lambda+(\lambda'-\lambda)(a_1-a_0)\subseteq A_{\lambda'}$ and similarly $\setB_{\lambda'}$. Let $\tnu_{\lambda'}=\nu_\lambda+(\lambda'-\lambda)(a_1-a_0)\in A_{\lambda'}$.

Note that if $\abs{\lambda'-\lambda}\le\eps'$, then
$\setB_{\lambda'}\subseteq\setB$,
because $\sqrt{(\lambda'-\lambda)^2+\eps^2}\le\eps\sqrt{2}=r$. So we can
use the above uniform continuity result and write:
\begin{itemize}
\item
$F(\nu')\ge F(\nu_\lambda)+\delta-\delta'=F(\nu_\lambda)+2\delta'$
for all $\nu'\in S_{\lambda'}$\\
\null\hfill
by Eqs.~\eqref{eq:sep} and~\eqref{eq:uniform}
\smallskip

\item
$F(\tnu_{\lambda'})\le F(\nu_\lambda)+\delta'$
\hfill
by \Eq{uniform}
\end{itemize}
By convexity of $F$, this means that $\nu_{\lambda'}\in\setB_{\lambda'}$. This proves
that $\nu_{\lambda'}\in\setB$ provided that $\abs{\lambda'-\lambda}\le\eps'$,
thus proving the continuity of $\gamma$ at $\lambda$.

\paragraph{Part (b).}
We first show that the set $M$ is closed and then that it is bounded, except
for directions in $0\oplus\setL^\perp$. Since $K\subseteq\ri\dom C^*$, we
can use \Prop{perp:dual} to write the set $M$ as
\begin{equation}
\label{eq:M}
\begin{array}{r@{~}l}
   M = & \set{(\nu',q'):\:\nu'\in\Re^n,\,q'\in\partial C^*(\nu')} \\
       & ~ \cap \Parens{\Re^n\times(q+A_0^\perp)} \cap (K\times\Re^n) ~,
\end{array}
\end{equation}
where we used the identity $\pinv(\nu')=\partial C^*(\nu')$ valid for all $\nu'$.
The closedness follows, because the set of pairs $\set{(\nu',q'):\:\nu'\in\Re^n,\,q'\in\partial C^*(\nu')}$ is closed
\citep[Theorem 24.4]{Rockafellar70}.

Denote the projections of $M$ on its two components as
\begin{align*}
  M_1 & \coloneqq\set{\nu':\:(\nu',q')\in M\text{ for some $q'$}}
\enspace, \\
  M_2 & \coloneqq\set{q':\:(\nu',q')\in M\text{ for some $\nu'$}}
\enspace.
\end{align*}
To show boundedness, we only need to analyze $M_2$ since $M_1\subseteq K$.
By \Eq{M}, it in fact suffices to show that the set $\partial C^*(K)=\bigcup_{\nu\in K}\partial C^*(\nu)$ is bounded except for directions in $\setL^\perp$.
We would like to appeal to \Thm{compact}, but we cannot do it directly,
because it is stated
for the \emph{interior} rather than the \emph{relative interior}.
For $\nu\in\ri\dom C^*$, we have
$\partial C^*(\nu)\ne\emptyset$, and using the fact that $C^*(\nu)=\infty$ over $\nu+(\setL^\perp\wo\set{0})$,
we obtain that
\begin{equation*}
   \partial C^*(\nu) = S + \setL^\perp
\end{equation*}
for some set $S\subseteq\setL$. This set $S$ coincides with subdifferential when $C^*(\nu)$ is
only viewed as a function over $\aff(\dom C^*)$. By applying \Thm{compact} to this restriction,
we then indeed obtain that
\[
   \partial C^*(K) = \setC + \setL^\perp
\]
for a non-empty closed and bounded set $\setC$. Note that $\setL^\perp\subseteq A_0^\perp$, so $\setL^\perp$ survives
taking the intersection in \Eq{M} and hence part (b) of the theorem follows.
\end{proof}

\section{PROOFS OF SUFFICIENT CONDITIONS FOR ACUTE ANGLES}
\label{app:acute:angles}

\begin{proposition}
\label{prop:Eacute}
For the quadratic cost, \Def{acute} is equivalent to \Def{Eacute}.
\end{proposition}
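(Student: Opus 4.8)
The plan is to unfold both definitions in the quadratic case and verify that they reduce to one and the same inequality. For the quadratic cost, $p$ is the identity map, $\pinv(\nu)=\set{\nu}$, $\dom C^*=\ri\dom C^*=\Re^n$, $\tM=\M$, and $D(q,\nu)=\frac12\norm{q-\nu}^2$, so every Bregman projection is an ordinary orthogonal projection. The case $X=\emptyset$ is trivial: $\witness(\emptyset)=\Re^n$ makes the conclusion of \Def{acute} automatic, and there is no affine hull of $\emptyset$ onto which to project in \Def{Eacute}, so both hold vacuously. Assume therefore $X\ne\emptyset$, write $A_0\coloneqq\aff(X)$, let $V_0$ be the linear subspace parallel to $A_0$ (so $A_0^\perp=V_0^\perp$), and let $\Pi$ denote Euclidean projection onto $A_0$.

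First I would describe the image of a $\mu$-perpendicular explicitly. The definition of a perpendicular forces $a_1=\mu\notin A_0$; fix such a $\mu\in\M$ and a valid base state, i.e.\ $q\in\pinv(\tM)=\M$ with $p(q)=q\in A\coloneqq\aff(A_0\cup\set{\mu})$ --- such $q$ exist, for instance any $q\in\conv(X\cup\set{\mu})\subseteq\M\cap A$. Put $v\coloneqq\mu-\Pi(\mu)$, a nonzero vector in $V_0^\perp$, so that $A_\lambda=A_0+\lambda v$ and the direction space of $A$ is $V_0\oplus\Re v$. Because $D(q,\cdot)$ is the squared Euclidean distance, $\nu_\lambda=\argmin_{\nu\in A_\lambda}\norm{q-\nu}^2$ is the Euclidean projection of $q$ onto $A_\lambda$, namely $\nu_\lambda=\Pi(q)+\lambda v$, which is a line with direction $v$. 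Since $q-\Pi(q)$ lies in $V_0^\perp$ and also in the direction space of $A$, it lies in $\Re v$, say $q=\Pi(q)+tv$, so this line passes through $q$ at $\lambda=t$. Hence $\Lambda=\Re$, $\im\gamma=q+\Re v$, and $\set{\nu'\in\im\gamma:\nu'\succeq p(q)}=q+\Re_{\ge0}\,v$; this is consistent with \Prop{perp:dual}, which here reads $\im\gamma=A\cap(q+A_0^\perp)$.

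Next I would reduce each definition to a statement purely about $v$. In \Def{acute}, the conclusion $\pinv(\nu')\cap[q+\witness(X)]\ne\emptyset$ is just $\nu'-q\in\witness(X)$; applying it to $\nu'=q+sv$ with $s\ge0$ and using that $\witness(X)$ is a cone containing $0$, \Def{acute} becomes: for every $\mu\in\M\setminus A_0$, $v=\mu-\Pi(\mu)\in\witness(X)$, i.e.\ $v\inprod(\omega-x)\ge0$ for all $\omega\in\Omega$ and $x\in X$ --- the base state $q$ has dropped out entirely. For \Def{Eacute}, the angle at $\bar p\coloneqq\Pi(\bnu)$ between $\bnu$ and $\omega$ is non-obtuse precisely when $(\bnu-\bar p)\inprod(\omega-\bar p)\ge0$; since $\bnu-\bar p=\bnu-\Pi(\bnu)\in V_0^\perp$ while $\bar p-x\in V_0$ for every $x\in X$, this inner product equals $(\bnu-\Pi(\bnu))\inprod(\omega-x)$, which is $0$ whenever $\bnu\in A_0$. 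So \Def{Eacute} says exactly: for every $\bnu\in\M\setminus A_0$, $(\bnu-\Pi(\bnu))\inprod(\omega-x)\ge0$ for all $\omega\in\Omega$ and $x\in X$. Identifying $\bnu$ with $\mu$, this is literally the reduced form of \Def{acute}, which gives the equivalence.

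I expect the only place requiring genuine care is the bookkeeping of quantifiers: one must verify that ``for every $\mu$-perpendicular to $X$ at every admissible base state $q$'' in \Def{acute} collapses to ``for every $\mu\in\M\setminus A_0$'', which holds because admissible $q$ always exist for such $\mu$ and because the resulting inequality is independent of $q$, and one must check that the degenerate ranges ($\mu$, respectively $\bnu$, lying in $A_0$, as well as $X=\emptyset$) are handled consistently on both sides. The geometric heart of the matter --- that a Euclidean Bregman perpendicular is nothing but the straight line through $q$ in the direction $\mu-\Pi(\mu)$, oriented away from $A_0$ --- is immediate, so no substantial computation is involved.
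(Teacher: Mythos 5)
Your proposal is correct, and it reaches the result by a somewhat different route than the paper. The paper proves the two implications separately: for the direction (\Def{acute} $\Rightarrow$ \Def{Eacute}) it invokes \Def{acute} on the $\bnu$-perpendicular to $X$ based at the projection $\nu$ of $\bnu$ onto $\aff(X)$; for the converse it fixes $\hnu\in\ri\conv(X)$, constructs an auxiliary point $\bnu=\hnu+\eta(\nu'-\nu)\in\M$ for sufficiently small $\eta>0$, applies \Def{Eacute} at that point, and transfers the inequality back to the perpendicular direction. You instead compute the quadratic perpendicular in closed form, $\im\gamma=q+\Re v$ with $v=\mu-\Pi(\mu)$ and $\Pi$ the Euclidean projection onto $\aff(X)$, and show that both definitions collapse to the same base-point-free inequality $(\mu-\Pi(\mu))\inprod(\omega-x)\ge0$ for all $\mu\in\M\wo\aff(X)$, $\omega\in\Omega$, $x\in X$, which gives both implications simultaneously. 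The geometric ingredients are the same as the paper's (the perpendicular is a Euclidean line orthogonal to $\aff(X)$, the inner product is unchanged when the base point $\omega-x$ slides within $\aff(X)$, and $\witness(X)$ is a cone containing $0$), but your organization buys two things: the converse needs no small-$\eta$ construction, since you can apply \Def{Eacute} directly at $\bnu=\mu$; and the forward direction is handled with tighter quantifier bookkeeping --- the paper applies \Def{acute} to a perpendicular based at $\nu=\Pi(\bnu)$, which need not lie in $\M$ (project a vertex of an obtuse triangle onto the affine hull of the opposite edge), even though \Def{acute} only covers base states $q\in\pinv(\tM)$, i.e.\ $q\in\M$ in the quadratic case; your observation that admissible base states such as points of $\conv(X\cup\set{\mu})$ always exist, together with the independence of the reduced inequality from the base state, closes that gap at no extra cost.
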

\begin{proof}
We first show that \Def{acute} (general acute angles) implies \Def{Eacute} (Euclidean acute angles).
Assume that the general acute angles
hold for $X$. Let $\bnu\in\M$ and $\nu$ be
its projection on $\aff(X)$. If $\bnu=\nu$ then angles between $\bnu$, $\nu$ and $\omega\in\Omega$
are non-obtuse in the sense that $(\bnu-\nu)\cdot(\omega-\nu)\ge 0$. If $\bnu\ne\nu$, then let
$\gamma$ be the $\bnu$-perpendicular to $X$ at $\nu$ (note that $p$ is the identity map, so $\nu$
is both a state and the corresponding price vector).
Note that $\bnu\succeq\nu$ and thus by the general acute angles assumption $\pinv(\bnu)\cap[\nu+\witness(X)]\ne\emptyset$. Since $\pinv(\bnu)=\set{\bnu}$, this is equivalent to
\[
   (\bnu-\nu)\cdot(\omega-x)\ge 0\text{ for all }x\in X,\omega\in\Omega
\enspace.
\]
Since $\nu\in\aff(X)$, we obtain that $(\bnu-\nu)\cdot(\omega-\nu)\ge 0$, i.e., the Euclidean acute
angles hold.

Conversely, assume that the Euclidean acute angles hold. Let $\gamma$ be a $\mu$-perpendicular to a face $X$
for some $\mu\in\M$ and $\nu'\succeq\nu$ be two points in $\im\gamma$ such that $\nu\in\M$. We need to show that $\pinv(\nu')\cap[\nu+\witness(X)]\ne\emptyset$, which is equivalent to
\begin{equation}
\label{eq:Gacute}
   (\nu'-\nu)\cdot(\omega-x)\ge 0\text{ for all }x\in X,\omega\in\Omega
\enspace.
\end{equation}
If $\nu'=\nu$ then \eqref{eq:Gacute} holds. Otherwise,
we can write $\nu'=\nu+u_X+\lambda(\mu-\hnu)$ for an arbitrary $\hnu\in\aff(X)$,
a suitable $\lambda>0$ and $u_X$ from the linear space parallel with $\aff(X)$. Pick $\hnu\in\ri\conv(X)$ (and the
corresponding $\lambda$ and $u_X$). We claim that there is a small enough $\eta>0$ such that
$\bnu\coloneqq\hnu+\eta(\nu'-\nu)\in\M$. This follows, because from our previous reasoning,
\[
  \bnu=\hnu+\eta u_X + \eta\lambda(\mu-\hnu),
\]
and for sufficiently small $\eta>0$, we have $[\hnu+\eta\lambda(\mu-\hnu)]\in\ri\conv(X\cup\set{\mu})$ and
then also for sufficiently small $\eta$, $[\hnu+\eta\lambda(\mu-\hnu)+\eta u_X]\in\ri\conv(X\cup\set{\mu})\subseteq\M$.
Thus, by the Euclidean acute angles,
\[
  (\bnu-\hnu)\cdot(\omega-\hnu)\ge 0\text{ for all }\omega\in\Omega.
\]
Since $\bnu-\hnu=\eta(\nu'-\nu)$ and $(\nu'-\nu)\perp(x-\hnu)$ for all $x\in X$, we also obtain
\[
  \eta(\nu'-\nu)\cdot(\omega-x)\ge 0\text{ for all }x\in X,\omega\in\Omega
\]
proving \eqref{eq:Gacute} and finishing the proof.
\end{proof}

\begin{proof}[Proof of \Thm{quad}]
Let $L$ be the linear space parallel to $\aff(X)$.
First show that the acute angles imply the inclusion of the projection
in an acute cone. Note that the inclusion is either true for all $a_0\in\aff(X)$
or none, so we can without loss of generality choose $a_0\in\conv(X)$.
Let $\omega_1,\omega_2\in\Omega$ and let $\omega'_1$ and $\omega'_2$
be their projections to $A'$, thus
\[
  \omega'_1-\omega_1\in L
\enspace,
\quad
  \omega'_2-\omega_2\in L
\enspace.
\]
We need to show that
\[
  (\omega'_1-a_0)\cdot(\omega'_2-a_0)\ge 0
\enspace.
\]
If $\omega_1\in\aff(X)$ then $\omega'_1=a_0$ and the statement holds.
Assume that $\omega_1\not\in\aff(X)$ and let $\gamma$ be the $\omega_1$-perpendicular
to $X$ at $a_0$. Let $\omega''_1\in\im\gamma$ be the projection of $\omega_1$ on $\im\gamma$.
Thus, we also have
\[
  \omega''_1-\omega_1\in L
\]
and also $\omega''_1\succeq a_0$. Now by the acute angles assumption,
$\omega''_1-a_0\in\witness(X)$, i.e., for any $x\in X$,
\[
  0\le(\omega''_1-a_0)\cdot(\omega_2-x)
\enspace.
\]
Combining this with the previous identities, we obtain
\begin{align*}
  0 & \le(\omega''_1-a_0)\cdot(\omega_2-x)
   =(\omega''_1-a_0)\cdot(\omega'_2-a_0) \\
   & =(\omega'_1-a_0)\cdot(\omega'_2-a_0)
\end{align*}
where the first equality follows because $\omega''_1-a_0\in L^\perp$ and
\[
  \omega'_2-\omega_2\in L
\enspace,
\quad
  a_0-x\in L
\enspace,
\]
the second equality follows because $\omega'_2-a_0\in L^\perp$ and
\[
\omega''_1-\omega'_1=[(\omega''_1-\omega_1)-(\omega'_1-\omega_1)]\in L
\enspace.
\]

For the converse, assume that the inclusion of the projection of $\M$
in an affine acute cone
holds. Let $\gamma$ be the $\mu$-perpendicular to $X$ at $\nu$ for some $\mu,\nu\in\M$
and let $\nu'\succeq\nu$. We need to show that $\nu'-\nu\in\witness(X)$. Note
that $0\in\witness(X)$, so we only analyze $\nu'\ne\nu$. Let $\mu'$ be the projection
of $\mu$ on $\im\gamma$ and $a_0$ be the intersection
 of $\im\gamma$ with $\aff(X)$. Note that $\nu'-\nu=\eta(\mu'-a_0)$ for
 a suitable $\eta>0$, so it suffices to show that $\mu'-a_0\in\witness(X)$.
 Pick $\omega\in\Omega$ and $x\in X$ and let
 $\omega'$ be the projection of $\omega$ into $A'\coloneqq a_0+X^\perp$. Since
 the projection of $\M$ into $A'$ is contained in an affine acute cone with the
 vertex $a_0$, we obtain
\[
 (\mu'-a_0)\cdot(\omega'-a_0)\ge 0
\]
Since $\omega'-\omega\in L$ and $x-a_0\in L$, whereas $\mu'-a_0\in L^\perp$, we obtain
\[
 (\mu'-a_0)\cdot(\omega-x)\ge 0
\]
showing that the acute angles hold.
\end{proof}

\begin{proof}[Proof of \Cor{quad:cube}]
We will show that the assumption of \Thm{quad} holds.
Since the assumption is invariant under rigid
transformations, we can just consider the case $a_0=0\in X$.
In this case, the projection of $\Omega$ is a lower dimensional hypercube
(corresponding to a subset of $\Omega$). Note that $\Omega$
lies in the non-negative orthant and the non-negative orthant is
an acute cone with the vertex $a_0=0$, so the assumption of
\Thm{quad} holds and hence the acute angles hold
for the hypercube.
\end{proof}

\begin{proof}[Proof of \Cor{quad:simplex}]
Again, by symmetry, it suffices to consider faces of
the form $X=\set{e_i:\:i\in[k]}$ for $k\in\set{1,\dotsc,n}$.
Let $a_0=e_1$. The affine space $A'$ is described by
\begin{align*}
  A' &{} = \set{a\in\Re^n:\: (a-e_1)\inprod(e_i-e_1)=0 \text{ for } 2\le i\le k} \\ 
  &{} = \set{a\in\Re^n:\: a[i] = a[1]-1 \text{ for } 2\le i\le k}
\end{align*}
where we use notation $a[i]$ to denote the $i$-th coordinate.
The projection of $e_j$ for $j>k$ into $A'$ is of the form
\[
  e'_j = e_j + \sum_{i=2}^k \alpha_j[i]\Parens{e_i-e_1}
\enspace,
\]
for some $\alpha_j[i]\in\Re$, i.e.,
\[
  e'_j[i] =
\begin{cases}
  -\sum_{i'=2}^k \alpha_j[i'] &\text{if $i=1$}
\\
  \alpha_j[i]                 &\text{if $i\in\set{2,\dotsc,k}$}
\\
  1                         &\text{if $i=j$}
\\
  0                         &\text{otherwise.}
\end{cases}
\]
The only solution of the above form that lies in $A'$ is obtained
by setting $\alpha_j[i]=-1/k$, yielding
\[
  e'_j[i] =
\begin{cases}
  1-1/k &\text{if $i=1$}
\\
  -1/k                 &\text{if $i\in\set{2,\dotsc,k}$}
\\
  1                         &\text{if $i=j$}
\\
  0                         &\text{otherwise.}
\end{cases}
\]
Therefore, for any pair of projections $e'_j$, $e'_{j'}$ for $j,j'>k$, and $j\ne j'$, we have
\[
  (e'_j-e_1)\inprod(e'_{j'}-e_1) = 1/k > 0
\enspace,
\]
so the projection of $\Omega$ is in an acute cone, i.e., acute angles hold.
\end{proof}

\begin{proof}[Proof of \Thm{log:partition}]
We begin by characterizing an $a_1$-perpendicular to a face $X\ne\Omega$ at $q$. Let
$\nu\coloneqq p(q)\not\in\aff(X)$, so the ambient space of the
perpendicular is $\aff(X\cup\set{\nu})$.
Thus, for a given $X$ and $q$, we will have the same $\im\gamma$ and
the same order on $\nu\in\im\gamma$ for any valid $a_1\in\tM$ which
allows to define the $a_1$-perpendicular to $X$ at $q$.
Recall that $P_q$ is the probability measure over $\Omega$ defined by
\[
  P_q(\omega)=e^{q\inprod\omega-C(q)}
\]
and note that $P_q(\omega)>0$ for all $\omega\in\Omega$.
Recall that
\[
  \nu = \sum_{\omega\in\Omega} P_q(\omega)\omega
\enspace.
\]
Let $X^c=\Omega\wo X$. Separate $\nu$ into components corresponding to $x\in X$ and $\omega\in X^c$:
\begin{align*}
  \nu_X &= \frac{1}{P_q(X)}\sum_{x\in X} P_q(x)x
\enspace,
\\
  \nu_{X^c} &= \frac{1}{P_q(X^c)}\sum_{\omega\in X^c} P_q(\omega)\omega
\enspace,
\end{align*}
i.e.,
\[
  \nu = P_q(X)\nu_X + P_q(X^c)\nu_{X^c}
\enspace.
\]
Since $\nu_X\in\aff(X)$, we have
\begin{equation}
\label{eq:affine:rewrite}
   \aff(X\cup\set{\nu})=\aff(X\cup\set{\nu_{X^c}})
\enspace.
\end{equation}
We will show that $\im\gamma$ consists exactly of the points
$(1-\talpha)\nu_X + \talpha\nu_{X^c}$ for $\talpha\in(0,1)$.

Consider $\nu'\in\im\gamma$ and $q'\in\pinv(\nu')\cap(q+X^\perp)$.
For any $x',x\in X$, we have
\[
  (q'-q)\inprod(x'-x) = 0
\enspace,
\]
so
\[
  q'\inprod(x'-x) = q\inprod(x'-x)
\enspace,
\]
and hence
\[
\frac{P_{q'}(x')}{P_{q'}(x)}
  = e^{q'\inprod(x'-x)}
  = e^{q\inprod(x'-x)}
  = \frac{P_{q}(x')}{P_{q}(x)}
\enspace.
\]
Since this holds for arbitrary $x,x'\in X$, we obtain
\begin{equation}
\label{eq:ratio}
\frac{P_{q'}(x)}{P_{q}(x)}
=
\frac{P_{q'}(X)}{P_{q}(X)}
\text{ for all }
x\in X
\enspace.
\end{equation}
Since $\nu'$ is in the ambient space of the perpendicular, which is
$\aff(X\cup\set{\nu})$, by \Eq{affine:rewrite}, we obtain
\[
  \nu'\in\aff(X\cup\set{\nu_{X^c}})
\enspace,
\]
so $\nu'$ can be written in the form
\begin{equation}
\label{eq:nu':decompose}
  \nu'=\sum_{x\in X}\alpha(x) x +\Parens{1-\sum_{x\in X} \alpha(x)}\nu_{X^c}
\end{equation}
for some $\alpha(x)\in\Re$ for $x\in X$. Also,
\[
  \nu'=\sum_{x\in X}P_{q'}(x) x +\sum_{\omega\in X^c}P_{q'}(\omega)\omega
\enspace.
\]
By the affine independence of $\Omega$, we therefore must have $\alpha(x)=P_{q'}(x)$.
Plugging this into \Eq{nu':decompose}, we obtain
\begin{align*}
  \nu' & =\Parens{\sum_{x\in X}P_{q'}(x) x} +P_{q'}(X^c)\nu_{X^c} \\
       & =\Parens{\frac{P_{q'}(X)}{P_q(X)}\sum_{x\in X}P_{q}(x) x} +P_{q'}(X^c)\nu_{X^c} \\
       & =P_{q'}(X)\nu_X + P_{q'}(X^c)\nu_{X^c}
\end{align*}
where the second equality follows by \Eq{ratio}. Thus, indeed
$\nu'=(1-\talpha)\nu_X + \talpha\nu_{X^c}$ for $\talpha=P_{q'}(X^c)$.

For any $I\subseteq[0,1]$, let
\[
  J_I\coloneqq \set{(1-\talpha)\nu_X + \talpha\nu_{X^c}:\:\talpha\in I}
\enspace.
\]
So far we have shown that $\im\gamma\subseteq J_{(0,1)}$; we next
argue that $J_{(0,1)}\subseteq\im\gamma$. We do this by
exploiting the continuity properties of $\gamma$.

For any $\talpha$, note that
\[
  \BigParens{(1-\talpha)\nu_X + \talpha\nu_{X^c}}
  \in
  \BigParens{\aff(X)+\talpha(\nu_{X^c}-\nu_X)}
  =A_\lambda
\]
for a suitable $\lambda$. Let $\talpha_0$ and $\lambda_0$
be the values associated with $\nu=p(q)$; note that both are
greater than zero. Then, for any $\lambda$,
the $\talpha$ such that
\[
  (1-\talpha)\nu_X + \talpha\nu_{X^c}\in A_\lambda
\]
is obtained as $\talpha(\lambda)\coloneqq\talpha_0\lambda/\lambda_0$.
The domain $\Lambda$ of
$\gamma$ is an open subset of the real line and therefore
it is a union of disjoint open intervals. Let $(\ulambda,\olambda)$
be the interval containing $\lambda_0$. For the sake of contradiction
assume that $\talpha(\ulambda)>0$.
Then $J_{[\talpha(\ulambda),\talpha_0]}$ is a compact subset of $\ri\M$,
and therefore by \Thm{perp:cont}, its intersection with $\im\gamma$ should
be compact as well. However, by the selection of $\ulambda$, the intersection
equals $J_{(\talpha(\ulambda),\talpha_0]}$, which is not compact, yielding
a contradiction. Similarly, we obtain a contradiction if $\talpha(\olambda)<1$.
Therefore, $\talpha(\ulambda)\le 0$ and $\talpha(\olambda)\ge 1$, showing that
$J_{(0,1)}\subseteq\im\gamma$, and hence in fact $J_{(0,1)}=\im\gamma$.

Now we are ready to prove the theorem. Let $\nu'\in\im\gamma$ such that $\nu'\succeq\nu$, and let $q\in\pinv(\nu)$, $q'\in\pinv(\nu')$. Use the
notation $P\coloneqq P_q$
and $P'\coloneqq P_{q'}$, and write
\begin{align*}
  (q'-q)\inprod(\omega-x)
&{}=\ln\Parens{\frac{e^{q'\inprod\omega}}{e^{q'\inprod x}}
            \cdot
            \frac{e^{q\inprod x}}{e^{q\inprod\omega}}} \\
&{}=\ln\Parens{\frac{P'(\omega)}{P'(x)}
            \cdot
            \frac{P(x)}{P(\omega)}} \\
&{}=\ln\Parens{\frac{P'(\omega)}{P(\omega)}
            \cdot
            \frac{P(x)}{P'(x)}}
\enspace.
\end{align*}
For $\omega\in X^c$, our characterization
of the perpendicular implies that $P(x)\ge P'(x)$ and $P(\omega)\le P(\omega')$
since $\nu'$ has a larger (or equal) coefficient $\talpha$ than $\nu$,
because it is further (or equally) away from $\nu_X$. Thus, the above expression
is non-negative, yielding the acute angles property.
\end{proof}

\begin{proof}[Proof of \Thm{1d}]
Let $\mu\in\tM$, $q\in\pinv(\tM)$ and let $\im\gamma$ be the $\mu$-perpendicular to $X$ at $q$.
Note that the perpendicular is well defined only if $X$ is a singleton, say $X=\set{x}$,
and $\mu\ne x$. Let $\set{x'}$ be the other singleton face of $\M$.
Thus, $\im\gamma=\aff(\set{x,x'})\cap(\ri\dom C^*)$ with the direction from
$x$ towards $x'$.
Note that $\witness(X)=\set{u:\:u\cdot(x'-x)\ge 0}$.
Let $\nu=p(q)$ and let $\nu'\succeq\nu$, i.e., $\nu'-\nu=\lambda(x'-x)$ for some $\lambda>0$.
Pick $q'\in\pinv(\nu')$, which exists, because $\nu'\in\ri\dom C^*$.
By convexity, we have
\begin{align*}
 0 & \le(q'-q)\cdot(\nabla C(q')-\nabla C(q))=(q'-q)\cdot(\nu'-\nu) \\
 & =\lambda(q'-q)\cdot(x'-x)
\enspace,
\end{align*}
i.e., $q'-q\in\witness(X)$.
\end{proof}

\begin{proof}[Proof of \Thm{direct:sum}]
Let $\M_1=\conv(\Omega_1)$ and $\M_2=(\conv\Omega_2)$.
We first argue that $\M=\conv(\Omega_1\times\Omega_2)=(\conv\Omega_1)\times(\conv\Omega_2)=\M_1\times\M_2$. For $i\in\set{1,2}$, let $\nu_i\in\M_i$, i.e.,
for some probability measure $P_i$ on $\Omega_i$, we have $\E_{\omega_i\sim P_i}[\omega_i]=\nu_i$. Defining the probability measure $P$ on $\Omega$ by
$P(\omega_1,\omega_2)=P_1(\omega_1)P_2(\omega_2)$, we obtain $\E_{(\omega_1,\omega_2)\sim P}[(\omega_1,\omega_2)]=(\nu_1,\nu_2)$,
i.e., $(\nu_1,\nu_2)\in\M$. Conversely, let $(\nu_1,\nu_2)\in\M$, i.e., for some measure $P$ on $\Omega$, $\E_{(\omega_1,\omega_2)\sim P}[(\omega_1,\omega_2)]=(\nu_1,\nu_2)$.
But this also means that $\E_{(\omega_1,\omega_2)\sim P}[\omega_i]=\nu_i$ for $i\in\set{1,2}$, so $\nu_i\in\M_i$ for $i\in\set{1,2}$. Thus,
$\M=\M_1\times\M_2$.

We also have $\dom C=(\dom C_1)\times(\dom C_2)$ and $\ri\dom C=(\ri\dom C_1)\times(\ri\dom C_2)$, which implies that $\tM\subseteq\tM_1\times\tM_2$.

We next show that $X$ is a face of $\M$ if and only if $X=X_1\times X_2$ where $X_1$ is a face of $\M_1$ and $X_2$ is a face of $\M_2$.
A face $X$ of $\M$ is characterized by a vector $u$ and a scalar $c$ such that
\begin{align*}
  u\cdot x &= c
  \text{ for }x\in X
\\
  u\cdot\omega &>c
  \text{ for }\omega\in\Omega\wo X
\enspace.
\end{align*}
If $X_1$ is a face of $\M_1$ characterized by $u_1$ and $c_1$, and $X_2$ is a face of $\M_2$ characterized by $u_2$ and $c_2$, then we immediately obtain
that $X_1\times X_2$ is a face of $\M$ characterized by $u=(u_1,u_2)$ and $c=c_1+c_2$. Conversely, assume $X$ is a face of $\M$ characterized by $u=(u_1,u_2)$ and $c$.
We first show that $X$ is a Cartesian product. We proceed by contradiction and assume that $(x_1,x_2)\in X$ and $(x'_1,x'_2)\in X$, but $(x_1,x'_2)\not\in X$. By assumption:
\begin{align*}
  u_1\cdot x_1 + u_2\cdot x_2 &= c
\\
  u_1\cdot x'_1 + u_2\cdot x'_2 &= c
\\
  -u_1\cdot x_1 - u_2\cdot x'_2 &< -c
\intertext{%
Summing the above three yields:}
  u_1\cdot x'_1 + u_2\cdot x_2 &< c
\end{align*}
which is a contradiction with $X$ being a face. By symmetry, we also obtain $(x'_1,x_2)\in X$, hence $X=X_1\times X_2$ for some $X_1\subseteq\Omega_1$ and $X_2\subseteq\Omega_2$.
Let $(x_1,x_2)$ be some element of $X$. Note that $X_1$ must be a face of $\M_1$ characterized by $u_1$ and $c_1\coloneqq u_1\cdot x_1$, because
otherwise, there would exist $\omega_1\in\Omega_1\wo X_1$ such that $u_1\cdot\omega_1\le c_1=u_1\cdot x_1$, i.e.,
\[
  u_1\cdot\omega_1+u_2\cdot x_2 \le u_1\cdot x_1 + u_2\cdot x_2
\]
which would contradict $X$ being a face, since $(\omega_1,x_2)\not\in X$. By symmetry, we also obtain that $X_2$ is a face of $\M_2$.

Let $\gamma$
be the $a$-perpendicular to $X$ at $q$, where $a=(a_1,a_2)$, $X=X_1\times X_2$ and $q=(q_1,q_2)$. Note that $a\in\tM$, $p(q)\in\tM$ implies
that $a_i\in\tM_i$, $p_i(q_i)\in\tM_i$ because $\tM\subseteq\tM_1\times\tM_2$.
The optimization of $D(q,\nu)$ over $A_\lambda$ in the definition of the perpendicular
decomposes into independent convex problems in $\nu_1$ and $\nu_2$,
because $\aff(X)=\aff(X_1)\times\aff(X_2)$ and $D(q,\nu)=D_1(q_1,\nu_1)+D_2(q_2,\nu_2)$, where $D_1$ and $D_2$ are the
divergences derived from $C_1$ and $C_2$. Thus, for any point $\nu'\in\im\gamma$ such that $\nu'\succeq p(q)$,
we obtain that for $i\in\set{1,2}$ the components $\nu'_i$ lie
on the $a_i$-perpendicular
to $X_i$ at $q_i$ and $\nu'_i\succeq p_i(q_i)$.
Let $x=(x_1,x_2)\in X$ and $\omega=(\omega_1,\omega_2)\in\Omega$.
Then by acute angles assumption, we can choose $q'_i\in\pinv_i(\nu'_i)\cap[q_i+\witness(X_i)]$. Let $q'=(q'_1,q'_2)$.
Note that $q'\in\pinv(\nu')$. We will argue that also $(q'-q)\in\witness(X)$:
\[
   (q'-q)\inprod(\omega-x)=\sum_{i\in\set{1,2}} (q'_i-q_i)\inprod(\omega_i-x_i)\ge 0
\enspace,
\]
where the last inequality follows, because $(q'_i-q_i)\in\witness(X_i)$. Thus,
the acute angles assumption holds for $C$ and $\Omega$.
\end{proof}

\section{PROOF OF THEOREM 5.2}
\label{app:proof}

In this section we give the complete proof of \Thm{main},
with the revised definition of budget additivity, which requires that $q_0$
be arbitrage-free
(see \App{arb:free}).
The proof proceeds in several steps. Let $\nu_0=p(q_0)$.
Assuming acute angles, we begin by constructing an oriented curve $L$
joining $\nu_0$ with $\mu$, by sequentially choosing portions of perpendiculars
for monotonically decreasing active sets. We then show
that budget additivity holds for any solutions with prices in $L$,
and finally show that the curve $L$
is the locus of the optimal prices of solutions $\hQ(q_0)$, as
well as optimal prices of solutions $\hQ(q)$ for any $q\in\hQ(q_0)$.

\paragraph{Part 1: Construction of the solution path $L$}
In this part, we construct:
\begin{itemize}
\item a sequence of prices $\nu_0,\nu_1,\dotsc,\nu_k$ with $\nu_0=p(q_0)$ and $\nu_k=\mu$
\item a sequence of oriented curves $\ell_0,\dotsc,\ell_{k-1}$ where each $\ell_i$ goes from $\nu_i$ to $\nu_{i+1}$
\item a monotone sequence of sets $\Omega\supseteq X_0\supset X_1\supset\dotsb\supset X_k=\emptyset$, such that
      the following \emph{minimality} property holds: $X_i$ is the minimal face for
      all $\nu\in(\im\ell_i)\wo\set{\nu_{i+1}}$ for $i\le k-1$,
      and $X_k$ is the minimal face for $\nu_k$.
\item a sequence of states $q_1,\dotsc,q_{k-1}$ such that $q_i\in\pinv(\nu_i)\cap[q_{i-1}+\witness(X_{i-1})]$
\end{itemize}
The curves $\ell_i$ will be referred to as \emph{segments}.
The curve obtained by concatenating the segments $\ell_0$ through $\ell_{k-1}$ will
be called the \emph{solution path} and denoted $L$. In the special case
that $\nu_0=\mu$, we have $k=0$, $X_0=\emptyset$ and $L$ is a degenerate curve
with $\im L=\set{\mu}$.

If $\nu_0\ne\mu$, we construct the sequence of segments iteratively.
Let $X_0\ne\emptyset$ be the minimal face such that $\nu_0\in \conv(X_0\cup\set{\mu})$.
By the minimality, $\mu\not\in\aff(X_0)$.
Let $\gamma$ be the $\mu$-perpendicular to $\aff(X_0)$ at $q_0$. The curve $\gamma$ passes through $\nu_0$ and eventually reaches the boundary of $\conv(X_0\cup\set{\mu})$ at some $\nu_1$ by continuity of $\gamma$ (see \Thm{perp:cont}). Let segment $\ell_0$ be the portion of $\gamma$ going from $\nu_0$ to $\nu_1$.

This construction gives us the first segment $\ell_0$. There are two possibilities:
\begin{enumerate}
\item $\nu_1=\mu$; in this case we are done;
\item $\nu_1$ lies on a lower-dimensional face of $\conv(X_0\cup\set{\mu})$; in this case, we pick
    some $q_1\in\pinv(\nu_1)\cap [q_0+\witness(X_0)]$, which can be done by the acute angles assumption, and use the above construction again, starting with $q_1$, and obtaining a new set $X_1 \subset X_0$ and a new segment $\ell_1$; and iterate.
\end{enumerate}
The above process eventually ends, because with each iteration, the size of the active set decreases.
This construction yields monotonicity of $X_i$ and the minimality property.

The above construction yields a specific sequence of $q_i\in\pinv(\nu_i)\cap[q_{i-1}+\witness(X_{i-1})]$. We will now show that actually $q_i\in\pinv(\nu_i)\cap(q_0+X_{i-1}^\perp)$ and that the construction of $L$ is
independent of the specific $q_1,q_2,\dotsc,q_{k-1}$ chosen. To begin, note that from our construction, we can write $q_i=q_0+u_0+u_1+\dotsc+u_{i-1}$ for some $u_j\in\witness(X_j)\subseteq X_j^\perp$. Since $X_i\subseteq X_j$ for $j=1,\dotsc,i-1$, we actually have $u_j\in X_{i-1}^\perp$, so $q_i\in (q_0+X_{i-1}^\perp)$.
Note
that $X_{i-1}^\perp\subseteq X_i^\perp$, and according to \Prop{perp:equiv},
any $q_i\in\pinv(\nu_i)\cap(q_0+X_i^\perp)$ yields the same $\mu$-perpendicular to
$\aff(X_i)$ and hence the same segment $\ell_i$. By induction it therefore follows that the segments
$\ell_0,\dotsc,\ell_{k-1}$ are uniquely determined by our construction regardless of the specific $q_1,\dotsc,q_{k-1}$.

\paragraph{Part 1': The solution path starting at a midpoint} Let $\nu\in\im L$,
and $q\in\pinv(\nu)\cap(q_0+X_\nu^\perp)$, and let $L'$ be the solution path if the initial
state were $q$ rather than $q_0$. By a similar reasoning as in the previous paragraph, we see that
$L'$ is a restriction of $L$ starting with $\nu$.

\paragraph{Part 2: Budget additivity for points on $L$}
Let $\nu,\nu'\in\im L$ such that $\nu\preceq\nu'$. Let $q\in\hQ(\nu;q_0)$
and $q'\in\hQ(\nu';q)$ such that $q\in\hQ(B;q_0)$ and $q'\in\hQ(B';q)$. Note
that by \Prop{init:hQ}, $q$ is arbitrage-free.
In this part we show that $q'\in\hQ(B+B';q_0)$.

First, consider the case that $\nu'=\mu$. To see that $q'\in\hQ(B+B';q_0)$, first note that the constraints of
\CP\ hold, because
$
    U(q',\omega;q_0)=U(q',\omega;q)+U(q,\omega;q_0)\ge -B'-B
$
for all $\omega$
by path independence of the utility function. As noted in the introduction, in the absence
of constraints, the utility $U(\bq,\mu;q_0)$ is maximized at any $\bq$ with $p(\bq)=\mu$. Thus,
$q'$ is a global maximizer of the utility and satisfies the constraints, so $q'\in\hQ(B+B';q_0)$.
If $\nu=\mu$, we must also have $\nu'=\mu$ and the statement holds by previous reasoning.

In the remainder, we only analyze the case $\nu\preceq\nu'\prec\mu$.
This means that $\nu\in(\im\ell_i)\wo\set{\nu_{i+1}}$ and $\nu'\in(\im\ell_j)\wo\set{\nu_{j+1}}$ for $i\le j$. By \Thm{KKT:final}, we therefore
must have $q\in[q_0+\witness(X_i)]$ and $q'\in[q+\witness(X_j)]$. By anti-monotonicity
of witness cones, $\witness(X_j)\supseteq\witness(X_i)$ and hence, $q'\in[q_0+\witness(X_j)]$, yielding
\[
   q'\in\hQ(\nu';q_0)
\enspace.
\]

We now argue that the budgets add up.
Let $x\in X_j\subseteq X_i$. By \Lem{KKT:initial}, we obtain that
\begin{align*}
&q\in\hQ(B;q_0)\text{ for }
   B=-U(q,x;q_0)
\enspace,
\\
&q'\in\hQ(B';q)\text{ for }
   B'=-U(q',x;q)
\enspace,
\\
&q'\in\hQ(\bB;q_0)\text{ for }
   \bB=-U(q',x;q_0)
\enspace.
\end{align*}
However, by path independence of the utility function
\[
   \bB=-U(q',x;q_0)
      =-U(q',x;q)-U(q,x;q_0)
      =B'+B.
\]

\paragraph{Part 3: $L$ as the locus of all solutions}
In this part we show that
\[
    \hQ(q_0) = \bigcup_{\nu\in\im L} \hQ(\nu;q_0)
\enspace.
\]

We will
begin by defining sets of budgets for which the optimal price is $\nu$ and show
that their union across all $\nu\in\im L$ is a closed interval. Since both $\nu_0,\mu\in\im L$,
this will mean that we have included price vectors across all possible budgets. The statement
of Part 3 will then follow by \Thm{budget:constraints}.

Let $x\in X_{k-1}=\bigcap_{i=0}^{k-1} X_i$
and let $B(q)\coloneqq -U(q,x;q_0)$.
Further, for $\nu\in\im\ell_i$, let
\[
   B_i(\nu)\coloneqq\set{B(q):\:q\in\pinv(\nu)\cap[q_0+\witness(X_i)]}
\enspace.
\]
From \Cor{budget:nu}, we know that for $\nu\in(\im\ell_i)\wo\set{\nu_{i+1}}$, $B_i(\nu)$ is exactly
the set of budgets for which $\nu$ is the optimal price vector. The set
$B_i(\nu_{i+1})$ is potentially only a subset of such budgets (corresponding to $X_i$
being the tight set, rather than the actual minimal set $X_{i+1}$). First we show
that $B_i(\nu)$ is non-empty for $\nu\in\im\ell_i$. Let $\nu\in\im\ell_i$. By acute angles assumption,
there exists $q\in\pinv(\nu)\cap[q_i+\witness(X_i)]$. Furthermore, $q_j\in[q_{j-1}+\witness(X_{j-1})]$
for $j=1,\dotsc,i$, so we can write $q=q_0+u_0+\dotsb+u_i$ where $u_j\in\witness(X_j)$.
By anti-monotonicity of witness cones, $\witness(X_j)\subseteq\witness(X_i)$ for
$j=1,\dotsc,i$, so we actually have $u_j\in\witness(X_i)$ and thus $q\in[q_0+\witness(X_i)]$,
proving that the set $B_i(\nu)$ is non-empty.

We will next show that
\[
  B_i(\ell_i)\coloneqq\bigcup_{\nu\in\im\ell_i} B_i(\nu)
\]
is an interval.

Consider a fixed $\nu\in\im\ell_i$. For $q\in\pinv(\nu)$,
we have $C(q)=q\inprod\nu-C^*(\nu)$, i.e., $B(q)$ is linear in $q$ over $q\in\pinv(\nu)$.
Since the set $\pinv(\nu)$ is closed and convex,
so is $\pinv(\nu)\cap[q_0+\witness(X_i)]$. The latter set is also non-empty,
hence the set $B_i(\nu)$
must be a non-empty closed interval. Let $B_i^{\min}(\nu)$ and $B_i^{\max}(\nu)$ be the lower and upper
endpoints of $B_i(\nu)$.
Since the budget additivity holds along $L$ (by Part 2),
we must have that $B_i^{\max}$ is non-decreasing on $\ell_i$.
Next note that for $\nu\ne\nu'$ the sets $B_i(\nu)$ and $B_i(\nu')$ must be
disjoint. This implies that $B_i^{\max}$ is actually increasing and so
is $B_i^{\min}$.

We next show that $B_i^{\max}$ is right-continuous on $\ell_i$.
Let $M\coloneqq\set{(\nu,q):\:\nu\in\im\ell_i,\,q\in\pinv(\nu)}$.
By
\Thm{perp:cont}, the set $M$ can be written $\setC+(0\oplus\setL^\perp)$ where $\setC$ is
compact. Let $M'\coloneqq\set{(\nu,q):\:\nu\in\im\ell_i,\,q\in\pinv(\nu)\cap[q_0+\witness(X_i)]}$. Since the
set $[q_0+\witness(X_i)]$ is closed, the set $M'$ can be written as $\setC'+(0\oplus\setL^\perp)$
where $\setC'$ is compact.
To show that $B_i^{\max}$ is right-continuous,
pick $\nu\in\im\ell_i$ and let $\set{\nu'_t}_{t=1}^\infty$ be a sequence of $\nu'_t\in\ell_i$, $\nu'_t\succeq\nu$
such that $\lim_{t\to\infty}\nu'_t=\nu$. Pick $q'_t$ such that $(\nu'_t,q'_t)\in\setC'$ and $B(q'_t)=B_i^{\max}(\nu'_t)$.
By compactness, the sequence $\set{(\nu'_t,q'_t)}_{t=1}^\infty$ must have a cluster point $(\nu,q)\in\setC'$, and by continuity of $B$, we must
have $\lim_{t\to\infty} B_i^{\max}(\nu'_t)=\lim_{t\to\infty} B(q'_t)=B(q)\le B_i^{\max}(\nu)$. The
right continuity of $B_i^{\max}$ now follows by monotonicity. By symmetric reasoning, $B_i^{\min}$ must be left-continuous.

Now for the sake of contradiction, assume that $B_i(\ell_i)$ is not an interval,
i.e., assume that there is a value $B^*\not\in B_i(\ell_i)$ such that some higher and lower values are in $B_i(\ell_i)$. By monotonicity, there must exist $\nu^*$ such that $B_i^{\max}(\nu)<B^*$ for $\nu\prec\nu^*$ and $B_i^{\min}(\nu)>B^*$ for $\nu\succ\nu^*$. However, this means that
\[
    B_i^{\min}(\nu^*) =   \lim_{\nu\uparrow\nu^*}B_i^{\min}(\nu^*)
                      \le \lim_{\nu\uparrow\nu^*}B_i^{\max}(\nu^*)
                      \le B^*
\]
and
\[
    B_i^{\max}(\nu^*) =   \lim_{\nu\downarrow\nu^*}B_i^{\max}(\nu^*)
                      \ge \lim_{\nu\downarrow\nu^*}B_i^{\min}(\nu^*)
                      \ge B^*
\]
which means that $B^*\in B_i(\nu^*)$ yielding a contradiction.

Finally, note that $B_i(\nu_{i+1})\subseteq B_{i+1}(\nu_{i+1})$ for $i\le k-1$, hence
$\bigcup_{i=1}^{k-1} B_i(\ell_i)$ is an interval as well.

\paragraph{Part 3': $L$ as the locus of solutions starting at a midpoint}
Let $\nu\in\im L$ and $q\in\hQ(\nu;q_0)$. Since $\hQ(\nu;q_0)\subseteq\pinv(\nu)\cap(q_0+X_\nu^\perp)$,
Part 1' yields that the solution path $L'$ for $q$ coincides with the portion of
$L$ starting at $\nu$. By \Prop{init:hQ}, $q$ is
arbitrage-free, so the reasoning of the previous part can be applied
to $L'$, yielding the following statement:
\[
    \hQ(q) = \bigcup_{\substack{\nu'\in\im L:\\\nu'\succeq\nu}} \hQ(\nu';q)
\enspace.
\]

\paragraph{Part 4: Proof of the theorem}
Let $B,B'\ge0$ and $q\in\hQ(B; q_0)$ and $q'\in\hQ(B';q)$. From Parts 3 and 3', we know that $q\in\hQ(\nu;q_0)$ and $q'\in\hQ(\nu';q)$ for some $\nu,\nu'\in\im L$ such that $\nu\preceq\nu'$.
By Part 2, we therefore obtain that $q'\in\hQ(B+B';q_0)$, proving the theorem.

\end{document}